\theoremstyle{plain}
\newtheorem{theorem}{Theorem}[section]
\newtheorem{conjecture}[theorem]{Conjecture}
\newtheorem{corollary}[theorem]{Corollary}
\newtheorem{claim}[theorem]{Claim}
\newtheorem{lemma}[theorem]{Lemma}
\newtheorem{remark}[theorem]{Remark}
\newtheorem{definition}[theorem]{Definition}
\newtheorem*{remark2}{Important remark}
\title{Degenerate crossing number and signed reversal distance}
\title{Degenerate crossing number and signed reversal distance\thanks{This work was partially supported by the ANR project SoS (ANR-17-CE40-0033). An extended abstract appeared in the Proceedings of the 31st International Symposium on Graph Drawing and Network Visualization (GD 2023)}}
\author{Niloufar Fuladi\thanks{Université de Lorraine, CNRS, INRIA, LORIA, F-54000 Nancy, France} \and Alfredo Hubard\thanks{Univ Gustave Eiffel, CNRS, LIGM, F-77454 Marne-la-Vallée, France}\footnotemark[2] \and Arnaud de Mesmay\footnotemark[3]}
\begin{document}
\maketitle
\begin{abstract}
Given a graph drawn in the plane, the degenerate crossing number of the drawing is the number of points in the plane which are contained in the relative interior of at least two edges, where each edge is required to be drawn as a simple arc. The degenerate crossing number of a graph is the minimum degenerate crossing number among all its drawings.

Given a drawing, cutting a neighborhood of the surface around each crossing and pasting a M\"obius band gives a non-orientable surface, on which the drawing of the graph can be extended to an embedding. From this observation, Mohar derived that the degenerate crossing number of a graph is at most its non-orientable genus, and conjectured that these quantities are equal for every graph. He also made a stronger conjecture for loopless pseudo-triangulations with a fixed embedding scheme. 

In this paper, we prove a structure theorem that allows to understand when the degenerate crossing number and non-orientable genus coincide in a large class of loopless bipartite embedding schemes. In particular, we provide a counterexample to Mohar's stronger conjecture, but show that in the vast majority of the 2-vertex cases, as well as for many bipartite graphs, Mohar's conjecture is satisfied.

The reversal distance between two signed permutations is the minimum number of reversals that transform one permutation to the other one. 
If we represent the trajectory of each element of a signed permutation under successive reversals by a simple arc, we obtain a drawing of a 2-vertex embedding scheme with degenerate crossings. 
Our main result is proved by leveraging this connection and a classical result in genome rearrangement (the Hannenhalli--Pevzner algorithm) and can also be understood as an extension of this algorithm  when the reversals do not necessarily happen in a monotone order.

\end{abstract}
%\begin{keyword}
%    Degenerate crossing number \sep Non-orientable surface \sep Signed reversal distance
%\end{keyword}

\section{Introduction}

A \emph{cross-cap drawing}\footnote{We are defining in this introduction all the terms required to understand our main results but refer to Section~\ref{preliminaries} for more background.} of a graph $G$ is a drawing of $G$ on the sphere with $g$ distinct points, called \emph{cross-caps}, such that the drawing is an embedding except at the cross-caps, where multiple edges are allowed to cross transversely, as pictured in Figure~\ref{mohar}. 
\begin{figure}
    \centering
    \includegraphics[width=.55\textwidth]{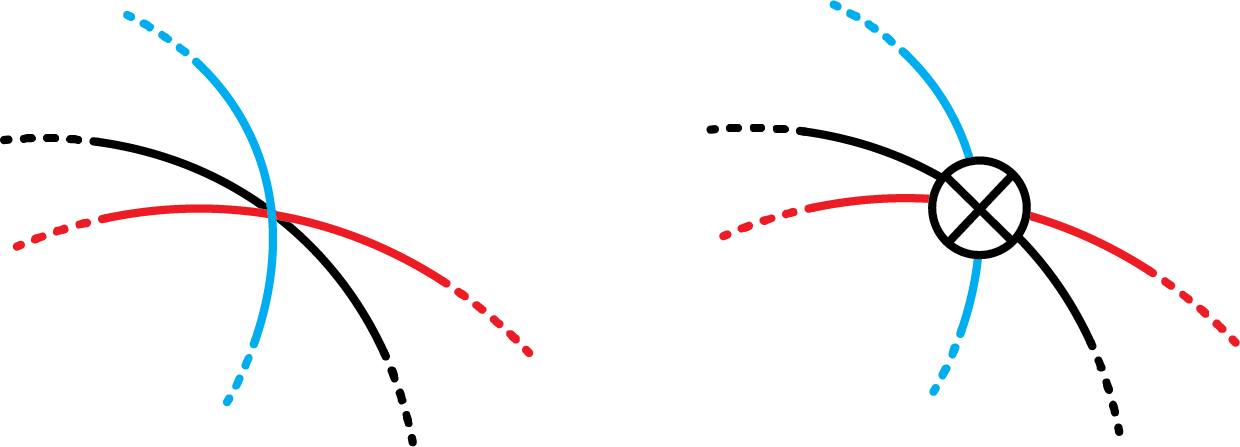}
\caption{A degenerate crossing and a cross-cap placed at this crossing.}
    \label{mohar}
\end{figure}
In~\cite{pach2009degenerate}, Pach and Toth introduced the \emph{degenerate crossing number}, denoted by $\mathrm{cr}_{\mathrm{deg}}$ which in this language is the minimum number of cross-caps for a cross-cap drawing of $G$ to exist, where edges are required to be drawn as simple arcs, that is, arcs without self-intersections. In~\cite{mohar2009genus}, Mohar removed the constraint that edges be simple arcs, leading to the \emph{genus crossing number}, which he proved to be equal to the \emph{non-orientable genus} of the graph, that is, the smallest possible genus of a non-orientable surface on which $G$ embeds. He then made an enticing conjecture claiming that these two crossing numbers are equal. Mohar's conjecture can be restated as follows: % 

\begin{conjecture}~\cite[Conjecture~3.1, Proposition~3.3]{mohar2009genus}\label{second} Every simple graph $G$ of non-orientable genus $k$ has a cross-cap drawing with $k$ cross-caps where every edge enters each cross-cap at most once. 
\end{conjecture}

Mohar went further and conjectured that for any loopless graph embedding, there exists a cross-cap drawing with similar properties that is compatible with this embedding, in a sense that we now describe. The terminology that we introduce is equivalent to the PD1S in~\cite[Section~3]{mohar2009genus}. An embedding $\phi$ of a multi-graph on a non-orientable surface is \emph{cellular} if all the faces of the embedding are homeomorphic to disks. We denote by $g(\phi)$ the (non-orientable) genus of the underlying surface. A \emph{pseudo-triangulation} of a surface $S$ is a cellularly embedded multi-graph $\phi \colon G \to S$ in which each face has degree three. Denote by $S^2\setminus g\tiny{\bigotimes}$, the sphere minus $g$ tiny disks, and by $(S^2\setminus g\tiny{\bigotimes})/\sim$ the space obtained by quotienting the boundary of each disk with the antipodal map. Topologically, this amounts to gluing a M\"obius band on each missing disk, thus yielding the non-orientable surface of genus $g$, denoted by $N_g$. Given an embedded multi-graph $\phi \colon G \to N_g$, we say that $\phi'\colon G \to (S^2 \setminus g \tiny{\bigotimes})/\sim$ is a cross-cap drawing of $\phi$ if there is a homeomorphism $f \colon N_g \to (S^2 \setminus g \tiny{\bigotimes})/\sim$ such that $f(\phi(G))=\phi'(G)$. We say that a cross-cap drawing of a cellularly embedded multi-graph $\phi$ is \emph{perfect} if every edge enters each cross-cap at most once. Finally, we say that an embedded multi-graph $\phi$ \emph{admits a perfect cross-cap drawing} if there exists a cross-cap drawing $\phi'$ of $\phi$ that is perfect.

\begin{conjecture}~\cite[Conjecture~3.4]{mohar2009genus}\label{main} For any positive integer $g$, every loopless pseudo-triangulation of $N_g$ admits a perfect cross-cap drawing. 
\end{conjecture}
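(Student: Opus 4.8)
The plan is to attack the conjecture through its most structured nontrivial instances, the loopless 2-vertex pseudo-triangulations, where the combinatorics can be made fully explicit and the signed-permutation machinery advertised in the abstract applies directly. First I would record the shape of such a scheme: with two vertices and all faces of degree three, Euler's formula for $N_g$ forces exactly $3g$ edges, all of them parallel arcs joining the two vertices $u$ and $v$ (looplessness rules out the alternative), and hence $F = 2g$ triangular faces. Fixing the rotation at $u$ as the reference cyclic order $1,2,\dots,3g$ and reading the rotation at $v$ together with the edge signature yields a signed permutation $\pi$ on $3g$ symbols that encodes the embedding scheme up to homeomorphism.

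The heart of the translation is the dictionary between cross-caps and signed reversals. A single cross-cap, traversed by a contiguous bundle of arcs, reverses the order of those arcs and flips each of their signs --- precisely a signed reversal on an interval --- and conversely any sequence of signed reversals can be realized by a matching sequence of cross-caps. Under this dictionary, a perfect cross-cap drawing (namely $g$ cross-caps, each arc meeting each cross-cap at most once) corresponds exactly to a factorization of $\pi$ into $g$ signed reversals; because each cross-cap is a single reversal event, the ``at most once'' condition is automatic. The identity permutation is the planar drawing on the sphere, so building $\pi$ from the identity with cross-caps is the same as sorting $\pi$ by reversals. Thus a perfect cross-cap drawing exists if and only if $\pi$ can be sorted in exactly $g$ reversals.

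It then remains to compare $g$ with the reversal distance. I would compute the non-orientable genus of the scheme from its face structure and identify it with the cycle lower bound $3g+1-c(\pi)$ arising from the breakpoint graph of $\pi$; the faces of the embedding should be in bijection with the cycles (plus the framing element), forcing $c(\pi)=2g+1$ and hence making the cycle bound equal to $g$. The Hannenhalli--Pevzner theorem then supplies the exact distance
\[
d(\pi) = 3g + 1 - c(\pi) + h(\pi) + f(\pi),
\]
where $h(\pi)$ counts hurdles and $f(\pi)\in\{0,1\}$ is the fortress indicator. Since realizing a genus-$g$ embedding needs at least $g$ cross-caps, we always have $d(\pi)\ge g$, with equality --- and a perfect cross-cap drawing --- \emph{precisely} when $h(\pi)+f(\pi)=0$, that is, when $\pi$ is free of hurdles and of a fortress.

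The main obstacle is exactly these correction terms. I expect that the vast majority of loopless 2-vertex pseudo-triangulations produce hurdle-free (and fortress-free) permutations, so that $d(\pi)=g$ and the conjecture holds; the work there is a structural classification of which triangular schemes can even create a hurdle. The make-or-break question is whether a hurdle (or a fortress) can be \emph{forced} by some loopless triangular 2-vertex scheme at all. If it can, then for that scheme $d(\pi)>g$, there is no sorting by $g$ reversals, and therefore no perfect cross-cap drawing --- a genuine counterexample to Conjecture~\ref{main}. Consequently I do not anticipate a clean unconditional proof along this route, but rather a dichotomy: a verification of the conjecture away from hurdle-creating configurations, together with an explicit hurdle- or fortress-based construction refuting it in general.
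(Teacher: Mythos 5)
Your proposal hinges on the claimed equivalence ``a perfect cross-cap drawing exists if and only if $\pi$ can be sorted in exactly $g$ reversals,'' and that equivalence is false in the direction you need most. Only one implication holds: a sequence of $d$ reversals yields a cross-cap drawing with $d$ cross-caps in which every edge is $x$-monotone and meets each cross-cap at most once, so $Cr_{deg}(\pi)\le d(\pi,id)$. The converse fails because in a general perfect cross-cap drawing the cross-caps need not be linearly ordered in a way that every edge traverses them monotonically; a cross-cap is not forced to act as a single ``reversal event'' on a contiguous bundle of arcs applied at one common time. The paper exhibits exactly this phenomenon (Figure~\ref{monotone}): a $2$-vertex scheme of non-orientable genus $3$ whose signed permutation requires \emph{four} reversals, yet which admits a perfect cross-cap drawing with \emph{three} cross-caps --- necessarily non-monotone. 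Consequently your proposed characterization ``perfect drawing exists iff $h(\pi)+f(\pi)=0$'' (no hurdles, no fortress) is wrong: it would misclassify as counterexamples a large family of schemes that do admit perfect drawings. Indeed, the paper's Theorem~\ref{main theorem} shows that schemes containing blocks (the cyclic analogue of hurdles) almost always admit perfect drawings; the construction requires precisely the non-monotone tool your dictionary excludes, namely blowing up a cross-cap so that the two frames of a block traverse the blown-up cross-caps in \emph{opposite} orders (Lemma~\ref{blowupcorrect}), something no reversal sequence can produce. Compare Figure~\ref{block example} with Figure~\ref{doubled}: adding two edges to a non-drawable scheme makes it perfectly drawable, which is incompatible with any criterion as coarse as hurdle-freeness.

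Your instinct about the final outcome is qualitatively right --- the conjecture is false, and block/hurdle structure is the source of the obstruction --- but your route cannot establish either half of the dichotomy. For the negative half, $d(\pi,id)>g$ does not rule out a perfect drawing (again Figure~\ref{monotone}), so exhibiting a hurdle is not enough; the paper instead isolates the single bad configuration (exactly one non-trivial positive block and one non-trivial negative block, Theorem~\ref{f}) and rules out perfect drawings by a curve-topology argument: the frames of the two blocks form orienting cycles, which must enter every cross-cap an odd number of times, while same-sign frame pairs form separating cycles, which must enter every cross-cap an even number of times (Lemmas~\ref{curves}, \ref{curves2} and~\ref{separating}); a genus count (Lemma~\ref{genus}) then shows there are too few cross-caps. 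For the positive half, the HP theory is used only on \emph{reduced} (block-free) permutations, where $d=eg$ indeed holds (Theorem~\ref{HP}), and the blocks are reinstated afterwards by the blow-up operation. Finally, note that the statement concerns pseudo-triangulations, so even a valid $2$-vertex counterexample must be completed to a loopless triangulation without changing the genus (Corollary~\ref{maint}), a step your plan does not address.
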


An even stronger conjecture was hinted at in~\cite[Paragraph following Conjecture~3.4]{mohar2009genus}, suggesting that one could possibly remove the loopless assumption if one forbids separating loops. This strengthening was disproved by Schaefer and \v{S}tefankovi\v{c}~\cite[Theorem~7]{JGAA-580}.

In addition to their motivation from crossing number theory, these conjectures would also shed light on the difficult task of visualizing high genus embedded graphs, providing an alternate approach to that of Duncan, Goodrich and Kobourov~\cite{duncan2011planar}, who rely on canonical polygonal schemes~\cite{lazarus2001computing}.

 A big step towards both these conjectures was achieved by Schaefer and \v{S}tefankovi\v{c}, who proved~\cite[Theorem~10]{JGAA-580} that 
any multi-graph embedded on a non-orientable surface of genus $g$ has a cross-cap drawing with $g$ cross-caps, in which each edge enters each cross-cap at most \emph{twice}. 
The theorem of Schaefer and \v{S}tefankovi\v{c} applies in particular to one-vertex embedded multi-graphs, 
and thus suggests a natural approach towards proving Conjectures~\ref{second} and~\ref{main}. 
First contract a spanning tree to obtain a one-vertex graph and apply this theorem. 
Then, edges might enter cross-caps twice, but since the initial graph is loopless, one could hope to uncontract some edges so as to spread these two cross-caps on two edges, thus obtaining a perfect cross-cap drawing. 
Our first result shows that this approach cannot work, as some loopless $2$-vertex graph embeddings do not admit perfect cross-cap drawings. 
\paragraph*{Our results}
A cellularly embedded multi-graph can be encoded by an \emph{embedding scheme}, also known as a rotation system, that consists of (1) the graph together with (2) a cyclic permutation of the half-edges incident to each vertex (3) a \emph{signature}, or \emph{sign}, for each edge. To ease the reading, in the entirety of this article we use the overline notation $\overline{i}$ to indicate negative signs, and no notation for the positive signs.

\begin{figure}[t]
\centering
    \includegraphics[width=\textwidth]{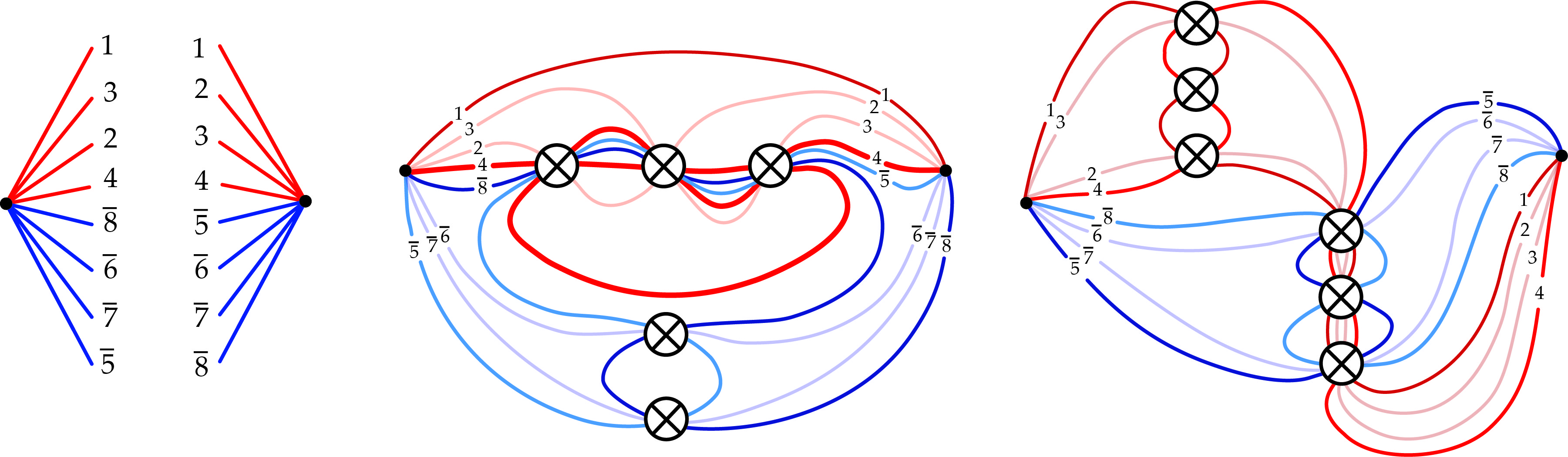}
    \caption{Left: A loopless 2-vertex scheme made of a \emph{positive block}, in red, consisting of only positive edges, and a \emph{negative block}, in blue, consisting of negative edges. Middle: a cross-cap drawing showing that it has non-orientable genus $5$. The bold red edge enters a cross-cap twice. Right: A cross-cap drawing where each edge enters each cross-cap at most once requires $6$ cross-caps. 
    }
    \label{block example}
\end{figure}

Given a 2-vertex loopless embedded graph, two edges whose union separates the surface into two components are \emph{separating}, and if one of the components is orientable while the other one is not orientable, we call the subgraph contained in the orientable component a \emph{block}. If this component is not a disk then we say it is \emph{non-trivial}. If the signs of all edges in this subgraph are \emph{positive} (resp. \emph{negative}) then we say that the block is \emph{positive} (resp. \emph{negative}). We refer to Figure~\ref{block example} for an example illustrating this notion of blocks. We refer to Sections~\ref{p:embedding scheme} and \ref{p:blocks} for more background on these definitions and a combinatorial reformulation.

\begin{restatable}{theorem}{permutation}\label{f}
 A loopless $2$-vertex embedding scheme that consists of exactly one non-trivial positive block and one non-trivial negative block admits no perfect cross-cap drawing. 
 
\end{restatable}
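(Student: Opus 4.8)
The plan is to translate the problem into the language of signed permutations and reduce it to a statement about reversal distance. A $2$-vertex embedding scheme is encoded by a signed permutation $\pi$: read the rotation at one vertex as the identity and the rotation together with the signature at the other vertex as $\pi$, so that the positive block becomes a run of positively signed entries and the negative block a run of negatively signed entries. The key geometric observation is that a cross-cap behaves exactly like a signed reversal: if one sweeps a cross-cap drawing from the first vertex to the second, the edges meeting any given cross-cap form a consecutive bundle, and passing this bundle through the cross-cap reverses its order and flips each signature. I would first make this precise as a dictionary: a drawing with $k$ cross-caps in which every edge meets every cross-cap at most once yields an ordered sequence of $k$ signed reversals taking the identity to $\pi$, and conversely. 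In particular, the minimum number of cross-caps among such drawings is at least the reversal distance $d(\pi)$.

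With the dictionary in hand the statement follows from comparing two numbers. On one side, the non-orientable genus $g$ of the scheme is $g=|E|-F$ by Euler's formula, where the face count $F$ is obtained by the standard face-tracing of the rotation-signature; for a scheme made of one positive and one negative block this is a direct computation, and a drawing realizing $g$ cross-caps (with some edge meeting a cross-cap twice) is furnished by the theorem of Schaefer and \v{S}tefankovi\v{c}~\cite{JGAA-580}. On the other side, a \emph{perfect} cross-cap drawing has exactly $g$ cross-caps and meets the at-most-once condition, so by the dictionary it would give a sorting of $\pi$ by $g$ reversals, forcing $d(\pi)\le g$.

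It therefore suffices to show $d(\pi)>g$. I would compute $g=|E|-F$ directly from the rotation-signature, and on the permutation side analyze the breakpoint graph of $\pi$: for the one-positive-block/one-negative-block scheme it has an unoriented component forming a hurdle. By the Hannenhalli--Pevzner formula the reversal distance exceeds the cycle bound $|E|+1-c(\pi)$ by at least the hurdle count, and since this cycle bound is exactly the quantity a (possibly non-monotone) drawing is free to realize and coincides with $g$, we obtain $d(\pi)\ge g+1$. Combined with the previous paragraph, this contradicts the existence of a perfect cross-cap drawing, proving the theorem.

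The main obstacle is the lower-bound direction of the dictionary: showing that a drawing in which each edge meets each cross-cap at most once can genuinely be re-read as an \emph{ordered} sequence of reversals of the same length. One must produce a consistent sweep of the sphere in which every cross-cap is encountered as a contiguous reversal; the at-most-once hypothesis is what should force this monotonicity, but ruling out that the drawing is too tangled to be linearized---equivalently, proving that an untangling does not cost extra cross-caps---is the crux, and is precisely where the argument must go beyond the monotone Hannenhalli--Pevzner setting. A secondary and more routine point is verifying that the Euler-characteristic value $|E|-F$ agrees with the cycle bound $|E|+1-c(\pi)$, i.e.\ that allowing an edge to traverse a cross-cap twice saves exactly the hurdle cost.
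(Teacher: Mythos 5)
Your proposal has a fatal gap, and it is exactly at the step you yourself flag as ``the crux.'' The lower-bound direction of your dictionary --- that any drawing with $k$ cross-caps in which every edge enters every cross-cap at most once can be re-read as an ordered sequence of $k$ signed reversals, hence that the minimum number of cross-caps over such drawings is at least $d(\pi,id)$ --- is not merely difficult; it is false. The paper itself exhibits a counterexample in Section~\ref{reversal distance to genus} (Figure~\ref{monotone}): a $2$-vertex scheme of non-orientable genus $3$ that admits a perfect cross-cap drawing with three cross-caps, yet requires four reversals to be sorted. In that drawing the cross-caps do not occur in a monotone order and cannot be linearized into reversals without paying an extra cross-cap; this gap between $Cr_{deg}$ and $d(\pi,id)$ is precisely what makes Mohar's conjecture non-trivial. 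Consequently, from a hypothetical perfect drawing of the one-positive-block/one-negative-block scheme you cannot conclude $d(\pi,id)\le g$, so the intended contradiction with the Hannenhalli--Pevzner hurdle bound $d(\pi,id)\ge g+1$ never materializes. (Your secondary points are fine: the cycle bound does coincide with $e(\pi)-f(\pi)=eg(\pi)=g$ for non-orientable schemes, and these schemes do carry hurdles; but that only bounds $d$, not the number of cross-caps.)

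For comparison, the paper's proof of Theorem~\ref{f} avoids reversal distance entirely and argues directly about the topology of a hypothetical perfect drawing. By Lemma~\ref{separating}, in any cross-cap drawing an orienting cycle enters each cross-cap an odd number of times and a separating cycle an even number of times; in a \emph{perfect} drawing ``odd'' means exactly once and ``even'' means the two edges use the same cross-caps. The wedge criteria (Lemmas~\ref{curves} and~\ref{curves2}) show that a frame of the positive block together with a frame of the negative block forms an orienting cycle, while the two frames of a single block form a separating cycle. These facts force the cross-caps visited by the positive block to be disjoint from those visited by the negative block, so a perfect drawing would need at least $g_A+g_B$ cross-caps, whereas the genus additivity computation (Lemma~\ref{genus}) gives $g(G)=g_A+g_B-1$. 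If you wish to salvage your approach, it is this kind of scheme-specific topological argument that must replace the linearization step --- but at that point the breakpoint-graph and hurdle machinery is no longer doing any work, and you have essentially rediscovered the paper's proof.
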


As a corollary, we obtain a counter-example to Conjecture~\ref{main}:

\begin{corollary}\label{maint}
There exists a loopless pseudo-triangulation $G$ that admits no perfect cross-cap drawing.
\end{corollary}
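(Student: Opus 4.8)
The plan is to deduce the corollary from Theorem~\ref{f} by a reduction: take the loopless $2$-vertex embedding scheme $H$ of Figure~\ref{block example}, consisting of one non-trivial positive block and one non-trivial negative block, embedded in $N_g$ (with $g=5$ in the figure), and enlarge it \emph{inside its faces} into a loopless pseudo-triangulation $\phi\colon G\to N_g$ of the same surface, in such a way that $\phi$ restricts to the given embedding of $H$. The whole point is that a perfect cross-cap drawing of $G$ would restrict to a perfect cross-cap drawing of $H$, which Theorem~\ref{f} forbids. A preliminary sanity check explains why one must genuinely add vertices rather than merely insert chords: in a loopless $2$-vertex scheme every edge joins the two vertices $u$ and $v$, so the corners along any face alternate $u,v,u,v,\dots$ and every face has even degree; a face of degree three is therefore impossible, and $H$ itself can never be a pseudo-triangulation.

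\emph{Construction.} I would place a new vertex $w_f$ in the interior of each face $f$ of $H$ and join $w_f$ by a new arc to each corner of $f$. Since $f$ is a disk, this operation leaves the ambient surface $N_g$ unchanged, and it subdivides a face of degree $2k$ into $2k$ triangles, each bounded by one original edge of $H$ (joining consecutive corners $u$ and $v$) together with two new spokes emanating from $w_f$. Every such triangle has three distinct corners $w_f,u,v$, so $G$ is loopless and all its faces have degree three; thus $G$ is a loopless pseudo-triangulation cellularly embedded in $N_g$, and $H$ is recovered from $G$ simply by deleting the added vertices and spokes. In particular $\phi|_H$ is the original embedding of $H$.

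\emph{Reduction.} Suppose for contradiction that $G$ admits a perfect cross-cap drawing $\phi'$, realized by a homeomorphism $f\colon N_g\to (S^2\setminus g\tiny{\bigotimes})/\sim$ with $g(G)$ cross-caps and with every edge meeting each cross-cap at most once. I would first record that $g(G)=g$: the embedding $\phi$ gives $g(G)\le g$, while $H\subseteq G$ gives $g(H)\le g(G)$, and $g(H)=g$ because the scheme of Theorem~\ref{f} is a minimum-genus embedding; sandwiching yields $g=g(H)\le g(G)\le g$. Deleting from $\phi'$ the added vertices and spokes then leaves a cross-cap drawing $\psi$ of $H$ that still uses $g=g(H)$ cross-caps and in which each edge of $H$ crosses each cross-cap at most once, while the very same homeomorphism $f$ certifies compatibility, since $f(\phi(H))=\psi(H)$. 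Hence $\psi$ is a perfect cross-cap drawing of $H$, contradicting Theorem~\ref{f}; this produces the desired counterexample to Conjecture~\ref{main}.

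\emph{Main obstacle.} With Theorem~\ref{f} assumed, the argument is essentially a bookkeeping reduction, and the two genus equalities are the only delicate points. The equality $g(G)=g$ follows cleanly from the subgraph--supergraph sandwiching above, but it hinges on $g(H)=g$, that is, on the scheme of Theorem~\ref{f} being a minimum-genus embedding; I would verify this directly, and in any case it is forced by the statement, since otherwise a compatible cross-cap drawing could never attain the $g(G)$ cross-caps required for perfectness. The only other thing to check is that deleting vertices and edges from a compatible cross-cap drawing preserves both compatibility and the at-most-once crossing condition, which is immediate because the edges of $H$ form a subset of those of $G$ and the cross-caps themselves are left untouched.
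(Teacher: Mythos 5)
Your construction and reduction follow the same route as the paper's own proof: triangulate the embedding of the scheme $H$ from Theorem~\ref{f} inside its faces (the paper does this in one sentence, adding vertices and edges without creating loops or changing the genus of the embedding, and exhibits the result in Figure~\ref{G}), then observe that a perfect cross-cap drawing of the resulting pseudo-triangulation would restrict, after deleting the added vertices and spokes, to a perfect cross-cap drawing of $H$, contradicting Theorem~\ref{f}. Your star-triangulation of each face, the check that it creates no loops and only degree-three faces, and the remark that deletion preserves both compatibility and the at-most-once crossing condition are all correct, and are more explicit than the paper's one-line argument.

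However, the step you yourself single out as the crux --- the genus bookkeeping --- contains a genuine error. You assert that $g(H)=g$ because ``the scheme of Theorem~\ref{f} is a minimum-genus embedding,'' and you propose to verify this directly; that verification would fail. The underlying abstract graph of any loopless $2$-vertex scheme is a dipole (two vertices joined by parallel edges), which is planar, so its graph genus is $0$, nowhere near $g$; for the same reason the sandwich $g(H)\le g(G)\le g$ cannot close, since both abstract genera may be far below $g$. What you are missing is that in Theorem~\ref{f}, as in Conjecture~\ref{main}, ``perfect'' is measured against the genus of the \emph{embedding scheme} (equivalently, of the surface carrying the given cellular embedding), not against the genus of the abstract graph: Lemma~\ref{genus} and the proof of Theorem~\ref{f} compute exactly this scheme genus via Euler's formula on the faces of the scheme. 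Under the correct reading no verification is needed: $H$ is cellularly embedded and non-orientable in $N_g$, so its scheme genus equals $g$ by construction, and a compatible drawing of the pseudo-triangulation has exactly $g$ cross-caps because the compatibility homeomorphism maps $N_g$ to the sphere with $g$ cross-caps. Hence your restricted drawing $\psi$ automatically has $g$ cross-caps with each edge of $H$ entering each cross-cap at most once, which is precisely a perfect cross-cap drawing of the scheme $H$, and Theorem~\ref{f} gives the contradiction. Your proof is therefore repairable, but only after discarding the false minimum-genus claim and replacing the abstract-graph genus by the scheme genus throughout.
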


 Figure~\ref{doubled} shows that this counter-example is quite fragile by displaying an interesting phenomenon: while the example in Figure~\ref{block example} does not admit a perfect cross-cap drawing, surprisingly, it does after \emph{adding} two edges to it.

\begin{figure}[t]
    \centering
\includegraphics[width=.55\textwidth]{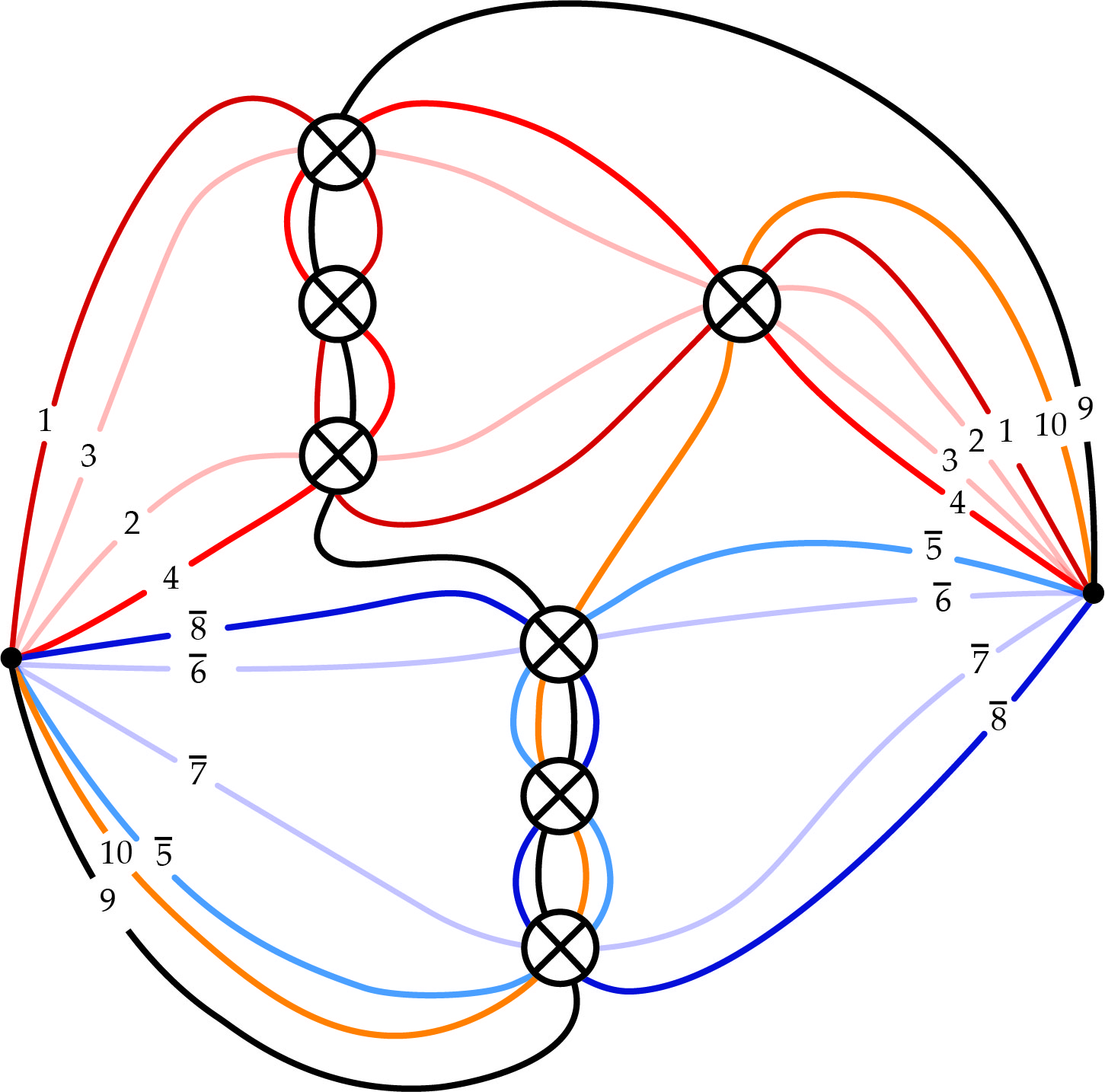}
    \caption{A perfect cross-cap drawing of Figure~\ref{block example} with two additional edges.}
    \label{doubled}
\end{figure}

\vspace{1em}

Our second contribution and main theorem is a converse to Theorem~\ref{f}.

In order to state it, we first introduce a bit of terminology. For $G$ a $2$-vertex loopless graph embedded on a surface $S$, we define its \emph{portions} as follows. We first cut along all the edges that belong to a pair of separating edges. This gives a decomposition of the surface into several connected components. If two connected components are non-orientable and have a common boundary, we paste them back together, and repeat until no two non-orientable sub-surfaces share a boundary. For each of the resulting subsurfaces, we then paste a disk on each boundary. We call the surfaces thus obtained, the \emph{portions} of the surface, and respectively, the intersections with the graph, the \emph{portions} of the graph. 
Given a portion $A \subset G$, the \emph{cleaned portion} $A|$ is an embedded graph obtained by iteratively simplifying \emph{homotopic edges}, i.e., repeatedly replacing pairs of edges $(a,b)$ occurring consecutively around both vertices by a single edge $a$ and likewise replacing pairs of edges $(\overline{a},\overline{b})$ occurring consecutively around both vertices by a single edge $\overline{a}$.  We refer to Figure~\ref{portions} for an illustration of these constructions. 

\begin{figure}[H]
    \centering
    \includegraphics[width=0.8\textwidth]{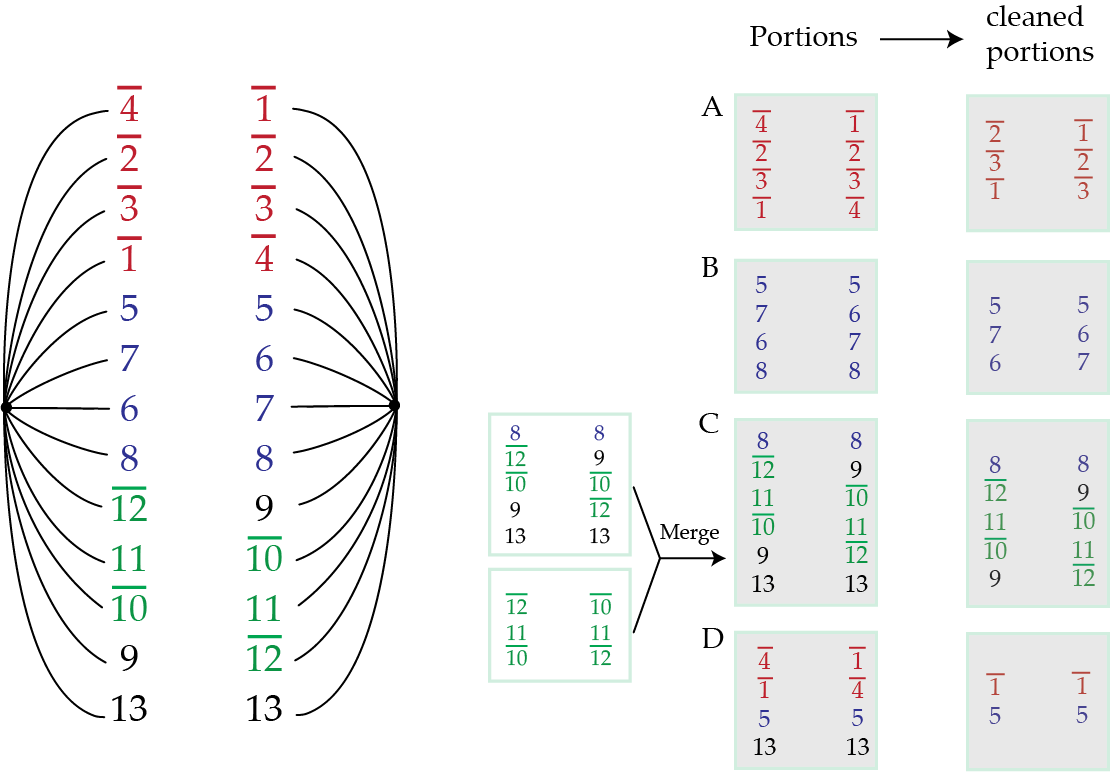}
    \caption{A decomposition of a loopless $2$-vertex embedded graph into its (cleaned) portions. This example is used as a running example and will reappear in Figures~\ref{portion tree} and~\ref{fig:corners}.}
    \label{portions}
\end{figure}

If $G$ is an embedding scheme of a bipartite graph, an \emph{allowable contraction} of $G$ is the embedding scheme that arises from the process of adding an edge in a face of $G$ between two vertices that are on the same side of the bipartition (and thus non-adjacent) and contracting it. A \emph{2-vertex contraction} of $G$ is an embedding scheme that arises from a maximal sequence of successive allowable contractions of $G$.

\begin{restatable}{theorem}{maintheorem}\label{main theorem}
For any embedding scheme $G$ of a bipartite graph on $N_g$, at least one of the following is true:  
 \begin{enumerate}
 \item $G$ admits a perfect cross-cap drawing,
 \item For every $2$-vertex allowable contraction of $G$, every non-orientable cleaned portion is of one of exactly two types: either (i) it is isomorphic to the left picture on Figure~\ref{reduced badly} or (ii) it is isomorphic to the right picture of Figure~\ref{reduced badly}.
  \end{enumerate}
\end{restatable}

This theorem shows that apart from some narrow families of exceptional cases, all bipartite embedded graphs satisfy Conjecture~\ref{main}. Observe that in the family of $2$-vertex counterexamples referred to in Theorem~\ref{f}, there are three portions: one orientable portion for each block and one non-orientable portion isomorphic to the left graph in Figure~\ref{reduced badly}.  

It is easy to see that Theorem~\ref{main theorem} implies that under standard random models, any loopless 2-vertex embedding scheme admits a perfect cross-cap drawing asymptotically almost surely. We believe that the same should hold for random bipartite embedded graphs. 

Theorem~\ref{main} is not exactly the strongest result that we can prove, a slightly stronger and more technical theorem is presented in Theorem~\ref{main theorem2} in Section~\ref{S:perfect}. We note that our proof techniques readily provide a polynomial-time algorithm that constructs the perfect cross-cap drawing when we are in the first case of the theorem.

\begin{figure}[H]
    \centering
    \includegraphics[width=.9\textwidth]{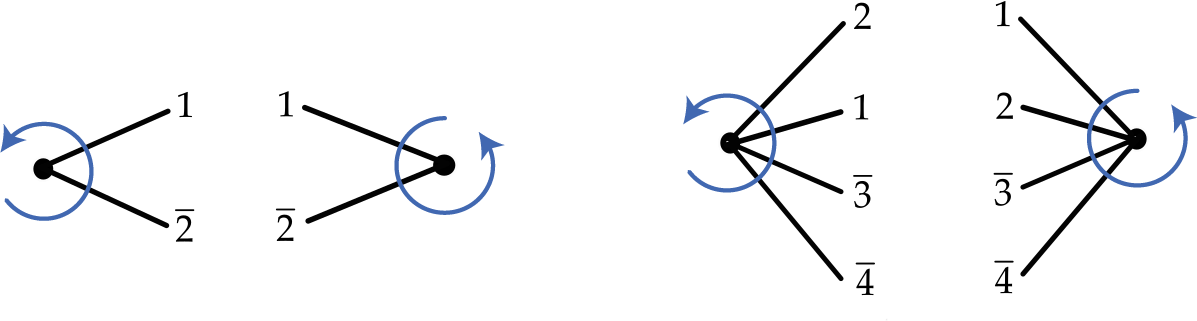}
    \caption{The only two obstructions to perfect cross-cap drawings. On the left, the two cyclic permutations are $(\bar{2},1)$ and $(1,\bar{2})$, and on the right, they are $(\bar{4},\bar{3},1,2)$ and $(1,2,\bar{3},\bar{4})$, where we use cycle notations and the $\bar{\cdot}$ denotes negative signature.}
    \label{reduced badly}
\end{figure}

 Most of the work in the proof of Theorem~\ref{main theorem} is about understanding the case of $2$-vertex graphs, and we focus on this case in most of the paper. We derive the general bipartite case from the $2$-vertex case in the last section.

\paragraph*{Techniques and connections to signed reversal distance}
Our focus on the 2-vertex loopless embedding schemes is further motivated by a connection (introduced in~\cite{fuladi2022short}) to computational genomics. 
An important problem in computational biology is to compute various notions of distance between two genomes (see~\cite{compeau2015bioinformatics,fertin2009combinatorics}). Remarkably, one of the most biologically relevant distances is also one of the few that can be calculated efficiently: a one chromosome genome is encoded by a \emph{signed permutation} $\pi=(\pi_1, \ldots, \pi_n)$. Here, signed means that each element has a $+$ or $-$ sign, which we also denote using overline notations.
The \emph{reversal} 
of the interval $(\pi_i,\pi_j)$
acts on $\pi$ by reversing the order of the elements  $\pi_i,\ldots, \pi_j$ as well as their signs, it maps
\[(\pi_1,\pi_2, \ldots, \pi_{i-1}, \pi_i,\pi_{i+1} \ldots,\pi_{j-1} ,\pi_j,\pi_{j+1}\ldots \pi_n),\] to 
\[(\pi_1,\pi_2, \ldots,\pi_{i-1}, \overline{\pi_j},\overline{\pi_{j-1}} \ldots,\overline{\pi_{i+1}},\overline{\pi_i}, \pi_{j+1}\ldots \pi_n).\]

 The \emph{reversal distance} between signed permutations $\pi,\pi'$, denoted by $d(\pi,\pi')$  is the minimum number of reversals needed to transform $\pi$ to $\pi'$. A celebrated algorithm of Hannenhalli and Pevzner~\cite{hannenhalli1999transforming} (see also~\cite{bergeron2001very,bergeron2004reversal}) computes in polynomial time the reversal distance between two signed permutations.

Switching back to topological graph theory, recall that an embedded graph can be encoded combinatorially with an embedding scheme, which amounts to specifying a cyclic permutation around each vertex and a sign for each edge. For 2-vertex loopless embedded graphs, we can number the edges so that the cyclic permutation around one vertex is $(1,2, \ldots , n)$. The starting observation of our work is that such an embedding is entirely defined by the cyclic permutation around the second vertex and the sign for each edge, i.e., the data of a \emph{signed cyclic permutation}. 

Now, given a signed permutation $\pi$, we can trace the trajectory of each element under the action of reversals in order to obtain a cross-cap drawing of that embedding scheme, where each reversal corresponds to a cross-cap and each edge is an x-monotone curve, and in particular no edge enters twice the same cross-cap (see Figure~\ref{trajectory} for an illustration). This easily implies the following inequality due to Bafna and Pevzner~\cite{bafna1996genome}: $g(\pi^c)\leq d(\pi,id)$, where $\pi^c$ is a \emph{cyclic} signed permutation induced\footnote{Precisely, the permutation $\pi$ needs to be mirrored in addition to be made cyclic, see Section~\ref{p:blocks}.} by $\pi$ and $g(\pi^c)$ is the non-orientable genus of the associated embedding scheme. It turns out that the inequality $g(\pi^c)\leq d(\pi,id)$  is central to the reversal distance theory, and the cases of equality are well understood. Our proof techniques heavily depend on the techniques that have been developed to understand these cases. In turn, our Theorem~\ref{main theorem} can be reinterpreted in the setting of signed permutations as providing an extension of the Hannenhalli--Pevzner theory applying beyond these equality cases.

\begin{figure}
    \centering
    \includegraphics[width=.8\textwidth]{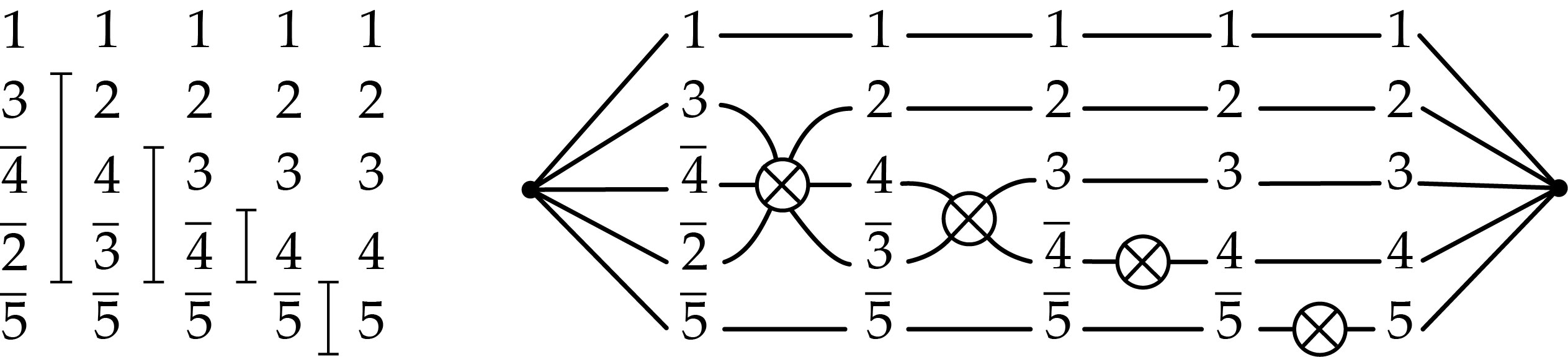}
    \caption{From sorting permutations with reversals to monotone cross-cap drawings.}
    \label{trajectory}
\end{figure}

The proof of Theorem~\ref{main theorem} in the $2$-vertex case consists of two steps which can readily be made algorithmic: when we are in the first case of the theorem, we show how to build a perfect cross-cap drawing of one of the non-orientable portions that is not of the two forbidden isomorphism classes. This is provided by Lemma~\ref{fantastique}. Then we devise a technique to inductively \emph{blow up} (Lemma~\ref{blowupcorrect}) the cross-cap drawing, adding portions one by one until we get a perfect cross-cap drawing of the original cyclic signed permutation.

\section{Preliminaries}\label{preliminaries}
\subsection{Embedding schemes}\label{p:embedding scheme}

We assume that the reader is familiar with basic concepts on the topology of surfaces (classification, genus, orientability, etc.) and refer to Mohar and Thomassen~\cite{mohar2001graphs} or Stillwell~\cite{stillwell1993classical} for background on these topics.

In this article, we work with multi-graphs, possibly with loops and multiple edges. To every edge of a multi-graph we associate two \emph{half-edges}, one incident to each endpoint of the edge: the main function of these half-edges is to differentiate the two incidences of a loop with a vertex. An \emph{embedding} of a graph $G$ on a surface $S$ is a continuous injective map $\phi:G \rightarrow S$. We consider two embeddings on a surface $S$ to be equivalent if there exists a homeomorphism of $S$ sending the image of one embedding to the other one. The \emph{faces} of an embedded graph are the connected components of $S \setminus \phi(G)$.

\begin{remark2}From now on, all the graphs that we consider are always embedded, and therefore we apply the common abuse of language to identify a graph $G$ with its embedding $\phi$.
\end{remark2}

 An embedding is \emph{cellular} if its faces are homeomorphic to topological disks. The Euler genus, $eg(G)$, of a cellular embedding of a graph $G$ is the quantity $2-v+e-f$, where $v$, $e$ and $f$ denote respectively the number of vertices, edges and faces of the embedding of $G$. If the surface $S$ that $G$ is embedded on is non-orientable and the embedding is cellular, the Euler genus $eg(G)$ equals the non-orientable genus of the surface that $G$ is embedded on, denoted by $g(G)$. 

A \emph{cyclic permutation} is a permutation consisting of a single cycle. If $S$ is orientable, a cellular embedding on $S$ can be described combinatorially by a \emph{rotation system}, that is, a set of cyclic permutations encoding the order of the half-edges around each vertex. When $S$ is non-orientable, which will almost always be the case in this paper, some additional data is required to encode a cellular embedding, which is encompassed in the concept of an \emph{embedding scheme}. We introduce the main definitions and refer to Mohar and Thomassen~\cite[Section~3.3]{mohar2001graphs} for extensive background.

\begin{definition}[Embedding scheme]\label{def:embedding scheme}
An \emph{embedding scheme} consists of a triple $(G,\rho,\lambda)$ where 
\begin{itemize}
    \item $G$ is a graph, 
    \item $\rho=\{\rho_v, v \in V(G)\}$, where each $\rho_v$ is a cyclic permutation of the half-edges incident to $v$, and
    \item $\lambda$ is a function that assigns a \emph{signature} (or just \emph{sign}) in $\{+1,-1\}$ to each edge of $G$.
\end{itemize}
\end{definition}

From an embedding scheme $(G,\rho,\lambda)$, one can naturally recover the set of \emph{facial cycles} by following the edges and switching sides according to their signatures. 
Then, pasting a topological disk on each facial cycle yields a cellular embedding of $G$ on a surface $S$. Given an embedding scheme $(G,\rho,\lambda)$, a \emph{flip} at a vertex $v$ is a transformation that yields another embedding scheme of the same graph, in which we reverse the order of the edges incident to $v$ and invert the signature of those edges incident to $v$ that are not loops. We say that two embedding schemes $(G,\rho,\lambda)$ and $(G,\rho',\lambda')$ are \emph{equivalent} if one can go from one to the other one by a sequence of flips. Two embedding schemes are equivalent if and only if they induce equivalent cellular embeddings~\cite[Theorem~3.3.1]{mohar2001graphs}. This justifies that equivalence classes of embedding schemes and embedded graphs can be considered as being two representations of the same objects, and we switch freely between the two points of view in this article, often using the shorthand $G$ to denote $(G,\rho,\lambda)$.

A \emph{cycle} in a graph is a closed walk without repeated vertices or edges, and a \emph{cycle} in an embedding scheme is a cycle on the edges of the underlying graph. A cycle in an embedding scheme is \emph{one-sided} (resp. \emph{two-sided}) if and only if the signatures of its edges multiply to $-1$ (resp. $+1$). An embedding scheme is called \emph{orientable} if it only contains two-sided cycles; otherwise it is called \emph{non-orientable}. A cycle $\gamma$ in a non-orientable embedding scheme is \emph{separating} if $S \setminus \gamma$ has two connected components. A cycle $\gamma$ in a non-orientable embedding scheme is \emph{orienting}~\cite{fuladi2022short,JGAA-580} if $S \setminus \gamma$ is a connected orientable surface. 

 We use cycle notation to describe cyclic permutations, and thus for example $(1,4,3,2)$ and $(4,3,2,1)$ denote the same permutation. The \emph{mirror} of a cyclic permutation is obtaining by reversing the order, e.g., the mirror of $(1,4,3,2)$ is $(2,3,4,1)$.
 An \emph{interval} $(a,b)$ in a cyclic permutation $\rho=(\rho_ 1, \ldots , \rho_n)$ with $n$ elements is the sequence of elements occurring between the element $a$ and the element $b$ in $\rho$, i.e., the subpermutation $(\rho_i, \rho_{i+1}, \ldots, \rho_{j-1},\rho_j)$ for $i$ and $j$ such that $\rho_i=a$ and $\rho_j=b$ (where the indices are in $\mathbb{Z}_n$). Note that the intervals $(a,b)$ and $(b,a)$ are different. See Figure~\ref{intervals} for an example. For a cyclic permutation $\rho=(\rho_ 1, \ldots , \rho_n)$, we denote by $\overline{\rho}$ the mirrored cyclic permutation $\rho=(\rho_ n, \ldots , \rho_1)$.

\begin{figure}
    \centering
    \includegraphics[width=.28\textwidth]{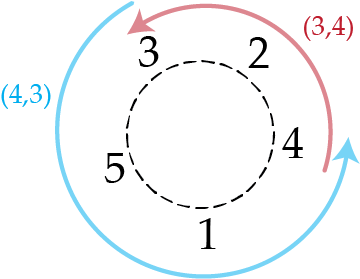}
    \caption{In the cyclic permutation $\pi=(3,2,4,1,5)$, the interval $(3,4)$ consists of the elements $\{3,2,4\}$  and the interval $(4,3)$ consists of the elements $\{4,1,5,3\}$.
}
    \label{intervals}
\end{figure}

 We use a different interval notation $[a,b]$ to denote all the integers between $a$ and $b$ modulo $n$, i.e., this is the standard interval $[a,b]$ if $a<b$ but it is the set $\{b,b+1, \ldots n, 1, \ldots a\}$ if $a>b$. Equivalently, $[a,b]$ is the interval $(a,b)$ in the cyclic permutation $(1, \ldots, n)$.

\subsection{2-vertex loopless embedding schemes and signed cyclic permutations}\label{p:blocks}

This paper almost exclusively deals with loopless graphs with two vertices. In that setting, without loss of generality, one can number the edges so that the cyclic permutation around one of the vertices is the cyclic permutation $(1,2, \ldots ,n)$, which we denote by $I$. Then the data of the embedding scheme just consists of the cyclic permutation around the other vertex, and the signature of the edges, and thus this amounts to a \emph{signed cyclic permutation}: a cyclic permutation where each number is additionally endowed with a $+$ or $-$ sign, which we denote using the overline notation $\overline{i}$.

\begin{remark2} Throughout the rest of the article, we freely identify a signed cyclic permutation $\pi$ and the 2-vertex loopless embedding scheme where:
\begin{itemize}
    \item the cyclic permutation around one vertex is $I$, 
    \item the cyclic permutation around the other vertex is the \textbf{mirror} of $\pi$, and 
    \item the signatures of the edges are indicated by the signs of the elements in $\pi$.
\end{itemize}
There is a mirror here in order to be coherent with signed reversal distance: this ensures that the base case $\pi=I$ corresponds to an orientable, planar (genus zero) embedding scheme. So with this identification, the two minimal obstructions pictured in Figure~\ref{reduced badly} are the signed cyclic permutations $(1,\bar{2})$ and $(2,1,\bar{3},\bar{4})$, i.e., we can read them from the figures from top to bottom.

Throughout the paper, all the permutations we work with are cyclic, and therefore all the index notations are to be understood modulo the number of elements in the permutation. We use \emph{sc-permutation} as a short-hand for signed cyclic permutation.
\end{remark2}

We denote by $eg(\pi)$ the Euler genus of the embedding scheme associated to an sc-permutation $\pi$.

\paragraph*{Topology of cycles in 2-vertex loopless embedding schemes} Denoting the vertices of a $2$-vertex loopless embedding scheme by $v_1$ and $v_2$, we define the \emph{wedges} $\omega_{a,b}$ and $\omega_{b,a}$ between $a$ and $b$ as follows:
\begin{itemize}
\item If at least one of $a$ and $b$ is positive, then $\omega_{a,b}$ (resp. $\omega_{b,a}$) contains all the half-edges in the interval $(b,a)$ (resp. $(a,b)$) in $\rho_{v_1}$ and $(a,b)$ (resp. $(b,a)$) in $\rho_{v_2}$.

    \item If both $a$ and $b$ are negative, then $\omega_{a,b}$ (resp. $\omega_{b,a}$) contains all the half-edges in the interval $(a,b)$ (resp. $(b,a)$) in both $\rho_{v_1}$ and $\rho_{v_2}$. 

\end{itemize} 

We say that a wedge \emph{encloses} an edge if it contains both its half-edges or none of them. For example $w_{1,4}$ in Figure~\ref{wedge} encloses all the edges of the graph. Observe that, following the definitions, $\omega_{a,b}$ encloses an edge $c$ if and only if $\omega_{b,a}$ does. 

For two edges $a$ and $b$ in a 2-vertex loopless embedding scheme, we denote by $a \cdot b$ the concatenation of $a$ and $b$ which we will interpret as a cycle. We can identify orienting and separating cycles in an embedding scheme based on whether or not they enclose edges in the embedding scheme. Since $\omega_{a,b}$ and $\omega_{b,a}$ are equivalent in terms of whether they enclose edges, we state the lemma only for one of them.

\begin{figure}
    \centering
    \includegraphics[width=\textwidth]{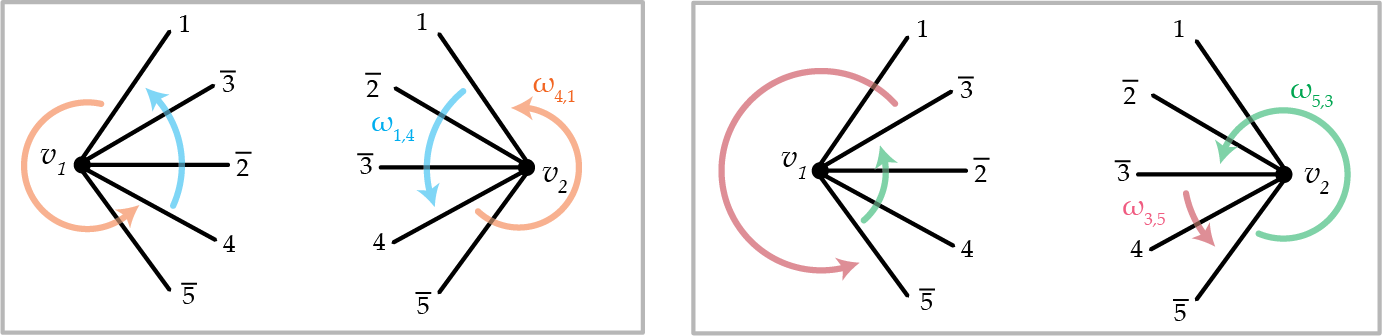}
    \caption{The wedges $\omega_{1,4}$, $\omega_{4,1}$, $\omega_{3,5}$ and $\omega_{5,3}$. The curve $1\cdot 4$ is separating and the curve $3\cdot 5$ is orienting. 
}
    \label{wedge}
\end{figure}

\begin{restatable}
    {lemma}{wedgelemma}
\label{curves}
 For two edges $a$ and $b$ in a non-orientable loopless 2-vertex embedding scheme:

\begin{itemize}
    \item If at least one of $a$ and $b$ is positive, the cycle $a \cdot b$ is orienting if and only if $\omega_{a,b}$ contains exactly one half-edge of each negative edge and encloses all the positive edges.
\item If both $a$ and $b$ are negative, the cycle $a \cdot b$ is orienting if and only if $\omega_{a,b}$ contains exactly one half-edge of each positive edge and encloses all the negative edges.
\end{itemize} 
\end{restatable}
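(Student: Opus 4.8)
<br>

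The plan is to characterize when the cycle $a \cdot b$ is orienting by analyzing the orientability of the surface obtained after cutting along it. Recall that $a \cdot b$ is orienting precisely when cutting the surface along this curve yields a connected orientable surface. Since we are in a non-orientable scheme, cutting along an orienting curve must destroy all the non-orientability, and the resulting surface must remain connected. My approach is to cut along $a \cdot b$ and track how the edges of $G$ and the signatures interact with the cut, reducing the two conditions to the orientability and connectedness of the cut surface.

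First I would set up the combinatorics of cutting. When we cut along the cycle $a \cdot b$, every other edge $e$ of $G$ either crosses the cut curve or does not, depending on whether its endpoints lie inside or outside the wedge $\omega_{a,b}$. The key observation is that the wedge $\omega_{a,b}$ is precisely the combinatorial encoding of ``one side'' of the curve $a \cdot b$: an edge enclosed by the wedge stays entirely on one side, whereas an edge with exactly one endpoint in the wedge crosses the cut. After cutting, the local picture near each vertex splits according to $\rho_{v_1}$ and $\rho_{v_2}$, and the signatures of the edges crossing the cut determine whether orientations can be matched consistently across the cut. I would then argue that orientability of the cut surface is equivalent to a parity/consistency condition on the signatures of the crossing edges, which is exactly what distinguishes the positive case from the both-negative case in the statement: when one of $a,b$ is positive, $a \cdot b$ is one-sided, so the wedge definition uses intervals $(a,b)$ in $\rho_{v_1}$ and $(b,a)$ in $\rho_{v_2}$, whereas when both are negative the curve is two-sided and the intervals align as $(a,b)$ in both.

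Concretely, I expect the argument to show that the cut surface is orientable if and only if every negative edge (in the first case) has exactly one endpoint crossing the cut, i.e.\ contributes a sign-flip that is ``absorbed'' by the cut, while every positive edge is enclosed and contributes no obstruction; the roles of positive and negative are swapped in the both-negative case because then the curve $a \cdot b$ itself is two-sided and reverses the ambient orientation conventions across its two segments. I would verify connectedness separately: cutting along a single non-separating cycle keeps the surface connected, and the enclosing/crossing conditions guarantee $a \cdot b$ is non-separating, so connectedness comes essentially for free once orientability is established. Wedge enclosure of the same-sign edges ensures they do not split off a component.

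The main obstacle will be carefully bookkeeping the signatures and rotation data through the cutting operation, particularly handling the asymmetry between $\rho_{v_1}$ and $\rho_{v_2}$ that arises from whether $a \cdot b$ is one-sided or two-sided. The definition of the wedge already encodes this asymmetry (the $(a,b)$ versus $(b,a)$ distinction), so the crux is to confirm that the combinatorial ``exactly one end in the wedge'' condition coincides exactly with the topological condition that cutting eliminates every cross-cap (one-sidedness obstruction) while keeping the surface connected. I anticipate that a clean way to do this is to reduce to a one-vertex scheme (by treating $a \cdot b$ as defining a base cycle and contracting along it), after which the orientability condition becomes a transparent statement about the signatures of the remaining edges relative to the chosen cut.
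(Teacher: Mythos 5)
Your proposal correctly unpacks the definition (orienting means cutting yields a connected orientable surface), and your closing suggestion---reduce to a one-vertex scheme by contraction---is in fact the route the paper takes: it contracts the edge $a$, so that the cycle $a\cdot b$ becomes the loop $b$ in a one-vertex scheme, and then invokes a known combinatorial characterization of orienting loops in one-vertex schemes (Lemma~2.3 of~\cite{fuladi2022short}: a loop is orienting if and only if its ends alternate with the ends of every negative loop and enclose the ends of every positive loop). The genuine gap is that your write-up never supplies that characterization, nor any substitute for it. The sentences ``I would then argue that orientability of the cut surface is equivalent to a parity/consistency condition on the signatures'' and ``I expect the argument to show\dots'' are precisely the content of the lemma, not steps toward proving it; everything you do establish (connectedness coming for free once all components of the cut surface are orientable, the wedge recording which edges cross the cut) is peripheral to that core equivalence. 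A complete proof along your lines would also need the explicit contraction bookkeeping that the paper carries out: contracting a positive edge gives $\rho_w=\rho_{v_1}\rho_{v_2}$, while contracting a negative edge gives $\rho_w=\rho_{v_1}\overline{\rho_{v_2}}$ with the signatures on one side reversed, and one must check that under this dictionary the wedge conditions translate exactly into the alternation/enclosure conditions. You name this bookkeeping as ``the main obstacle'' but do not carry it out.

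There is also a factual error that would derail the sign bookkeeping you sketch: you claim that when one of $a,b$ is positive the cycle $a\cdot b$ is one-sided. This is false when both $a$ and $b$ are positive, since then the signatures multiply to $+1$ and the cycle is two-sided; the first bullet of the lemma genuinely covers this subcase (for instance, in the scheme $(1,\overline{2},3,\overline{4})$ of non-orientable genus $2$, the cycle $1\cdot 3$ satisfies the first bullet and is a two-sided orienting curve). The asymmetric convention in the wedge definition---interval $(a,b)$ in $\rho_{v_1}$ versus $(b,a)$ in $\rho_{v_2}$---is keyed to the signatures of $a$ and $b$ themselves, not to the sidedness of $a\cdot b$, so the correspondence you set up between the wedge convention and one-sidedness would assign the wrong crossing pattern in the two-positive-edges subcase.
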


\begin{proof}
The lemma is proved by contracting an edge to obtain a $1$-vertex scheme. We first introduce the following notation: for two cyclic permutations each containing a unique common letter $e$, written in cycle notation $\rho^1=(\rho^1_1, \rho^1_2, \ldots, \rho^1_{n_1-1}, e)$ and $\rho^2=(e,\rho^2_2, \ldots ,\rho^2_{n_2})$, the concatenation $\rho=\rho^{1}\overset{\tiny{e}}{\cdot}\rho^{2}$ is the cyclic permutation obtained by concatenating them at $e$: $(\rho^1_1, \ldots \rho^1_{n_1},\rho^2_2, \ldots , \rho^2_{n_2})$. When we contract a positive edge $e=v_1v_2$ in $G$, we obtain an embedding scheme $G'$ with a single vertex $w$ such that the cyclic permutation of the edges around $w$ after contraction is $\rho_w=\rho_{v_1}\overset{\tiny{e}}{\cdot}\rho_{v_2}$ where the two half-edges of $e$ have been identified as a unique common letter. On the other hand, when we contract a negative edge $e=v_1v_2$ in $G$, $\rho_w=\rho_{v_1}\overset{\tiny{e}}{\cdot}\overline{\rho_{v_2}}$, where the two half-edges of $e$ have been identified as a unique common letter and the signature of all the edges are reversed.

We prove the lemma by contracting the edge $a$ in order to obtain a one-vertex embedding scheme $G'$.
Note that the topological type of the cycle formed by the edges $a$ and $b$ in $G$ is the same as that of the loop $b$ in $G'$. 
By~\cite[Lemma~2.3]{fuladi2022short}, a loop $o$ in a 1-vertex non-orientable embedding scheme is orienting if and only if its halves \emph{alternate} with the halves of all negative loops in the cyclic permutation around the vertex and do not alternate with the halves of any positive loop.

To prove the first case we can assume that $a$ is positive, exchanging the roles of $a$ and $b$ if necessary. We contract the edge $a$ in $G$. The two halves of the loop $b$ subdivide the half-edges around $w$ into two sets, and the definition of wedge implies that the half-edges in $\omega_{a,b}$ in $G$ correspond to one of the sets of half-edges divided by $b$ in $\rho_w$. Since the wedge $\omega_{a,b}$ in $G$ contains exactly one half-edge of each negative edge, the halves of the loop $b$ alternate with the halves of negative loops in $\rho_w$. Similarly, since $\omega_{a,b}$ encloses all the positive edges, the halves of $b$ do not alternate with the halves of any positive loop in $\rho_w$. Therefore $b$ is orienting in $G'$ and this implies that the cycle formed by $a$ and $b$ is orienting in $G$. 

For the proof of the second case we proceed similarly by contracting the negative edge $a$. Compared to the previous case, this yields three changes, any two of which cancel each other: the definition of wedge changes, the signatures of all the edges are reversed during the concatenation, and the permutation $\rho_{v_2}$ is mirrored. This results in the condition in that case being the opposite of the one in the first case.
\end{proof}

Similarly, the following lemma allows us to recognize separating cycles combinatorially.

\begin{restatable}{lemma}{wedgelemmaseparating}\label{curves2}
    For two edges $a$ and $b$ in a loopless 2-vertex embedding scheme, the cycle $a \cdot b$ is separating if and only if $a$ and $b$ have the same signature and $\omega_{a,b}$ encloses all the edges.
    
\end{restatable}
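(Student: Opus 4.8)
The plan is to mirror the strategy already used in the proof of Lemma~\ref{curves}: reduce the 2-vertex statement to a known 1-vertex criterion by contracting one of the two edges. Since both $a$ and $b$ share the same signature, I would contract $a$ to obtain a one-vertex embedding scheme $G'$ with vertex $w$, so that the cycle $a\cdot b$ in $G$ becomes a single loop $b$ in $G'$. As recorded in the proof of Lemma~\ref{curves}, contracting a positive edge gives $\rho_w=\rho_{v_1}\rho_{v_2}$ (concatenated at $a$) with signatures unchanged, while contracting a negative edge gives $\rho_w=\rho_{v_1}\overline{\rho_{v_2}}$ with all signatures reversed; in both cases the half-edges lying in the wedge $\omega_{a,b}$ in $G$ correspond precisely to the half-edges on one of the two sides of the loop $b$ in the cyclic order around $w$. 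Separation being a topological property preserved under contraction of a non-separating edge, it suffices to show that $b$ is a separating loop in $G'$.

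The key step is then to invoke the appropriate 1-vertex characterization of separating loops, analogous to the orienting criterion of \cite[Lemma~2.3]{fuladi2022short} cited above. The natural statement is that a loop $b$ in a 1-vertex embedding scheme is separating exactly when its two ends partition the remaining half-edges into two groups, each of which is \emph{closed} under the involution pairing the two endpoints of every other edge -- i.e. $b$ encloses every other loop, and moreover the portion of the scheme on each side of $b$ is itself a closed sub-scheme so that cutting along $b$ disconnects the surface. The hypothesis that $\omega_{a,b}$ encloses all edges translates, after contraction, exactly into the condition that the loop $b$ encloses every loop of $G'$, which is what forces the two sides to be self-contained and hence yields a separating curve.

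Concretely, I would argue as follows. Because $\omega_{a,b}$ encloses every edge of $G$, after contracting $a$ the loop $b$ has no other loop of $G'$ interleaved with it: every loop of $G'$ has both its ends on the same side of $b$. Thus the curve $b$ does not cross any edge when we push it off itself, and cutting the surface along $b$ separates the half-edges into the two sides determined by $b$, each forming a union of complete loops. Since each side carries a full set of facial boundaries with no edge bridging across $b$, cutting along $b$ disconnects $N_g$ into two pieces, so $b$ is separating in $G'$; pulling this back through the homeomorphism induced by uncontracting $a$ shows $a\cdot b$ is separating in $G$.

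The main obstacle I anticipate is making precise the 1-vertex separating-loop criterion and verifying that the ``encloses all edges'' hypothesis is genuinely equivalent to separation rather than merely necessary for it. Enclosing all loops guarantees $b$ is two-sided and uncrossed, but one must still check that the two resulting regions are both nonempty and that no facial walk secretly connects them -- i.e. that having all edges enclosed really does split the rotation into two independent closed systems. The subtlety is that the \emph{same-signature} assumption on $a$ and $b$ is exactly what makes the concatenated cycle two-sided, a prerequisite for separation; I would take care to confirm that after contraction the loop $b$ inherits a positive (two-sided) signature, so that the separation argument is not obstructed by one-sidedness. Handling this parity bookkeeping cleanly, in both the positive-$a$ and negative-$a$ cases, is where the real care is needed.
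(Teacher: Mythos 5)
Your proposal follows essentially the same route as the paper's proof: contract $a$, observe that the same-signature hypothesis makes the loop $b$ positive (two-sided) in the resulting one-vertex scheme, and use the enclosure hypothesis to conclude that no other loop's ends alternate with those of $b$, so $b$ is separating and hence so is $a\cdot b$ in $G$. The one-vertex criterion you hesitate over (a two-sided loop enclosing all other loops is separating) is exactly what the paper invokes, and it asserts this fact without further justification, so your extra care there goes slightly beyond, rather than diverges from, the paper's argument.
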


\begin{proof}
We contract $a$. If $a$ is positive, then $b$ has the same signature and this is maintained after contraction. If $a$ is negative, then $b$ is also negative but this is reversed by the contraction. In both cases, after the contraction, the loop $b$ has positive signature in $G'$. Furthermore, the half-edges in $\omega_{a,b}$ in $G$ correspond to one of the sets of half-edges divided by $b$ in $G'$ (in the case where $a$ is negative, the change in the definition of wedge and the mirroring in the contraction cancel out). Therefore, for every loop in $G'$, the two halves are contained on the same side of $b$. Since $b$ is positive, it is two-sided, and thus it partitions the set of faces into two-nonempty subsets. Therefore $b$ is separating. Conversely if the halves of some edge alternate with $b$, this contradicts the fact that $b$ is separating since one can use that edge to connect its two sides. Uncontracting $a$ yields the proof of the lemma.
\end{proof}

See Figure~\ref{wedge} for an example of a separating and an orienting cycle in a 2-vertex scheme.

\vspace{1em}

\paragraph*{Blocks, homotopies and portions} 
For an sc-permutation $\pi$ which we interpret as a 2-vertex embedding scheme, applying a flip at one of the vertices amounts to \emph{flipping} $\pi$, i.e., mirroring the sc-permutation and changing the sign of all its elements. We say that two sc-permutations are equivalent if they differ by such a flip.

A \emph{positive block} in an sc-permutation $\pi$ is an interval $I=(\pi_i, \pi_j)$ where all the elements are positive, and $I$ contains exactly all the elements in $[\pi_i,\pi_j]$. For example the interval $(6,3)=\{6,2,1,3\}$ is a positive block in the sc-permutation $(2,1,3,\bar{5},4,6)$. A \emph{negative block} in an sc-permutation is an interval $I=(\pi_i, \pi_j)$ where all the elements are negative, and $I$ contains exactly all the elements in $[\pi_j,\pi_i]$, for example $(\bar{6},\bar{3})$ is a negative block in the sc-permutation $(2,1,\bar{6},\bar{4},\bar{5},\bar{3},7)$. A positive (respectively a negative) block is \emph{trivial} if it is sorted in increasing (respectively decreasing) order, i.e., if it is equal to $(\pi_i, \pi_{i}+1, \ldots ,\pi_{j}-1,\pi_j)$ or to $(\overline{\pi_i}, \overline{\pi_i-1}, \ldots , \overline{\pi_{j}+1},\overline{\pi_j})$. 
In these examples and also in non trivial blocks we call $\pi_i$ and $\pi_j$ the \emph{frames} of the block.
Two edges are \emph{homotopic} if they belong to a common trivial block (note that this matches the definition given in the introduction). A block is (inclusionwise) \emph{minimal} if it does not contain any block except itself. We remark that the notion of blocks is similar to the notions of hurdles of~\cite{hannenhalli1999transforming}, and of unoriented components in~\cite{bergeron2004reversal}. 

From the point of view of embedding schemes, a block corresponds to a collection of edges defining an orientable subscheme, as defined in the introduction. Indeed, by Lemma~\ref{curves2} the condition of containing exactly all integers between the positive integer values of the two frames $\pi_i$ and $\pi_j$, together with the frames having the same sign, corresponds to the closed curve $\pi_i\cdot \pi_j$ being a separating curve. The condition on all the edges in the block having the same sign implies that all closed curves in the block are two-sided and therefore the frames separate an orientable portion.

Recall from the introduction that \emph{portions} are defined by (i) cutting a $2$-vertex embedding scheme $G$ along all edges belonging to separating curves, yielding connected components of the surface, (ii) merging together adjacent non-orientable components and (iii) capping off boundaries with disks. These portions are naturally arranged in a tree.  

To describe the portions tree, we first consider the connected components before the merging at step (ii) takes place.  Define a relation between edges, $e\sim e'$ if $e \cdot e'$ is a separating curve. It follows from Lemma~\ref{curves2} that this is an equivalence relation. Moreover, by the same lemma, for each of these equivalence classes, the restrictions of the sc-permutation $\pi$ and the sc-permutation $I$ to elements of that class are identical. 

We now define a bipartite graph $T'$, that has one node for each connected component appearing after step (i) above, and one node for each equivalence class described in the previous paragraph. A node in $T'$ corresponding to a connected component $C$ is connected to a node in $T'$ corresponding to an equivalence class $Q$ if a pair of edges on the boundary of $C$ forming a separating curve belongs to $Q$. It is immediate that $T'$ is a tree since every edge of $T'$ corresponds to a separating curve of $G$, and removing it separates $T'$.

Finally we define the \emph{portions tree} $T$ to be a quotient of $T'$:  we merge any two nodes of $T'$ corresponding to non-orientable connected components that share an edge. This implies that they are at distance two in $T'$, and thus this merging preserves the tree structure. Furthermore, if $v$ is a node corresponding to an equivalence class that has degree $2$ in $T'$ after this merging, connecting a node $u$ to a node $w$, we erase $v$ and connect directly $u$ to $w$. The result of these operations is the portions tree $T$. Each of the portions that we had previously defined is a node of $T$, and there are some additional nodes corresponding to equivalence classes of edges connecting these portions together. We say that an edge of $G$ is a \emph{boundary edge} if it is involved in an edge of the portions tree, i.e., either if it is contained in a separating curve between two different vertices of the portions tree, or if its equivalence class is present in the portion tree. We refer to Figure~\ref{portion tree} for an illustration.

\begin{remark}\label{spqr}
This structure is similar to the construction of SPQR-trees~\cite{di1989incremental,di1990line}, which are the trees describing the structure of $3$-connected components in a $2$-connected graph. We chose the above description to be self-contained, but the connection is as follows. Let us define a graph $H$ to be a graph that has as vertex set the edge set of $G$, and as edge set the pairs of edges incident to a common face, i.e., $V(H)=E(G)$, and $(e,e') \in E(H)$ if $e$ and $e'$ are incident to a common face in $G$. Observe that since no edge in $G$ can separate a surface by itself, the graph $H$ is $2$-connected.

The connected components obtained after cutting along all edges corresponding to separating curves in $G$ correspond to the $3$-connected components of $H$ (the $R$ nodes of the SPQR tree), which are similarly obtained after cutting along all vertices belonging to $2$-vertex cuts. The nodes corresponding to equivalence classes of degree at least three are the $S$ nodes in the SPQR tree. Since a separating curve is adjacent to exactly two components, the $P$ nodes in the SPQR tree have degree exactly two and thus it is natural to dissolve them and connect directly their endpoints. Then, the portions tree is obtained by merging non-orientable $R$ nodes which are either adjacent in the SPQR tree, or adjacent to two consecutive edges of an $S$ nodes (this is the case where they share exactly one edge). See Figure~\ref{portion tree} for an illustration. 
\end{remark}

A portion is \emph{orientable} if the subsurface that it corresponds to is orientable, and \emph{non-orientable} otherwise. Note that minimal blocks are orientable portions which are leaves in the portions tree. 

\begin{figure}[H]
    \centering
    \includegraphics[width=\textwidth]{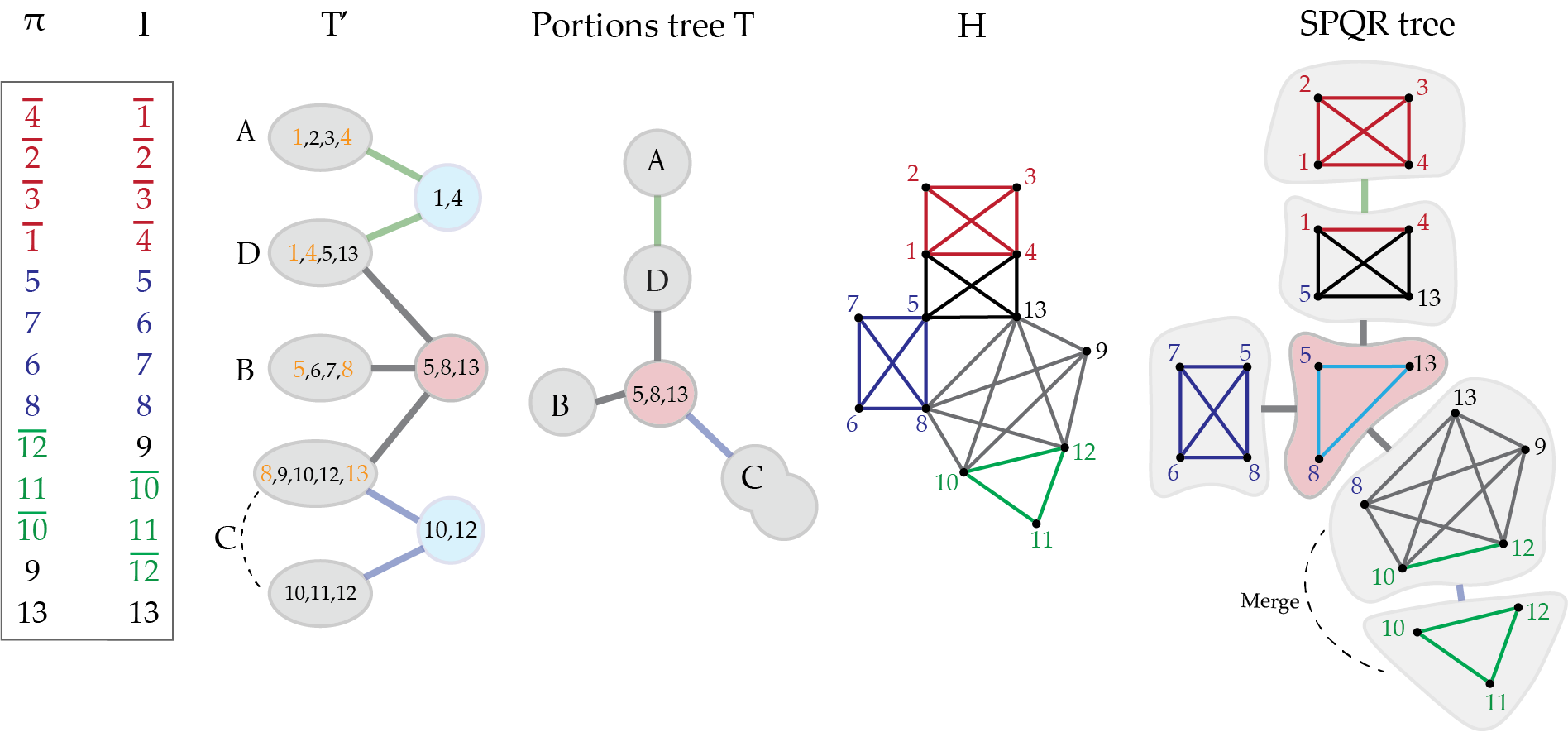}
    \caption{The embedding scheme corresponding to $\pi$, the bipartite graph $T'$, in which left nodes correspond to connected components and the right nodes correspond to equivalence classes of edges and the portions tree $T$ in which two of the right vertices in $T'$ have been erased. Note that the boundary edges are $1,4,5,8,13$ colored in yellow. The two figures at the right depict the graph $H$ and the corresponding SPQR tree that gives the same construction for the portions tree $T$ by Remark~\ref{spqr}. Note that the gray nodes correspond to $R$ nodes and the pink node corresponds to an $S$ node.}
    \label{portion tree}
\end{figure}

Finally, recall from the introduction that each portion can be \emph{cleaned}, by identifying all homotopic edges. We say that a $2$-vertex embedding scheme is \emph{reduced} if its portions tree consists of a single vertex, and it is immediate to see that the embedding scheme corresponding to a cleaned portion is reduced.

\subsection{Cross-cap drawings}

 Recall from the introduction that a cross-cap drawing of an embedded graph $G$ is a map $\phi': G \rightarrow (S^2\setminus g\tiny{\bigotimes})/\sim$ such that there exists a homeomorphism $f\colon N_g \to (S^2\setminus g\tiny{\bigotimes})/\sim$, such that $f(\phi(G))=\phi'(G)$, where $\phi$ is the embedding of $G$. Note that this is also well-defined for non-cellular embeddings. If the embedding $G$ is cellular, by Euler's formula, the minimum number of cross-caps for a cross-cap drawing of $G$ coincides with the Euler genus.

 The next lemma allows us to recognize types of cycles in a cross-cap drawing.

\begin{lemma}\label{separating}
   For any cross-cap drawing of a non-orientable embedding scheme:
   
   \begin{enumerate}
        \item A cycle is one-sided (two-sided) if and only if it enters an odd (even) number of cross-caps. 
   
       \item A cycle is orienting (separating)  if and only if it enters each cross-cap an odd (even) number of times. 
    
   \end{enumerate}
   
\end{lemma}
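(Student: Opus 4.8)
The plan is to prove both statements of Lemma~\ref{separating} by relating the combinatorial side-classification of closed curves (Section~\ref{preliminaries}: one-sided vs.\ two-sided according to the product of signatures, and orienting vs.\ separating according to the result of cutting) to the purely local behavior of a curve as it passes through cross-caps. The key observation is that a cross-cap is a standard model of a M\"obius band: a curve that enters and exits a single cross-cap once has its local orientation reversed, exactly like traversing a $-1$ signature edge. I would first set up this dictionary carefully, explaining that in a cross-cap drawing each crossing point carries an antipodal identification on the boundary of a tiny disk, so that passing through it an odd number of times flips the transported orientation and passing through an even number of times preserves it.

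For statement~(1), I would argue that the signature of a closed curve $\gamma$ in the embedding scheme is, by the very definition of a cross-cap drawing given in the preceding paragraph, the product over its edges of $(-1)^{(\text{number of cross-caps that edge enters})}$. Hence $\gamma$ is one-sided (product $=-1$) precisely when the total number of cross-cap passages along $\gamma$ is odd, and two-sided (product $=+1$) when it is even. This is essentially a restatement of the definition and should be routine; the only care needed is to confirm that the total parity, summed over all edges, is what governs whether a neighborhood of $\gamma$ is a M\"obius band or an annulus.

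For statement~(2), the content is finer: it is not the total parity but the parity \emph{per cross-cap} that matters. The plan is to consider cutting the surface along $\gamma$ and to analyze, cross-cap by cross-cap, how the cut interacts with the antipodal identification. If $\gamma$ passes through a given cross-cap an even number of times, the cut arcs pair up across the antipodal map and the local M\"obius structure is preserved on one side; if $\gamma$ passes through it an odd number of times, the cut ``opens up'' that cross-cap and changes the local orientability. I would make this precise by recalling (from Lemma~\ref{curves}, Lemma~\ref{curves2}, and the alternation/enclosure criterion of~\cite{fuladi2022short}) that orienting means cutting yields a connected orientable surface while separating means cutting disconnects the surface, and then checking that odd-per-cross-cap passages are exactly the configurations that destroy all non-orientability (orienting) whereas even-per-cross-cap passages are exactly those that split off a closed piece (separating). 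The main obstacle I expect is handling the bookkeeping when a single cross-cap is entered several times by $\gamma$ and possibly by other edges as well: one must verify that only the self-intersections of $\gamma$ at that cross-cap, and their parity, determine the local contribution, independently of the other strands. I would isolate this as the crux and prove it by a local homological computation in a neighborhood of each cross-cap, reducing the global statement to a sum of independent local parities.
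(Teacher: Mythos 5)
First, note that the paper does not prove Lemma~\ref{separating} at all: it defers entirely to~\cite[Lemmas~3 and~4]{JGAA-580}, so your attempt has to stand as a self-contained proof. Your part~(1) is essentially fine and is the standard argument: by the definition of a cross-cap drawing of a scheme, each edge's signature records the parity of its cross-cap passages, and each passage reverses the transported orientation, so one-sidedness is equivalent to an odd total number of passages. Two small points you should still patch: the definition allows the drawing to realize a \emph{flipped} version of the scheme, so you must observe that the product of signatures along a closed curve is flip-invariant (each visit to a flipped vertex inverts an even number of signatures along the curve); and ``enters an odd number of cross-caps'' must be read with multiplicity, which your reading does but the statement's wording does not force.

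Part~(2), however, has a genuine gap. Being separating or orienting is a \emph{global} property of the curve --- whether the cut surface is disconnected, or connected and orientable --- and no cross-cap-by-cross-cap local analysis can decide it: your stated plan to ``check that odd-per-cross-cap passages are exactly the configurations that destroy all non-orientability whereas even-per-cross-cap passages are exactly those that split off a closed piece'' is a restatement of the lemma, not a method for proving it. What is actually needed, and what your sketch never identifies, is the following chain. (i) The local-to-global step: the vector of per-cross-cap parities of $\gamma$ computes its class in $H_1(N_g;\mathbb{Z}_2)\cong\mathbb{Z}_2^g$ with respect to the basis given by the cross-cap cores $x_1,\dots,x_g$ (this uses that the complement of the cross-caps is planar). (ii) Two global facts about a \emph{simple} closed curve $\gamma$ on a closed surface (a cycle of an embedded graph is simple): $\gamma$ is separating if and only if $[\gamma]=0$ over $\mathbb{Z}_2$; and the surface cut along $\gamma$ is orientable if and only if $[\gamma]$ is Poincar\'e dual to $w_1$, which in this basis means the all-ones vector, since $w_1(x_i)=1$ and $x_i\cdot x_j=\delta_{ij}$. (iii) Connectedness in the orienting case: an all-odd parity vector is nonzero (as $g\geq 1$), so such a $\gamma$ is non-separating and the cut surface is connected. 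Without (ii) in particular, the ``only if'' directions have no support. Finally, your appeal to Lemma~\ref{curves} and Lemma~\ref{curves2} is misplaced: those are wedge criteria specific to $2$-vertex schemes, proved independently and used \emph{alongside} Lemma~\ref{separating} in the proof of Theorem~\ref{f}; invoking them here both adds nothing (the definitions of orienting and separating come from Section~\ref{preliminaries}) and would wrongly restrict a lemma about arbitrary non-orientable embedding schemes to the $2$-vertex case.
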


    We refer to~\cite[Lemma~3]{JGAA-580} and~\cite[Lemma~4]{JGAA-580} for proofs.

The definition of cross-cap drawing above requires the underlying surface to be orientable and thus does not apply to orientable embedding schemes. We extend it to orientable embedding schemes of positive Euler genus as follows. A \emph{non-orientable extension} of an orientable embedding scheme $G$ is a (non-cellular!) embedding obtained by adding exactly one cross-cap in one of the faces of $G$. This choice is in general not unique. A cross-cap drawing of an orientable embedding $G$ is a cross-cap drawing of one of its non-orientable extensions. Such a cross-cap drawing has $eg(G)+1$ cross-caps. 

Accordingly, when $G$ is an orientable embedding scheme with non-zero Euler genus, we define its non-orientable genus to be $g(G):=eg(G)+1$.  As for the Euler genus, this definition and notation is also used for sc-permutations. As in the non-orientable case, a cross-cap drawing is \emph{perfect} if every edge enters every cross-cap at most once. 

\subsection{Sc-reversal distance and monotone cross-cap drawings}\label{cross-caps}

Signed permutations model genomes with a single chromosome in computational biology where they come endowed with the reversal distance. Since our focus is on sc-permutations, we directly work throughout this article with a cyclic analogue of the classical theory. 
For an sc-permutation $\pi$, the \emph{signed cyclic reversal} (sc-reversal) of the interval $(\pi_i,\pi_j)$ acts on $\pi$ by reversing the order of elements in the interval as well as their signs. Recall that $I$ denotes the sc-permutation $(1, \ldots, n)$. The sc-reversal distance $d^c(\pi,I)$ is the smallest number $d$ such that there exists a sequence $\{\pi^1,\pi^2, \ldots, \pi^{d+1}\}$ such that $\pi^1=\pi$ or its flip, $\pi^{d+1}=I$ or its flip and for each $i$ in $[1,d]$, $\pi^i$ and $\pi^{i+1}$ differ by an sc-reversal. We call such a (not necessarily minimizing) sequence, a \emph{path} of sc-permutations. Note that if $\pi^{d+1}$ is the flip of $I$, by applying a flip on each sc-permutation in the path, we can use the same set of sc-reversals to go from the flip of $\pi^1$ to $I$ so we focus in the case $\pi^{d+1}=I$ in this paragraph.

To any path of sc-permutations $\{\pi^1,\pi^2,\ldots, \pi^{d+1}\}$ we can associate a cross-cap drawing. The construction is very similar to \emph{wiring diagrams} of \emph{allowable sequences of permutations}, the only difference is that we work on a cylinder except of working on a strip. We now explain this construction, see the picture in Figure~\ref{trajectory}.

In a path of sc-permutations,
$\pi^1,\pi^2, \ldots, \pi^{d+1}=I$, consecutive permutations are related by the action of an sc-reversal, which we denote by $r_i$, i.e., $\pi^i \cdot r_i =\pi^{i+1}$.

Each sc-reversal is a (non-cyclic) permutation, and we write 
\[R_k=r_{d-k}r_{d-k+1}\ldots r_{d},\]
and think of $R_k(j)$ as the ``position" of the edge $j$ in $\pi^k$. This makes sense if we think of $\pi^{d+1}=I$ as being written $(1, \ldots, n)$ in cycle notation, and then we have $R_0(j)=j$, and thus the position of the edge $j$ in $\pi^{d+1}=I$ is simply $j$, as it is in $(1, \ldots, n)$. To make this precise we work in the cylinder $[-1,d+2] \times [0,-n] / \sim$, where for all $x$ in $[-1,d+2]$, $(x,0)\sim (x,-n)$, and the indices in the second coordinate are understood modulo $n$. Then, we quotient each boundary to a point, yielding two distinct points: an initial vertex at $(-1,-n/2)$ and a terminal vertex at $(d+2,-n/2)$. Now  edges will be $x$-monotone piece-wise linear curves between these two vertices. The edge $j$ first emanates from $(-1,-n/2)$ and connects to $(0,R_1(j))$ with a straight-line. Then it goes to $(i,R_i(j))$ for each $i$ in $[1,d+1]$ and finally it connects $(d+1,j)$ to the terminal vertex at $(d+2,-n/2)$ with a straight-line between consecutive integer $x$-coordinates. The path taken by an edge $j$ between $(i,R_i(j))$ and $(i+1,-R_{i+1}(j))$ depends on the sc-reversal between these two sc-permutations. There are two possible choices that wind strictly less than once around the cylinder: either a straight line in the rectangle $[-1,d+2] \times [0,-n]$ or a path that wraps around the cylinder. We use the following rule: if the sc-reversal contains $n$ and $1$, then the edge is drawn as wrapping around the cylinder, otherwise it is drawn as a straight line. For each sc-reversal we can draw all the edges involved in the reversal with a single (degenerate) crossing, and finally we put a cross-cap at the crossing. Figure~\ref{fig:cylinder} illustrates the minimizing path between $\pi=(\bar{6},2,\bar{5}, 4,\bar{3},\bar{1})$ and the identity.  \begin{align*}
    \pi^1&=(\pi_1^1,\pi^1_2,\pi_3^1, \pi_4^1, \pi_5^1,\pi^1_6)=(\bar{6},2,\bar{5}, 4,\bar{3},\bar{1})\\ 
    \pi^2&=(\pi_1^2,\pi^2_2,\pi_3^2, \pi_4^2, \pi_5^2,\pi^2_6)=(\bar{6},2,3,\bar{4},5,\bar{1})\\
    \pi^3&=(\pi_1^3,\pi^3_2,\pi_3^3, \pi_4^3, \pi_5^3,\pi^3_6)=(1,2,3,\bar{4},5,6)\\
    \pi^4&=(\pi_1^4,\pi^4_2,\pi_3^4, \pi_4^4, \pi_5^4,\pi^4_6)=(1,2,3,4,5,6)=I.
    \end{align*}

\begin{figure}
    \centering
    \includegraphics[width=.45\textwidth]{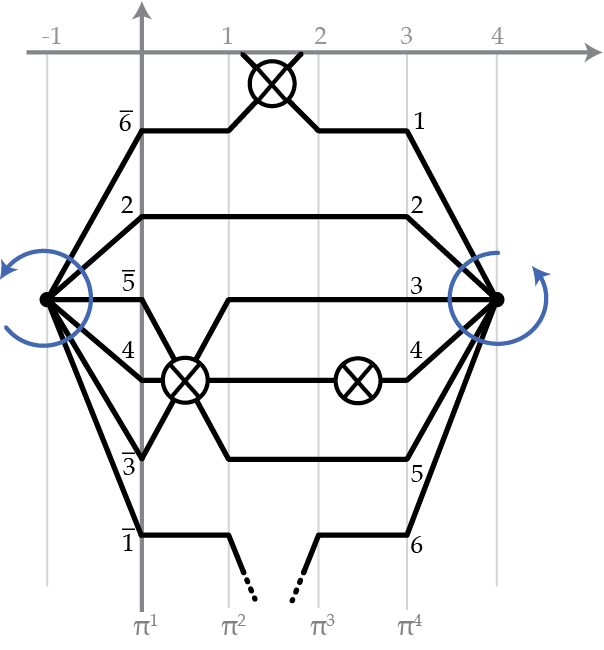}
    \caption{A minimizing path of sc-permutations.}
    \label{fig:cylinder}
\end{figure}

 Our convention is that in all our drawings, the orientation is counter-clockwise. Therefore, we obtain a cross-cap drawing for a scheme where the permutation around the right vertex is $I$ and around the left vertex is the mirror of $\pi$, as required.

 In the remainder of this article, we often omit vertices $(-1,-n/2)$ and $(d+2,-n/2)$ in our illustrations as they play no role, and we always have that $\pi^{d+1}$ is $I$.  For clarity, in all our figures, we write the sc-permutations on the left and the right. When writing the sc-permutation $I$, even though all of the signs of its elements are positive, we sometimes add overlines indicating the signs of the corresponding elements in $\pi^1$: the point is that it allows the reader to see at a glance which edges in the embedding scheme are positive or negative. Also, the edges are not always drawn exactly x-monotone for cosmetic and readability purposes.

\section{The counterexample}\label{counter example}
In this section, we provide a family of 2-vertex embedding schemes that do not admit a perfect cross-cap drawing. Then we provide an explicit pseudo-triangulation of $N_5$ (depicted in Figure~\ref{G}), disproving Conjecture~\ref{main}.

\begin{figure}[t]
    \centering

     \begin{tikzpicture}[scale=0.25]
   
    %left
     \draw[line width=.05cm,red](-16,0).. controls(-12,10) and (-9,14.5) .. (-4,15) node [midway, sloped]{\textbf{$>$}}node[right] {\normalsize{$p^{'}$}};
     \draw[line width=.04cm,gray](-16,0).. controls(-12,8.5) and (-9,12) .. (-6,13)node [midway, sloped]{\textbf{$>$}}node[right] {\normalsize{$\bar{o}_1$}};
     \draw[line width=.05cm,red](-16,0).. controls(-12,6.5) and (-9,11.7) .. (-4,11)node [midway, sloped]{\textbf{$>$}}node[right] {\normalsize{$p_2$}};
     \draw[line width=.04cm,gray](-16,0).. controls(-12,4.5) and (-9,9) .. (-6,9)node [midway, sloped]{\textbf{$>$}}node[right] {\normalsize{$\bar{o}_2$}};
     \draw[line width=.05cm,red](-16,0).. controls(-12,3.5) and (-9,7.8) .. (-4,7)node [midway, sloped]{\textbf{$>$}}node[right] {\normalsize{$p_1$}};
     \draw[line width=.04cm,gray](-16,0).. controls(-12,2.5) and (-9,5) .. (-6,5)node [midway, sloped]{\textbf{$>$}}node[right] {\normalsize{$\bar{o}_3$}};
     \draw[line width=.05cm,red](-16,0).. controls(-12,1.5) and (-9,3) .. (-4,3)node [midway, sloped]{\textbf{$>$}}node[right] {\normalsize{$p^{''}$}};
     \draw[line width=.04cm,gray](-16,0).. controls(-10,1) .. (-6,1)node [midway, sloped]{\textbf{$>$}}node[right] {\normalsize{$\bar{b}_2$}};
     \draw[line width=.05cm,blue](-16,0).. controls(-10,-1) .. (-4,-1)node [midway, sloped]{\textbf{$>$}}node[right] {\normalsize{$\bar{n}^{'}$}};
     \draw[line width=.04cm,gray](-16,0).. controls(-12,-1.5) and (-9,-3) .. (-6,-3)node [midway, sloped]{\textbf{$>$}}node[right] {\normalsize{$\bar{e}_1$}};
     \draw[line width=.05cm,blue](-16,0).. controls(-12,-2.5) and (-9,-6) .. (-4,-5)node [midway, sloped]{\textbf{$>$}}node[right] {\normalsize{$\bar{n}_1$}};
     \draw[line width=.04cm,gray](-16,0).. controls(-12,-3.5) and (-9,-7) .. (-6,-7)node [midway, sloped]{\textbf{$>$}}node[right] {\normalsize{$\bar{e}_2$}};
     \draw[line width=.05cm,blue](-16,0).. controls(-12,-4.5) and (-9,-9.6) .. (-4,-9)node [midway, sloped]{\textbf{$>$}}node[right] {\normalsize{$\bar{n}_2$}};
     \draw[line width=.04cm,gray](-16,0).. controls(-12,-6.5) and (-9,-11) .. (-6,-11)node [midway, sloped]{\textbf{$>$}}node[right] {\normalsize{$\bar{e}_3$}};
     \draw[line width=.05cm,blue](-16,0).. controls(-12,-8.5) and (-9,-13.3) .. (-4,-13)node [midway, sloped]{\textbf{$>$}}node[right] {\normalsize{$\bar{n}^{''}$}}; 
     \draw[line width=.04cm,gray](-16,0).. controls(-12,-10) and (-9,-14.5) .. (-6,-15)node [midway, sloped]{\textbf{$>$}}node[right] {\normalsize{$\bar{b}_1$}};        
    
     %right
     \draw[line width=.05cm,red](16,0).. controls(12,10) and (9,14.5) .. (4,15)node [midway, sloped]{\textbf{$>$}}node[left] {\normalsize{$p^{'}$}};
     \draw[line width=.04cm,gray](16,0).. controls(12,8.5) and (9,12) .. (6,13)node [midway, sloped]{\textbf{$>$}}node[left] {\normalsize{$\bar{l}_1$}};
     \draw[line width=.05cm,red](16,0).. controls(12,6.5) and (9,11.7) .. (4,11)node [midway, sloped]{\textbf{$>$}}node[left] {\normalsize{$p_1$}};
     \draw[line width=.04cm,gray](16,0).. controls(12,4.5) and (9,9) .. (6,9)node [midway, sloped]{\textbf{$>$}}node[left] {\normalsize{$\bar{l}_2$}};
     \draw[line width=.05cm,red](16,0).. controls(12,3.5) and (9,7.8) .. (4,7)node [midway, sloped]{\textbf{$>$}}node[left] {\normalsize{$p_2$}};
     \draw[line width=.04cm,gray](16,0).. controls(12,2.5) and (9,5) .. (6,5)node [midway, sloped]{\textbf{$>$}}node[left] {\normalsize{$\bar{l}_3$}};
     \draw[line width=.05cm,red](16,0).. controls(12,1.5) and (9,3) .. (4,3)node [midway, sloped]{\textbf{$>$}}node[left] {\normalsize{$p^{''}$}};
     \draw[line width=.04cm,gray](16,0).. controls(10,1) .. (6,1)node [midway, sloped]{\textbf{$>$}}node[left] {\normalsize{$a_2$}};
     \draw[line width=.05cm,blue](16,0).. controls(10,-1) .. (4,-1)node [midway, sloped]{\textbf{$>$}}node[left] {\normalsize{$\bar{n}^{''}$}};
     \draw[line width=.04cm,gray](16,0).. controls(12,-1.5) and (9,-3) .. (6,-3)node [midway, sloped]{\textbf{$>$}}node[left] {\normalsize{$m_1$}};
     \draw[line width=.05cm,blue](16,0).. controls(12,-2.5) and (9,-6) .. (4,-5)node [midway, sloped]{\textbf{$>$}}node[left] {\normalsize{$\bar{n}_1$}};
     \draw[line width=.04cm,gray](16,0).. controls(12,-3.5) and (9,-7) .. (6,-7)node [midway, sloped]{\textbf{$>$}}node[left] {\normalsize{$m_2$}};
     \draw[line width=.05cm,blue](16,0).. controls(12,-4.5) and (9,-9.6) .. (4,-9)node [midway, sloped]{\textbf{$>$}}node[left] {\normalsize{$\bar{n}_2$}};
     \draw[line width=.04cm,gray](16,0).. controls(12,-6.5) and (9,-11) .. (6,-11)node [midway, sloped]{\textbf{$>$}}node[left] {\normalsize{$m_3$}};
     \draw[line width=.05cm,blue](16,0).. controls(12,-8.5) and (09,-13.3) .. (4,-13)node [midway, sloped]{\textbf{$>$}}node[left] {\normalsize{$\bar{n}^{'}$}}; 
     \draw[line width=.04cm,gray](16,0).. controls(12,-10) and (9,-14.5) .. (6,-15)node [midway, sloped]{\textbf{$>$}}node[left] {\normalsize{$\bar{a}_1$}};     
          
     %triangulation
    %3 other vertices
    %left
    \draw[line width=.04cm,gray] (-13,-22)-- (-13,-18.5)node [midway, sloped]{\textbf{$<$}}node[above] {\normalsize{$\bar{o}_3$}};  
    
    \draw[line width=.04cm,gray] (-13,-22)-- (-9.5,-20)node [midway, sloped]{\textbf{$<$}}node[right] {\normalsize{$\bar{l}_3$}};   
    \draw[line width=.04cm,gray] (-13,-22)-- (-16.5,-20)node [midway, sloped]{\textbf{$<$}}node[left] {\normalsize{$\bar{l}_2$}}; 
    
    \draw[line width=.04cm,gray] (-13,-22)-- (-9.5,-24.5)node [midway, sloped]{\textbf{$<$}}node[right] {\normalsize{$\bar{o}_2$}}; 
    \draw[line width=.04cm,gray] (-13,-22)-- (-16.5,-24.5)node [midway, sloped]{\textbf{$>$}}node[left] {\normalsize{$\bar{o}_1$}}; 
    
    \draw[line width=.04cm,gray] (-13,-22)-- (-13,-25.5)node [midway, sloped]{\textbf{$>$}}node[below] {\normalsize{$\bar{l}_1$}};

    %middle
    \draw[line width=.04cm,gray] (0,-22)-- (0,-18.5)node [midway, sloped]{\textbf{$<$}}node[above] {\normalsize{$\bar{e}_3$}}; 
    
    \draw[line width=.04cm,gray] (0,-22)-- (3.5,-20)node [midway, sloped]{\textbf{$<$}}node[right] {\normalsize{$m_1$}};   
    \draw[line width=.04cm,gray] (0,-22)-- (-3.5,-20)node [midway, sloped]{\textbf{$<$}}node[left] {\normalsize{$m_2$}};

    \draw[line width=.04cm,gray] (0,-22)-- (-3.5,-24.5)node [midway, sloped]{\textbf{$>$}}node[left] {\normalsize{$\bar{e}_1$}}; 
    \draw[line width=.04cm,gray] (0,-22)-- (3.5,-24.5)node [midway, sloped]{\textbf{$<$}}node[right] {\normalsize{$\bar{e}_2$}};

    \draw[line width=.04cm,gray] (0,-22)-- (0,-25.5)node [midway, sloped]{\textbf{$>$}}node[below] {\normalsize{$m_3$}};

    %right
    \draw[line width=.04cm,gray] (13,-21.5)-- (13,-24.5)node [midway, sloped]{\textbf{$>$}}node[below] {\normalsize{$\bar{a}_2$}};   
    \draw[line width=.04cm,gray] (13,-21.5)-- (13,-18.5)node [midway, sloped]{\textbf{$>$}}node[above] {\normalsize{$\bar{a}_1$}}; 
    \draw[line width=.04cm,gray] (13,-21.5)-- (16,-21.5)node [midway, sloped]{\textbf{$<$}}node[right] {\normalsize{$b_2$}}; 
    \draw[line width=.04cm,gray] (13,-21.5)-- (10,-21.5)node [midway, sloped]{\textbf{$>$}}node[left] {\normalsize{$b_1$}};    
                    
    \filldraw[black] (16,0) circle (5pt); 
         \filldraw[black] (-16,0) circle (5pt);
         \filldraw[black] (-13,-22) circle (5pt);
         \filldraw[black] (0,-22) circle (5pt);
         \filldraw[black] (13,-21.5) circle (5pt);%node[anchor=west]{Intersection point};
    %inside frames
   
     \end{tikzpicture}
     
    \caption{A pseudo-triangulation of $N_5$ admitting no perfect cross-cap drawing. The edges colored in blue and red depict the $2$-vertex embedding scheme described in Theorem~\ref{f}.}
    \label{G}
\end{figure}

\begin{remark}\label{closed}
    If a loopless $2$-vertex embedding scheme $G$ has one positive and one negative block, then so does any equivalent scheme under flips, therefore we do not need to account for the possible flips in the proof of Theorem~\ref{f}. 
\end{remark}

In order to prove Theorem~\ref{f}, we rely on Remark~\ref{closed} and Lemma~\ref{genus}.

\begin{restatable}
{lemma}{genus}\label{genus}
    Let $G$ be a loopless $2$-vertex embedding scheme that consists of a non-trivial positive block $A$ and a non-trivial negative block $B$. Let us denote by $G_A$ and $G_B$ the subschemes of $G$ that contain the edges in $A$ and $B$, respectively. Then $g(G)=g(G_A)+g(G_B)-1$.
\end{restatable}

The proof follows directly from the Euler characteristic. 

\begin{proof}
Assume that the positive block has edges labelled $A=\{e_1,\ldots, e_k\}$ and the edges of the negative block are $B=\{\overline{e}_{k+1}, \overline{e}_{k+2}, \ldots \overline{e}_{k+l}\}$. Notice that $f(G)=f(G_A)+f(G_B)-1$, indeed the face $e_1,e_k$ and the face $\overline{e}_{k+1},\overline{e}_{k+l}$ are the outer faces of $G_A$ and $G_B$, and they merge to become the face $e_1,\overline{e}_{k+l},e_k,\overline{e}_{k+1}$. Hence by Euler's formula $eg(G)=eg(G_A)+eg(G_B)+1$. Since the blocks $A$ and $B$ are non-trivial, $G_A$ and $G_B$ have non-zero Euler genus, and thus we have $g(G_A)=eg(G_A)+1$ and $g(G_B)=eg(G_B)+1$. On the other hand, a cycle in $G$ that contains one edge from $A$ and one edge from $B$ is one-sided, therefore $G$ is a non-orientable scheme, hence  $eg(G)=g(G)$.
All in all we can conclude 
\begin{align*}g(G)=eg(G)&= eg(G_A)+eg(G_B)+1 \\
&=g(G_A)-1+g(G_B)-1+1\\ &=g(G_A)+g(G_B)-1\end{align*} as claimed. 
\end{proof}

We now have all the tools to prove Theorem~\ref{f}, and refer to Figure~\ref{block example} for an example to help follow the proof.

\begin{proof}[Proof of Theorem~\ref{f}]

Let $G$ be a concatenation of a positive block $A$ with frames $p^{'}$ and $p^{''}$ and a negative block $B$ with frames $n^{'}$ and $n^{''}$. Let us assume that  $\phi$ is a perfect cross-cap drawing of $G$.
From Lemma~\ref{curves}
 we derive that $p^{'}\cdot n^{'}$ and $p^{'}\cdot n^{''}$ are orienting cycles, hence  by Lemma~\ref{separating} each of them enters each cross-cap once. 
Lemma~\ref{curves2} implies that $n^{'}\cdot n^{''}$ is separating. Therefore by Lemma~\ref{separating}, $n^{'}$ and $n^{''}$ enter the same cross-caps and do not enter any cross-cap that $p^{'}$ enters. Similarly,  $p^{'}\cdot p^{''}$ is separating and hence they enter the same cross-caps and no cross-cap that $n^{'}$ and $n^{''}$ enter. Then $A$ is drawn with at least $g(G_A)$ cross-caps and $B$ is drawn with at least $g(G_B)$ cross-caps that are disjoint from the cross-caps that $A$ entered. But by Lemma~\ref{genus} the non-orientable genus of $G$ is $g(G_A)+g(G_B)-1$. Therefore, there are not enough cross-caps available to draw both $A$ and $B$. This concludes.  
\end{proof}

Corollary~\ref{maint} follows at once as we can always add edges and vertices to a scheme to triangulate it without adding loops nor changing its genus, and any perfect cross-cap drawing of the triangulation restricts to a cross-cap drawing of the original embedding scheme. We provide in Figure~\ref{G} an example of such a pseudo-triangulation.

\section{Topology of the reversal distance}\label{reversal distance to genus}

In this section, we recall some well-known results from the genomics rearrangements literature\footnote{While the literature primarily deals with sorting standard signed permutations, here we are sorting cyclic signed permutations. 
In our description, following Section~\ref{cross-caps}, we directly translate their techniques to this cyclic setting, which involves some changes.}, which we interpret in the language of embedding schemes (see also~\cite{huang2017topological} for an alternate topological interpretation of these arguments).

\subsection{The Bafna-Pevzner inequality from Euler's formula}

Let $\pi$ be an sc-permutation, which, as explained in Section~\ref{preliminaries}, we think of as a 2-vertex embedding scheme, which requires $g(\pi)$ cross-caps to be drawn. We can compute its number of faces, which we denote by $f(\pi)$ and the number of elements in the sc-permutation corresponds to the number of edges in the scheme, which we denote by $e(\pi)$. Then Euler's formula reads 
$2-eg(\pi)=2-e(\pi)+f(\pi)$ which simplifies to 
$eg(\pi)=e(\pi)-f(\pi)$. By the construction in Section~\ref{cross-caps}, we thus have $d^c(\pi,id)\geq e(\pi)-f(\pi)$.

 A very similar inequality was first discovered by Bafna and Pevzner~\cite[Theorem~2]{bafna1996genome} without reference to embeddings. This inequality is sometimes strict, this is for instance the case for the embedding schemes of Theorem~\ref{counter example} and the example in Figure~\ref{counter example}.

The starting idea of the Hannenhalli--Pevzner (HP) algorithm is to identify intervals in an sc-permutation where applying an sc-reversal is clearly making progress. Given an sc-permutation $\pi$, we call a pair of consecutive integers $i$ and $i+1$ (modulo $n$) \emph{reversible} if they have opposite signs in $\pi$ (this is called an  \emph{oriented pair} in~\cite{bergeron2004reversal}). For a given reversible pair there exist two sc-reversals $\sigma, \sigma'$ such that $i$ and $i+1$ are homotopic in $\pi \cdot \sigma$ and in $\pi \cdot \sigma'$. These two reversals are equivalent in the sense that the two sc-permutations that they yield differ by a flip. For example consider the sc-permutation $(1,3,\bar{2},4)$. The integers $1$ and $2$ are reversible. To make $1$ and $2$ homotopic we can reverse the interval $(3,\bar{2})$ or the interval $(4,1)$. Applying these reversals we obtain $(1,2,\bar{3},4)$ and $(\bar{4},3,\bar{2},\bar{1})$, respectively, in which $1$ and $2$ are homotopic.

The following lemma follows from an Euler characteristic argument.

\begin{restatable}
 {lemma}{reversible}\label{reversible} If $(i,i+1)$ is a reversible pair in the sc-permutation $\pi$ and $\sigma$ is an sc-reversal that turns them into homotopic edges in $\pi \cdot \sigma$, then $eg(\pi\cdot \sigma)=eg(\pi)-1$.
   
\end{restatable}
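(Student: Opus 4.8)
The plan is to prove this via Euler's formula, exactly as the preceding Lemma~\ref{genus} was proved, by tracking how the number of faces changes under the reversal $\sigma$. We work in the embedding-scheme picture: $\pi$ is a 2-vertex loopless scheme, and applying a reversal $\sigma$ produces a new scheme $\pi\cdot\sigma$ on the same number of edges $e(\pi\cdot\sigma)=e(\pi)$. Since the Euler genus satisfies $eg(\pi)=e(\pi)-f(\pi)$ (from $2-eg=2-e+f$, as derived in Section~\ref{reversal distance to genus}), controlling $eg(\pi\cdot\sigma)-eg(\pi)$ reduces entirely to controlling the change in the number of faces. Concretely, the identity to establish is $f(\pi\cdot\sigma)=f(\pi)+1$, which immediately gives $eg(\pi\cdot\sigma)=e(\pi\cdot\sigma)-f(\pi\cdot\sigma)=e(\pi)-f(\pi)-1=eg(\pi)-1$, as claimed.

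First I would make precise what the reversal $\sigma$ does locally. Because $i$ and $i+1$ form a reversible (oriented) pair, they have opposite signs, and $\sigma$ is chosen so that after applying it, $i$ and $i+1$ become homotopic curves in $\pi\cdot\sigma$ — that is, they appear consecutively in increasing order with the same sign, or in reversed order both negative, per the definition of homotopic edges. The key structural observation is that homotopic edges $i,i+1$ bound a bigon (a degree-$2$ face) in $\pi\cdot\sigma$: tracing the facial walk using the signatures, the two half-edges of $i$ and of $i+1$ that face each other close up into a single face. Thus I would trace the facial cycles explicitly before and after the reversal, using the rule that one follows edges and switches sides according to their signatures. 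The effect of $\sigma$ on the rotation/signature data is a local surgery near the four endpoints of $i$ and $i+1$, and I would verify that precisely one new face is created while the rest of the facial structure is merely rerouted.

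The cleanest way to carry out the face count, and the step I expect to be the main obstacle, is to pin down exactly which facial cycles merge or split under $\sigma$. The danger is that a reversal can in principle either split one face into two (raising $f$ by one) or merge two faces into one (lowering $f$ by one), and I must rule out the merging scenario and show the reversible-pair hypothesis forces a clean split. I would argue that because $i$ and $i+1$ have opposite signs in $\pi$, the facial walk passing between them traverses them with incompatible orientations, so the surgery performed by $\sigma$ cuts a single facial cycle into two — one of which is the newly created bigon bounded by the now-homotopic $i,i+1$. This is the place where the reversible hypothesis is genuinely used, and where the local case analysis of how the signatures and cyclic orders interact lives.

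Finally, I would combine the face-count $f(\pi\cdot\sigma)=f(\pi)+1$ with the edge-count equality and Euler's formula to conclude $eg(\pi\cdot\sigma)=eg(\pi)-1$. It is worth noting that this lemma is stated for Euler genus $eg$ rather than non-orientable genus $g$, which sidesteps the orientability subtlety of Lemma~\ref{orientable scheme}; the whole argument is therefore an Euler-characteristic bookkeeping computation once the local effect of the reversal on faces is correctly identified. I would double-check the boundary/degenerate configurations — for instance, when $i$ and $i+1$ already share a face, or when the reversal interval has length one — to ensure the single-face-creation conclusion holds uniformly.
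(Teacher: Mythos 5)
Your proposal is correct and follows essentially the same route as the paper's own proof: since $e(\pi\cdot\sigma)=e(\pi)$, reduce everything to showing $f(\pi\cdot\sigma)=f(\pi)+1$, then observe that the face of $\pi$ containing $i$ and $i+1$ splits into the bigon bounded by the now-homotopic pair plus one remaining face, while all other faces are untouched. Your extra care in ruling out the face-merging scenario is a finer-grained version of the paper's terse claim that the split is clean, not a different argument.
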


\begin{proof}
Since $e(\pi)=e(\pi\cdot \sigma)$, we need to show that $f(\pi\cdot\sigma)=f(\pi)+1$. Since the edges $i$ and $i+1$ are consecutive in $I$, the edges $i$ and $i+1$ are incident to a face $a$ in $\pi$. The faces of $\pi \cdot \sigma$ can be computed by following the edges and switching sides depending on the signature. In doing so, one sees that the faces of $\pi\cdot \sigma$ are the same as the faces of $\pi$ except that the face $a$ is subdivided into two faces: one is the degree two face incident to $i$ and $i+1$ and the other face is incident to the other edges that were incident to $a$. No other face is disrupted by the sc-reversal, and thus this finishes the proof.
\end{proof}

\subsection{The HP algorithm for sc-permutations}

It is immediate to see that blocks do not contain reversible pairs of edges. More subtly, this is also the case for orientable portions, which generalize (minimal) blocks. These orientable portions will constitute the natural obstruction to applying the sc-reversals described above. Following the terminology for embedding schemes, we say that an sc-permutation is \emph{non-orientable} if it contains two elements of different signs, otherwise it is orientable. Similarly, it is \emph{reduced} if the associated embedding scheme is. \emph{Cleaning} an sc-permutation means identifying its homotopic elements, yielding a \emph{cleaned} permutation. 

Given a reversible pair $(i,i+1)$ in the sc-permutation $\pi$, let $\sigma$ be an sc-reversal that turns $i$ and $i+1$ into homotopic edges. The \emph{score} of $(i,i+1)$  is the number of reversible pairs in $\pi \cdot \sigma$. Note that the score of $(i,i+1)$ is independent of the choice of sc-reversal, since the two possible sc-reversals are equivalent. The basis of the HP algorithm, following the presentation of Bergeron~\cite{bergeron2001very} adapted to the cyclic case, is as follows.

\begin{framed}
\noindent \emph{HP algorithm:} 

\textbf{Input:} A reduced two-vertex non-orientable embedding scheme $G$.

\textbf{Output:} A perfect cross-cap drawing of $G$.

\textbf{Algorithm:} While there is a reversible pair, apply an sc-reversal on a pair of maximal score.
\end{framed}

Then the main theorem summarizing the properties of the HP algorithm is the following.

\begin{theorem}\label{HP}
If an sc-permutation $\pi$ is non-orientable and reduced, then $d^c(\pi,I)=eg(\pi)$, and the HP algorithm gives a sequence of sc-reversals of this optimal length.
\end{theorem}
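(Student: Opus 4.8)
The plan is to prove that for a non-orientable signed permutation $\pi$ with no non-trivial blocks, the reversal distance meets the Euler genus lower bound $d(\pi,id)\geq eg(\pi)$ established in Section~\ref{reversal distance to genus}, with equality witnessed by the greedy HP algorithm. The lower bound is already in hand: Euler's formula gives $eg(\pi)=e(\pi)-f(\pi)$, and Lemma~\ref{reversible} shows that a single well-chosen reversal on a reversible pair drops the Euler genus by exactly one. Thus it suffices to show that, starting from $\pi$, one can always reach the identity in exactly $eg(\pi)$ reversals, each of which reduces $eg$ by one. Since the identity has Euler genus zero, I would prove the matching upper bound by induction on $eg(\pi)$: I must show that as long as $\pi$ is not yet sorted, the maximal-score reversal chosen by the HP algorithm both reduces the Euler genus by one (which Lemma~\ref{reversible} guarantees for any reversible pair turned homotopic) \emph{and} produces a permutation that is again free of non-trivial blocks, so the induction hypothesis applies.

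The heart of the matter, therefore, is to justify two facts at each step. First, I must argue that a reduced (block-free) non-orientable permutation that is not the identity always contains a reversible pair, so the algorithm does not get stuck before reaching $id$. The contrapositive is natural in the embedding-scheme language: a permutation with no reversible pair and no non-trivial block must already be sorted. Intuitively, consecutive integers $i,i+1$ with the same sign and the right order are homotopic edges and bound a bigon; if no pair is reversible (opposite signs) and there are no blocks to trap badly-ordered same-sign runs, every pair of consecutive integers must in fact be homotopic, forcing the permutation to be the identity up to flips. I would make this precise by analyzing the possible local configurations of $i$ and $i+1$: same sign in increasing/decreasing order, or opposite signs, and showing that the non-reversible, non-block configurations chain together to the sorted permutation.

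The genuinely delicate step, and the one I expect to be the main obstacle, is the \emph{score} argument: showing that reversing the pair of \emph{maximal} score never creates a non-trivial block. A single reversal that is locally progress-making can in principle merge mis-ordered elements into a new unoriented component, and the whole point of the maximal-score rule of Bergeron~\cite{bergeron2001very} is precisely to avoid this. Here I would either invoke the established HP/Bergeron theory directly — the maximal-score (equivalently, "do not create an unoriented component") heuristic provably sorts an oriented component optimally — translated into the cyclic, embedding-scheme setting, or reconstruct the key lemma that the maximal-score reversal keeps all components oriented. Since the excerpt explicitly flags that its blocks correspond to the unoriented components of~\cite{bergeron2004reversal} and its reversible pairs to their oriented pairs, the cleanest route is to cite the validity of the maximal-score greedy strategy from~\cite{bergeron2001very,bergeron2004reversal} for the case without unoriented components, and verify only that the cyclic reformulation does not change the relevant combinatorics.

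Putting these together completes the induction: starting from a block-free non-orientable $\pi$, each HP step decreases $eg$ by one (Lemma~\ref{reversible}), keeps the permutation block-free (the score argument), and either keeps it non-orientable or reaches a sorted permutation; hence after exactly $eg(\pi)$ steps we arrive at $id$, proving $d(\pi,id)\leq eg(\pi)$ and, with the Bafna-Pevzner lower bound, equality. I would be careful about one boundary case: once a reversal sorts the last non-homotopic pair, the scheme may become orientable of Euler genus zero, i.e. the identity itself; the claim "$d=eg$" for the \emph{non-orientable} hypothesis is consistent because every intermediate non-identity permutation along the optimal path remains non-orientable until the final step lands on $id$, so Lemma~\ref{orientable scheme} is never triggered prematurely.
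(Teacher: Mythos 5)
Your proposal takes essentially the same route as the paper: the Euler-formula lower bound plus an induction in which the maximal-score reversal drops the Euler genus by one (Lemma~\ref{reversible}) while preserving non-orientability and block-freeness---which is exactly the paper's Lemma~\ref{noblock}, proved there, as you anticipate, by adapting Bergeron's score argument to the cyclic setting via an interleaving graph. The only simplification you overlook is that the existence of a reversible pair needs no configuration analysis: a non-orientable loopless $2$-vertex scheme necessarily has edges of both signs, so some pair of consecutive integers already has opposite signs.
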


This theorem follows at once from the following lemma. Recall that an sc-permutation is reduced if its component tree has only one connected component and to clean an sc-permutation we identify homotopic edges.

\begin{restatable}{lemma}{noblock}\label{noblock}
Let $\pi$ be a non-orientable and reduced sc-permutation, let $(i,i+1)$ be a reversible pair of maximal score, and let $\sigma$ be an sc-reversal that makes $i$ and $i+1$ homotopic, then either 
\begin{itemize}
\item $\pi \cdot \sigma$ is either $I$ or its flip, or 
\item $\pi \cdot \sigma$ is non-orientable and is reduced once it is cleaned.
\end{itemize}
\end{restatable}

We first show how to prove Theorem~\ref{HP} assuming the lemma, and postpone the proof of the lemma to the Section~\ref{sec:components_vs_portions}.

\begin{proof}[Proof of Theorem~\ref{HP}] The proof is by induction on $eg(\pi)$. Since by assumption $\pi$ is non-orientable, $eg(\pi)\geq 1$. 

For the initialization step, if $eg(\pi)=1$, by Lemma~\ref{reversible} a single sc-reversal transforms $\pi$ into an sc-permutation of Euler genus zero, i.e., the sc-permutation $I$ or its flip.

 For the induction step, first note that we always have $eg(\pi) \leq d^c(\pi,I)$, so we just need to prove the other inequality. Since $\pi$ is non-orientable and is reduced and $\sigma$ is a reversal of maximum score, by Lemma~\ref{reversible}, we have $eg(\pi \cdot \sigma)=eg(\pi)-1$, and by Lemma~\ref{noblock}, we have that $\pi\cdot \sigma$ is also non-orientable and is also reduced after cleaning it, i.e., identifying its homotopic edges. So we can use the induction hypothesis to obtain that $d^c(\pi\cdot \sigma, I)=eg(\pi\cdot \sigma)$. Now, since applying twice an sc-reversal undoes it, we have that $d^c(\pi,I) \leq d^c(\pi \cdot \sigma, I)+1$. We conclude by combining these three (in)equations and thus obtain a sequence of sc-reversals of optimal length after debundling the identified homotopic edges.
\end{proof}

\subsection{The interleaving graph}

The proof of Lemma~\ref{noblock} is inspired by the proof of~\cite[Theorem~1]{bergeron2001very}. There are two main differences in our setting, on one hand, the classical theory deals with standard signed permutations and we deal with signed cyclic permutations, adapting the ideas of \cite{bergeron2001very} to the cyclic setting is the content of this subsection. On the other hand, we need to translate between the portions tree and the interleaving graph. This is the content of the following subsection.

We will make heavy use of notation introduced in the preliminaries, which we recall: given an sc-permutation $\pi$, we use $(i,j)$ to denote the interval between $i$ and $j$ in $\pi$, and $[i,j]$ to denote the interval between $i$ and $j$ in $I$.

We first introduce an auxiliary graph which will be useful when analyzing the effect of sc-reversals. 

Given an sc-permutation $\pi$ on $n$ elements $\{1, \ldots, n\}$, we define its associated \emph{double permutation} $\pi^*=(\pi^*_1, \ldots \pi^*_{2n})$ to be a cyclic unsigned permutation on a set of $2n$ elements labeled $\{i^l, i^r \text{ for } 1\leq i\leq n\}$ in which $\pi^*_{2i-1}=\pi_i^l$ and $\pi^*_{2i}=\pi_i^r$ when $\pi_i$ is positive and $\pi^*_{2i-1}=\pi_i^r$ and $\pi^*_{2i}=\pi_i^l$ when $\pi_i$ is negative. See the examples in Figures ~\ref{fig:interleaving_graph}, ~\ref{fig:corners} and~\ref{fig:components_are_portions} .

Two intervals $(i^*,j^*)$ and $(k^*,\ell^*)$ in a cyclic permutation are called \emph{interleaving} if either $k^*$ belongs to $(i^*,j^*)$ and $\ell^*$ does not, or $\ell^*$ belongs to $(i^*,j^*)$ and $k^*$ does not.

We define the \emph{interleaving} graph $L_\pi$ of $\pi$ to have pairs of consecutive elements $(i,i+1)$ as vertices, and two vertices $(i,i+1)$ and $(j,j+1)$ are connected by an edge if the intervals $(i^r,i+1^l)$ and $(j^r,j+1^l)$ interleave in $\pi^*$.

We call a pair of edges $(i,j)$ a \emph{corner} if $i$ and $j$ are consecutive in the scheme induced by $\pi$, either around the vertex with sc-permutation $I$ or around the vertex with sc-permutation $\pi$. We can think of these corners topologically: such a corner is a small part of the surface in-between the two edges around the vertex, and thus belongs to some portion. Note that the corners in the vertex with the sc-permutation $I$ are of the form $(i,i+1)$, so they correspond to the vertex set of the interleaving graph $L_\pi$, and in what follows we refer to the vertices of the interleaving graph as \emph{corners}.

A corner $(i,i+1)$ is called \emph{reversible} if $(i,i+1)$ is a reversible pair in $\pi$, otherwise it is called \emph{non-reversible}. In our figures, we use white for reversible corners and black for non-reversible corners. See Figure~\ref{fig:interleaving_graph} for an example.

A connected component in $L_\pi$ is \emph{non-trivial} if it has more than one corner and it is \emph{orientable} if it only contains non-reversible corners. 
Notice that if $i$ and $i+1$ are homotopic in $\pi$, then the corner $(i,i+1)$ is an isolated vertex in the interleaving graph $L_\pi$. To shorten some statements, we use the notation $(i,i \pm 1)$ to refer to one of the corners $(i,i+1)$ or $(i-1,i)$. 

The interleaving graph serves as a bookkeeping device to track the effect of applying an sc-reversal turning a reversible pair into a pair of homotopic edges. This behavior is described in the following lemma (doing a reversal essentially switches to the complement on the neighborhood of a vertex). The neighborhood of the vertex $(i,i+1)$ is the induced subgraph that has as vertex set the corner $(i,i+1)$ itself and all the corners that are connected to $(i,i+1)$.
    
  \begin{lemma}\label{complement}
      If $(i,i+1)$ is a reversible pair in $\pi$ and $i$ is homotopic to $i+1$ in $\pi \cdot \sigma$, then the interleaving graph $L_{\pi\cdot \sigma}$ is identical to the interleaving graph $L_{\sigma}$ except in the closed neighborhood of $(i,i+1)$ in $L_\sigma$. In the closed neighborhood of $(i,i+1)$:\begin{itemize}
      \item a pair of corners is an edge in $L_{\pi \cdot \sigma}$ if it is not an edge in $L_\pi$, and
      
      \item vertices in $L_{\pi\cdot \sigma}$ are reversible if and only if they are non-reversible in $L_{\pi}$.
      \end{itemize}
  \end{lemma}

The proof of this lemma is identical to similar results in the signed reversal distance theory (see for example~\cite[Fact~2]{bergeron2001very}). 

\begin{proof}
  Let $(j,j+1)$ be a reversible corner connected to $(i,i+1)$ in $L_\pi$. This means that the intervals $(j^r,j+1^l)$ and $(i^r,i+1^l)$ interleave. One possible such case is when all these edges are positive, $j$ lies in $(i,i+1)$ and $j+1$ does not. Then if $\sigma$ is an sc-reversal that turns $i$ and $i+1$ into homotopic edges, it is immediate that $\sigma$ either switches the sign of $j$ while keeping $j+1$ untouched or switches the sign of $j+1$ while keeping $j$ untouched (depending on which of the two choices we made for $\sigma$). In both cases, the pair $(j,j+1)$ is no longer reversible. All the other cases follow identically by a straightforward case analysis.
  
  Similarly, let us look at one case where $(j,j+1)$ and $(k,k+1)$ are two neighbors of $(i,i+1)$ in $L_\pi$ and the intervals $(j^r,j+1^l)$ and $(k^r,k+1^l)$ interleave. This could be because all the corresponding edges are positive, $j$ and $k$ are in $(i,i+1)$, $j+1$ and $k+1$ are not and $j+1$ is in $(k,k+1)$ while $j$ is not. Then, after applying $\sigma$, either both $j$ and $j+1$ are in $(k,k+1)$ or none of them is (depending on which of the two choices we made for $\sigma$). Therefore, they stop being interleaved and the corners $(j,j+1)$ and $(k,k+1)$ are no longer connected in the interleaving graph. All other cases also follow by a straightforward case analysis -- in fact, the definition of doubled cyclic permutation $\pi^*$ was tailored for this analysis to work.
 \end{proof}

 Figure~\ref{fig:interleaving_graph} depicts the effect of applying the sc-reversal that turns the pair $(4,5)$ into homotopic edges. 

\begin{figure}
    \centering
    \includegraphics[width=.8\textwidth]{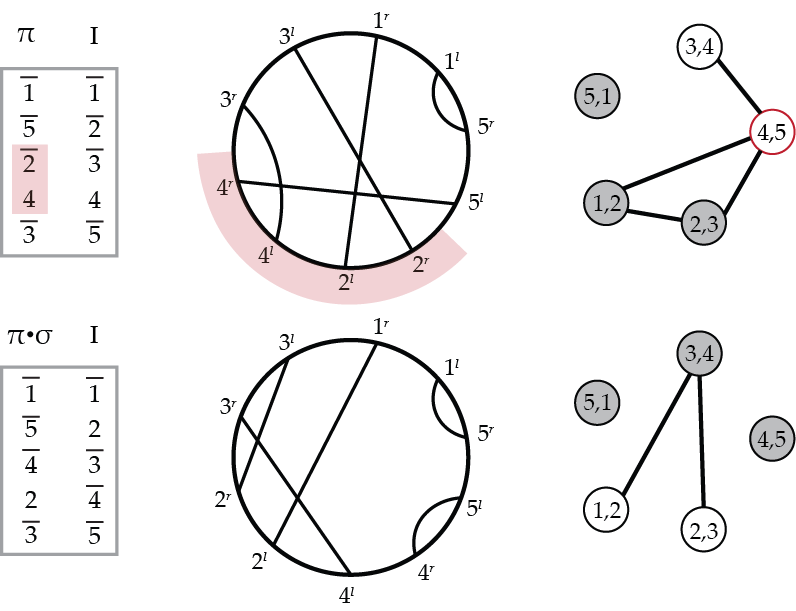}
    \caption{Top: the sc-permutation $\pi$, its associated doubled cyclic permutation and its interleaving graph $L_\pi$. At the bottom we can see the effect of applying a reversal $\sigma$ that makes $4$ and $5$ homotopic, by reversing elements $2,4$ on the interleaving graph (highlighted elements).}
    \label{fig:interleaving_graph}
\end{figure}

Recall that if $\sigma$ denotes the sc-reversal that turns $i$ and $i+1$ into homotopic edges, the score of $(i,i+1)$ is the number of reversible pairs in $\pi \cdot \sigma$. If we denote by $\#^+(i,i+1)$ (resp. $\#^-(i,i+1)$) the number of reversible (resp. non-reversible) pairs of edges adjacent to $(i,i+1)$ in $L_\pi$. If $\pi$ has $k$ reversible pairs of edges, then by Lemma~\ref{complement}, the reversible pairs adjacent to $(i,i+1)$ become non-reversible and vice-versa, and thus, \[\text{score}(i,i+1)=k-\#^+(i,i+1)+\#^-(i,i+1).\]

\begin{lemma}\label{key_interleving} If $\sigma$ is the sc-reversal that turns a reversible pair $(i,i+1)$ of highest score, into a pair $i$ and $i+1$ of homotopic edges, then the number of non-trivial orientable connected components in $L_{\pi\cdot\sigma}$ cannot be more than in $L_\pi$.

\end{lemma}

\begin{proof}
By Lemma~\ref{complement} $\sigma$ has no impact on corners not connected to $(i,i+1)$ in $L_\pi$.  We claim that if $U$ is a non-trivial orientable connected component that was created after applying $\sigma$, then the score of any pair $(j,j+1)$ in $U$ was higher than the score of the pair $(i,i+1)$.

 By Lemma~\ref{complement} all corners in $U$ were reversible and adjacent to $(i,i+1)$ before applying $\sigma$. Let $(j,j+1)$ be one such corner in $U$.  Before applying $\sigma$, every non-reversible corner that was connected to $(i,i+1)$ had to be connected to $(j,j+1)$. Otherwise after applying $\sigma$, by Lemma~\ref{complement} this corner would be a reversible corner connected to $(j,j+1)$ and therefore belonging to $U$ which is not possible because $U$ is orientable. This implies that $\#^-(i,i+1)\leq \#^-(j,j+1)$.  

Before applying $\sigma$, every reversible corner $(t,t+1)$ that was connected to $(j,j+1)$ had to be connected to $(i,i+1)$. Because, again Lemma~\ref{complement} implies that, if $(t,t+1)$ is not connected to $(i,i+1)$, after applying $\sigma$, this corner remains connected to $(j,j+1)$ without changing its reversibility. This means that $(t,t+1)$ is a reversible pair that belongs to $U$ which is once again not possible. This implies that $\#^+(j,j+1)\leq \#^+(i,i+1)$. But since $U$ is non-trivial, it has more than one corner, and a corner in $U$ to which $(j,j+1)$ is connected after the sc-reversal is a reversible corner that was connected to $(i,i+1)$ but not to $(j,j+1)$. Therefore, the last inequality is strict and we actually have $\#^+(j,j+1)< \#^+(i,i+1)$.

Combining everything, we have that
\begin{align*}
    \text{score}(i,i+1)&= k-\#^+(i,i+1)+\#^-(i,i+1) 
    \\
    &< k-\#^+(j,j+1)+\#^-(j,j+1)
    \\
    &=\text{score}(j,j+1)
\end{align*}

as we claimed. But this contradicts our assumption of $(i,i+1)$ having a maximal score, and hence contradicts the existence of $U$, and concludes the proof.

\end{proof}

\subsection{Relating components and portions}\label{sec:components_vs_portions}

The components of the interleaving graph $L_\pi$ are intimately connected to the portions of $\pi$. In this section, we concentrate on this connection which is finally used to prove Lemma~\ref{noblock}.

\begin{lemma}\label{componentcomponent}
    Let $\pi$ be a non-orientable sc-permutation for which $L_\pi$ has a non-trivial connected component $U$. Then the corners in $U$ are in the same non-trivial portion of $\pi$. 
\end{lemma}

The following proof is illustrated in Figure~\ref{fig:corners}.

\begin{proof}

    It is enough to show that if corners $(i,i+1)$ and $(j,j+1)$ are connected by an edge in $L_\pi$, then they belong to the same portion.
    The corner $(i,i+1)$ on the vertex corresponding to $I$ is in the same portion as the corner between $i$ and some edge $i'$ immediately adjacent to $i$ in the vertex corresponding to $\pi$ (after $i$ if $i$ is positive and before $i$ if $i$ is negative).
     Similarly, this defines the edges $(i+1)'$, $j'$ and $(j+1)'$. The two corners adjacent to $i$ correspond to the elements $i^l$ and $i^r$ in  $\pi^*$. The corner between $i$ and $i'$ is $i^r$, the corner between $i+1$ and $(i+1)'$ is $i+1^l$. Likewise for $j$.  
    
    For contradiction assume that some curve curve $k \cdot \ell$ separates the corners $(i,i+1)$ and $(j,j+1)$, then either $(i^r,i+1^l) \subseteq (k^r,\ell^l)$ and $(j^r,j+1^l)\subseteq (\ell^r,k^l)$ in $\pi^*$, or $(i^r,i+1^l) \subseteq (\ell^r, k^l)$ and $(j^r,j+1^l)\subseteq (\ell^r,k^l)$ in $\pi^*$. In both cases, we obtain a contradiction with the interleaving assumption between $(i,i+1)$ and $(j,j+1)$. Hence such a separating curve does not exist, and all the corners in $U$ belong to the same portion of $\pi$.    
\end{proof}

Notice that if we represent a double permutation by points on a circle, then $(i^*,j^*)$ and $(k^*,\ell^*)$ interleave if and only if the straight line\footnote{In our figures, the straight lines have often been made curvilinear for readability purposes.} segment between $i^*$ and $j^*$ crosses the straight line segment between $k^*$ and $j^*$. For briefness, we refer to the straight line segment between points $a^*$ and $b^*$ as the \emph{chord} between $a^*$ and $b^*$, and we denote it by $\langle a^*, b^*\rangle$. We will refer to this representation of $\pi^*$ as its \emph{circle representation}

For $U$ a set of vertices in $L_\pi$, define $U^*$ to be a subset of the elements of $\pi^*$, where $i^r \in U^*$ if $(i,i+1) \in U$ and $i^l \in U^*$ if $(i-1,i) \in U$.

\begin{lemma}\label{easylemma2}
 Let $a^*,b^*,c^*$ and $d^*$ be four distinct elements in $\{i^l, i^r \text{ for } 1\leq i\leq n\}$ such that $(a^*,b^*)$ and $(c^*,d^*)$ interleave in $\pi^*$. If $a^*$ and $b^*$ both belong to $U^*$ for $U$ a connected component of $L_\pi$, then there exists a corner $(k,k+1)$ in $U$ such that $(c^*,d^*)$ interleaves with $(k^r,k+1^l)$. 

\end{lemma}

\begin{figure}
    \centering
        \includegraphics[width=\textwidth]{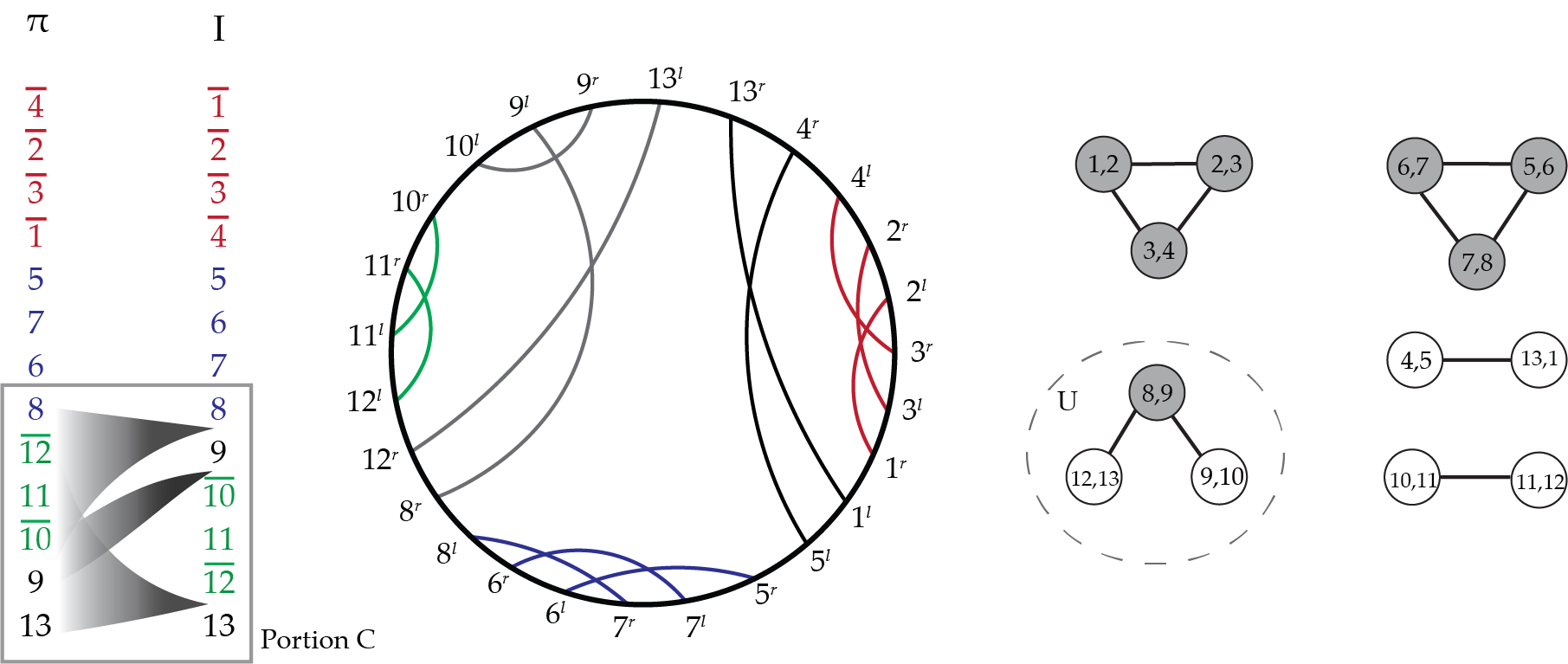}
    \caption{Picture for Lemma~\ref{componentcomponent}; the corners $(8,9)$,$(9,10)$ and $(12,13)$ belonging to the component $U$ are all in the portion $C$ in $\pi$.}
    \label{fig:corners}
\end{figure}

\begin{figure}
    \centering
        \includegraphics[width=\textwidth]{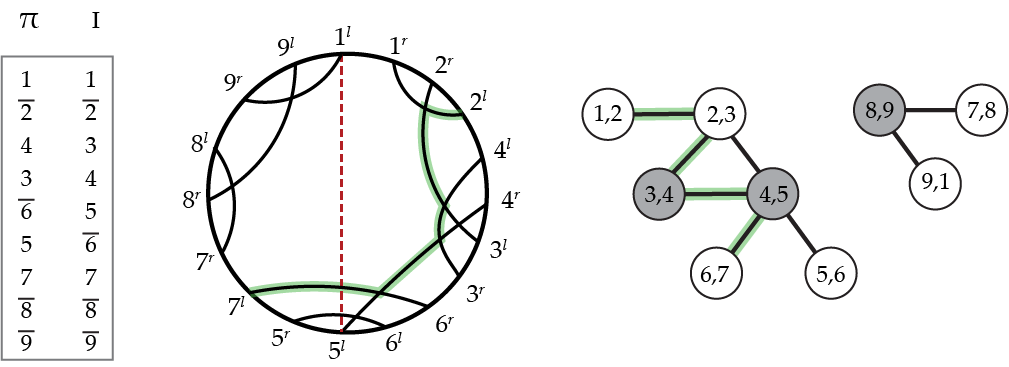}
    \caption{Picture for Lemma~\ref{easylemma2}; $(a^*,b^*)=(2^l,7^l)$ and $(c^*, d^*)=(5^l, 1^l)$.}
    \label{fig:components_are_portions}
\end{figure}

\begin{proof}
 In the circle representation of $\pi^*$ we add a chord between $\langle k^r,k+1^l\rangle$ for each $(k,k+1) \in U$. Recall that in this representation, the $\langle i^r,i+1^l\rangle$ and $\langle j^r,j+1^l\rangle$ intersect if and only if $(i^r,i+1^l)$ and $(j^r,j+1^l)$ interleave, i.e., if $(i,i+1)$ and $(j,j+1)$ are adjacent in $U$, see Figure~\ref{fig:components_are_portions}. 
  
  Since $a^*$ and $b^*$ are in $U^*$, there is a path $P$ in $U$ connecting $(a,a\pm 1)$ to $(b,b \pm 1)$. Consecutive vertices in $P$ correspond to intersecting chords in the circle representation, and thus the set of chords contains a piecewise-linear path $P^*$ between $a^*$ and $b^*$. Since $(c^*,d^*)$ interleaves with $(a^*,b^*)$, the chord $(c^*,d^*)$ would cross the piecewise-linear path $P^*$, and thus $(c^*,d^*)$ interleaves with some $(k^r,k+1^l)$ for $(k,k+1)$ a corner in $U$. This is illustrated in Figure~\ref{fig:components_are_portions}, where the red dotted line corresponding to $\langle c^*,d^*\rangle$ crosses the green piecewise-linear path.
\end{proof}

The next lemma relates the orientability of connected components of $L_\pi$ to the orientability of portions of $\pi$.

\begin{lemma}\label{orientabilitycomponent}
For a non-trivial orientable component $U$ of $L_\pi$, all the elements $i$ for $(i,i\pm1)$ in $U$ have the same signature.
\end{lemma}

\begin{proof}
   For $a^*$ and $b^*$ two elements of $\pi^*$, we denote by $\#(a^*,b^*)$ the number of elements in the interval $(a^*,b^*)$. Since $U$ is orientable, it does not contain corners $(i,i+1)$ where one element is positive and the other negative. A first observation follows from the definition of $\pi^*$: for a corner $(i,i+1)$ in $U$, the interval $(i^r,i+1^l)$ in $\pi^*$ contains an even number of elements of $\pi^*$. 
    
Let $i^*$ and $j^*$ denote two elements of $\pi^*$ in $U^*$ such that the interval $(i^*,j^*)$ in $\pi^*$ does not contain any other element in $U^*$ than its extremities. We say that such an interval is \emph{empty}. We claim that an empty interval $(i^*,j^*)$ also contains an even number of elements. Indeed, otherwise, by parity, some interval $(k^r,k+1^l)$ with one extremity strictly in $(i^*,j^*)$ interleaves with $(i^*,j^*)$, which by Lemma~\ref{easylemma2} implies that it interleaves with a pair $(a^r,a+1^l)$ for $(a,a+1)$ in $U$, and thus $k^r$ and $k+1^l$ belong to $U^*$. This contradicts the emptiness assumption.

We extend the $[\cdot]$ notation to the elements of $\pi^*$, denoting by $[a^*,b^*]$ the interval $(a^*,b^*)$ in the cyclic permutation $(1^l,1^r, 2^l, 2^r, \ldots , n^l,n^r)$. Now, similarly to the previous paragraph, we let $c^l$ and $d^r$ be two elements of $\pi^*$ in $U^*$ so that no other element of $[c^l,d^r]$ belongs to $U^*$, we call this interval a \emph{minimal interval}. Then for an element $k^*$ that is strictly in a minimal interval, $k^*$ does not belong to $U^*$, and $(k^r,k+1^l)$ does not interleave with $(c^l,d^r)$ (once again by Lemma~\ref{easylemma2}). So all the elements in a minimal interval $[c^l,d^r]$ belong to $(c^l,d^r)$ or $(d^r,c^l)$. In particular $c^r$ and $d^l$ are both in one of them. Since no interval of elements in $U^*$ can interleave with $(c^r, d^l)$ (still by Lemma~\ref{easylemma2}), this implies that the elements of $U^*$ are either all in $(c^l,d^r)$ or all in $(d^r,c^l)$, i.e., one of these two intervals is empty, and thus by the previous paragraph, $\#(c^l,d^r)$ is even.

Now, for any $a^r$ and $b^l$ in $U^*$, we have that $\#(a^r,b^l)=\sum_{i\in [a,b]}\#(i^r,i+1^l)$ modulo $2$, since for any $i$ in $[1,n]$, $\#(i^l,i^r)$ is always even. By the first observation, we can remove from this sum the summands $\#(i^r,i+1^l)$ for $(i,i+1)$ in $U$ since they are even. Grouping the remaining terms into extremities of minimal intervals and applying the conclusion of the previous paragraph shows that $\#(a^r,b^l)$ is even. It follows that $a$ and $b$ have the same sign, and thus all the elements $a$ for $(a,a\pm 1)$ in $U$ have the same sign.
\end{proof}

The following lemma helps us recognize pairs of edges that are separating in the sc-permutation from the circle representation of $\pi^*$.

\begin{lemma}\label{separating in cyclic} 
    If $a$ and $b$ have the same sign in $\pi$ and for any $i$ in $\pi$, the chord $\langle i^r,i+1^l \rangle $ crosses none of the chords $\langle a^r,b^l\rangle$ or $\langle b^r,a^l\rangle$,
    then $a \cdot b$ is a separating curve in $\pi$.
\end{lemma}

\begin{proof}

    Without loss of generality, we assume that $a$ and $b$ are positive.  Observe that if the chord $\langle i^r, i+1^l\rangle$ does not cross $\langle a^r, b^l\rangle $ and $\langle b^r,a^l\rangle$, then all the other elements in $\pi^*$ belong to either the interval $(a^r,b^l)$ or the interval $(b^r,a^l)$. For any element $f$ in $\pi$, $f^r$ and $f+1^l$ cannot belong to different intervals since otherwise the chord $\langle f^r,f+1^l \rangle$ will cross both $\langle a^r,b^l \rangle$ and $\langle b^r,a^l\rangle$. This implies that $(a^r,b^l)$ (resp. $(b^r,a^l)$) contains all the elements $j^*$ such that $j\in [a+1,b-1]$ (resp. $j\in [b+1,a-1]$). This means that $\omega_{a,b}$ encloses all the edges in the sc-permutation $\pi$ and therefore by Lemma~\ref{curves2}, $a\cdot b$ is separating. 
\end{proof}

The following lemma shows that a non-trivial orientable component in $L
_\pi$ corresponds to a non-trivial orientable portion in $\pi$. Since we do not merge the orientable portions, this gives a one-to-one correspondence between  orientable vertices in the portions tree and orientable components in the interleaving graph. 
\begin{lemma}\label{orientable components vs portions}
    If $\pi$ is an sc-permutation and $U$ is a non-trivial orientable component in $L_{\pi}$, then the corners in $U$ belong to a non-trivial orientable portion in $\pi$ and no other corner belongs to this portion. 
\end{lemma}
\begin{proof}

    Consider the double permutation $\pi^*$ and its circle representation. For any corner $(i,i+1)$ in $U$, there exists a corner $(j,j+1)$ in $U$ such that $(i^r, i+1^l)$ interleaves with $(j^r,j+1^l)$, equivalently, the chords $\langle i,i+1\rangle$ and $\langle j,j+1\rangle$ intersect in the circle representation.
    Furthermore, we know that no $(k,k+1)$ not in $U$ interleaves with the vertices in $U$.

By Lemma~\ref{orientabilitycomponent}, we know that all elements $i$ for $(i,i\pm1)$ in $U$ have the same signature; without loss of generality we assume that all the elements are positive. 

If $U$ is the only component in $L_\pi$ then every corner $(i,i+1)$  belongs to $U$. We can show that if there is a chord $\langle k^r,l^l\rangle$ that does not intersect any $\langle i^r,i+1^l\rangle$ for $i\in\pi$, then $\langle l^r,k^l\rangle$ intersects at least one. If not, the corners in $(k^r,l^l)$ and $(l^r,k^l)$ belong to two different components in the interleaving graph which is a contradiction.  
Therefore by Lemma~\ref{separating in cyclic}, there is no separating pairs of edges which means that $\pi$ has only one orientable portion.

If $U$ is not the only component, then there exists some element in $\pi^*$ that does not belong to $U^*$. Consider the maximal intervals $(i^*,j^*)$ in $\pi^*$ such that any $k^* \in (i^*,j^*)$ belongs to $U^*$. 
Denote by $F_n:=(i_n^*, j_n^*)$ for $1\leq n\leq p$ all such intervals, ordered consecutively for the order given by $\pi^*$, and denote by $E_n:=(j_n^*,i_{n+1}^*) \setminus \{j_n^*,i_{n+1}^*\}$ the complementary intervals. Figure~\ref{intervals of U} illustrates the following argument. 

\begin{claim}

For any $n$,  $i_n^*=i_n^r$ and $j_n^*=j_n^l$.
\end{claim}

\begin{figure}
    \centering
    \includegraphics[width=.45\textwidth]{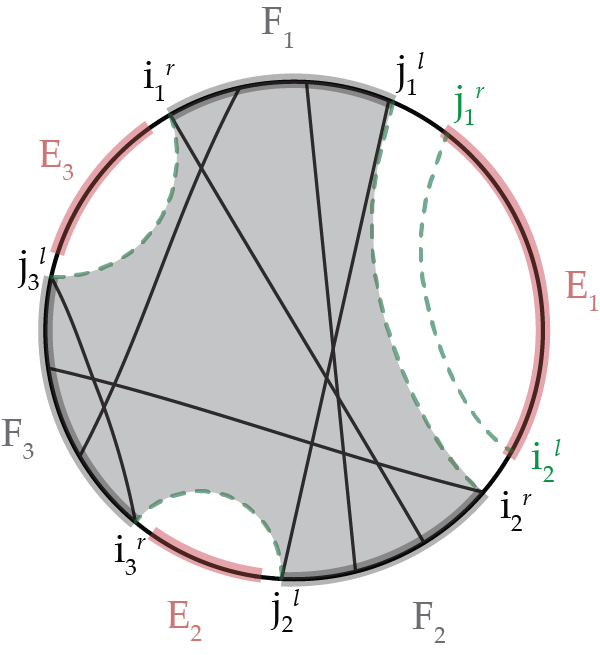}
    \caption{The shaded area depicts the region where the crossing of chords corresponding to the elements in an orientable component $U$ happen. The dotted green lines distinguish the separating pairs of edges.}
    \label{intervals of U}
\end{figure}
\begin{proof}[Proof of the claim]
For any $h^r$ belonging to the interval $E_n$, we claim that $h+1^l$ also belongs to $E_n$. Indeed, similarly to the proof of Lemma~\ref{easylemma2}, consider the path of segments $P$ connecting $j_n^*$ and $i_{n+1}^*$. We observe that if $h+1^l$ does not belong to $E_n$, this means that $\langle h^r,h+1^l\rangle$ crosses $P$ and therefore $(h,h+1)$ belongs to $U$ (Lemma~\ref{easylemma2}) which is a  contradiction. Therefore $h+1^l$ belongs to $E_n$. The argument is symmetric, and therefore $E_n$ is a disjoint union of elements of the form $(h^r,h+1^l)$. Since $E_n$ is not the entire $\pi^*$, it must therefore contain exactly one element $h^r$ for which $h^l$ is not in $E_n$: this must be $i_{n+1}^l$ because $i$ and $h$ have positive signature. Similarly, it also contains exactly one element $h^l$ for which $h^r$ is not in $E_n$, and this must be $j_n^r$ because $i$ and $h$ have positive signature. It follows that for any $n$, $i_n^l$ and $j_n^r$ are not in $F_n$, which proves the claim. \renewcommand\qedsymbol{$\lrcorner$}\end{proof}

This claim implies that for any $n$, the chord $\langle j_n^r, i_{n+1}^l\rangle$ does not cross any other chord corresponding to elements of $U$. Lemma~\ref{separating in cyclic} implies that the pairs $i_{n+1},j_n$ introduce separating curves that together separate all the elements $i$ for $(i,i\pm1)$ in $U$. Together with Lemma~\ref{componentcomponent}, this proves that the corners in $U$ belong to a non-trivial orientable portion in $\pi$, and no other corner belongs to this portion.
\end{proof}

A partial converse to Lemmas~\ref{componentcomponent} and~\ref{orientable components vs portions} is provided by the following lemma.

 \begin{lemma}\label{easy block}Let $\pi$ be a permutation, if either:
 \begin{itemize}
 \item $\pi$ contains a non-trivial orientable portion, or
\item $\pi$ is orientable and not equal to the sc-permutation $I$ or its flip,
\end{itemize}
 
 then the interleaving graph $L_\pi$ contains a non-trivial orientable connected component.

\end{lemma}

\begin{proof}
    We first consider the case of a non-trivial orientable portion where all the edges are positive and we denote its edges by $C$. Let $a$ and $b$ be minimal such that $C \subseteq (a,b)$, therefore $a$ and $b$ are in $C$. Since $a-1$ is not in $C$, $a+1$ is, and by non-triviality, $b \neq a+1$. Furthermore, $a+1$ is not homotopic to $a$ otherwise $(a,a+1)$ would be separating. Thus some other element $j$ of $C$ lies in the interval $(a,a+1)$. This implies that $(a^r,a+1^l)$ interleaves with $(j^r,b^l)$, and thus an immediate induction shows that there is another corner in the connected component $U$ of $(a,a+1)$. By Lemma~\ref{componentcomponent}, all the corners in $U$ are in $C$ and thus the elements $i$ for $(i,i\pm1)$ in $U$ are all positive. This proves the first statement.
    
      The case where the edges in $C$ are negative follows from the fact that flipping $\pi$ changes neither its interleaving graph nor the orientability of its connected components.

 For the case where $\pi$ is orientable and not the sc-permutation $I$ or its flip, the assumption implies that there are at least two interleaving intervals in $\pi^*$ forming a non-trivial connected component $U$, and the orientability of $\pi$ implies the orientability of $U$.
\end{proof}

\begin{proof}[Proof of Lemma~\ref{noblock}]
By assumption $\pi$ is non-orientable and reduced, i.e., its portions tree consists of a single vertex.  By Lemma~\ref{orientable components vs portions} there is no non-trivial orientable component in $L_\pi$. 
Thus, by Lemma~\ref{key_interleving} there is also no non-trivial orientable component in $L_{\pi \cdot \sigma}$.

Lemma~\ref{easy block} implies that if $\pi \cdot \sigma$ is orientable then it is equal to the sc-permutation $I$ or its flip, and if $\pi \cdot \sigma$ is non-orientable then it does not have a non-trivial orientable portion. The first case is the first case of the conclusion of Lemma~\ref{noblock}. The second case implies that the portions tree of $\pi \cdot \sigma$ has only one vertex and therefore it is reduced (recall from the introduction that after cutting the surface into subsurfaces along all edges that participate in a separating curve, we define the portions by merging non-orientable sub-surfaces that share a boundary), which is the second case of the conclusion of Lemma~\ref{noblock}.

\end{proof}

\section{Perfect drawings for $2$-vertex graph embeddings}\label{S:perfect}

In this section, we prove Theorem~\ref{main theorem} for the case of 2-vertex loopless embedding schemes. Actually, we will prove the following slightly more general result.

\begin{theorem}\label{main theorem2}
For any embedding scheme $G$ of a 2-vertex loopless graph $G$, at least one of the following is true:  
 \begin{enumerate}
 \item $G$ admits a perfect cross-cap drawing,
 \item $G$ is non-orientable and every non-orientable portion $C$ is of one of exactly two types: either (i) $C|$ is isomorphic to the left picture on Figure~\ref{reduced badly} and $C$ has degree $2$ in the portions tree or (ii) $C|$ is isomorphic to the right picture of Figure~\ref{reduced badly} and $C$ has degree $4$ in the portions tree.
  \end{enumerate}
\end{theorem}

The construction behind the proof of Theorem~\ref{main theorem2} works by induction over the portions. The workhorse of this construction is a blowing up operation which acts as the reverse of the decomposition into portions and allows us to glue perfect cross-cap drawings of portions together. A \emph{ribbon} is a small neighborhood around an edge. Assume that we have a perfect cross-cap drawing of a $2$-vertex loopless graph $G$. Let $e$ be an edge in $G$ that enters at least one cross-cap, let $X$ be another $2$-vertex loopless graph, which is furthermore assumed to be reduced, and let $e'$ be an edge of $X$ of the same signature as $e$. A  \emph{blow up} of $G$ with $X$ at $(e,e')$, which we denote by $G \oplus_{(e,e')} X$ is obtained from the cross-cap drawing of $G$ by replacing a cross-cap $\mathfrak{c}$ that $e$ enters with $k$ cross-caps for some odd number $k$, and by drawing $X$ in a ribbon around $e$, where $e'$ has been duplicated in a pair of edges $a$ and $b$ as depicted in Figure~\ref{blowing up}. More precisely, we can assume by applying a homeomorphism that $e$ is drawn as a horizontal arc crossing $\mathfrak{c}$ and possibly other cross-caps. Then we align $k$ cross caps vertically in place of $\mathfrak{c}$, draw $a$ and $b$ identically to $e$ away from the new cross caps, and make both $a$ and $b$ enter each new cross cap in reverse orders (without loss of generality $a$ goes down and $b$ goes up). All the other edges of $G$ outside of $X$ that were entering $\mathfrak{c}$ are now drawn in the same way as $a$ or $b$ depending on whether they were above or below $e$. The edges of $X$ are drawn in the region between $a$ and $b$, with the other (possibly zero) $g(X)-k$ cross-caps.

The following lemma, which will be proved in Section~\ref{blowupsection}, ensures that a blow-up can always be realized while preserving perfection.

\begin{restatable}{lemma}{blowup}\label{blowup}
    Let $G$ be a non-orientable loopless 2-vertex embedding scheme admitting a perfect cross-cap drawing, where an edge $e$ enters at least one cross-cap. Then for any reduced 2-vertex loopless graph $X$ with an edge $e'$ of the same signature as $e$, we can blow-up $G$ with $X$ at $(e,e')$ so that the resulting drawing is perfect.
\end{restatable}

We say that a cross-cap drawing is \emph{fantastic} if it is perfect and every edge enters at least one cross-cap. We say that an embedding scheme $G$ \emph{admits a fantastic cross-cap drawing} if there exists a cross-cap drawing of $G$ that is fantastic. We say that an embedding $G$ \emph{admits almost fantastic cross-cap drawings} if for any choice of edge $e$ of $G$, there exists a cross-cap drawing of $G$ where all the edges of $G$ except $e$ enter at least one cross-cap. Note that after a blow-up, every edge that was entering at least one cross-cap in $G$ still enters a cross-cap, and all the new edges enter at least one cross-cap. So a blow-up that preserves perfection also preserves fantasticness.

It follows from the definition that a blow up $G \oplus_{(e,e')} X$ is a cross-cap drawing of a graph in which $(a,b)$ is a separating curve, and cutting along it yields $G$ on one side and $X$ on the other side (with copies of $e$ and $e'$ respectively on the boundaries). Therefore, Lemma~\ref{blowup} shows that in order to obtain a perfect cross-cap drawing of a $2$-vertex loopless graph, it suffices to draw one of the portions (almost) fantastically and then we can blow up the other portions in the drawing.

\begin{figure}[H]
    \centering
    \includegraphics[width=.8\textwidth]{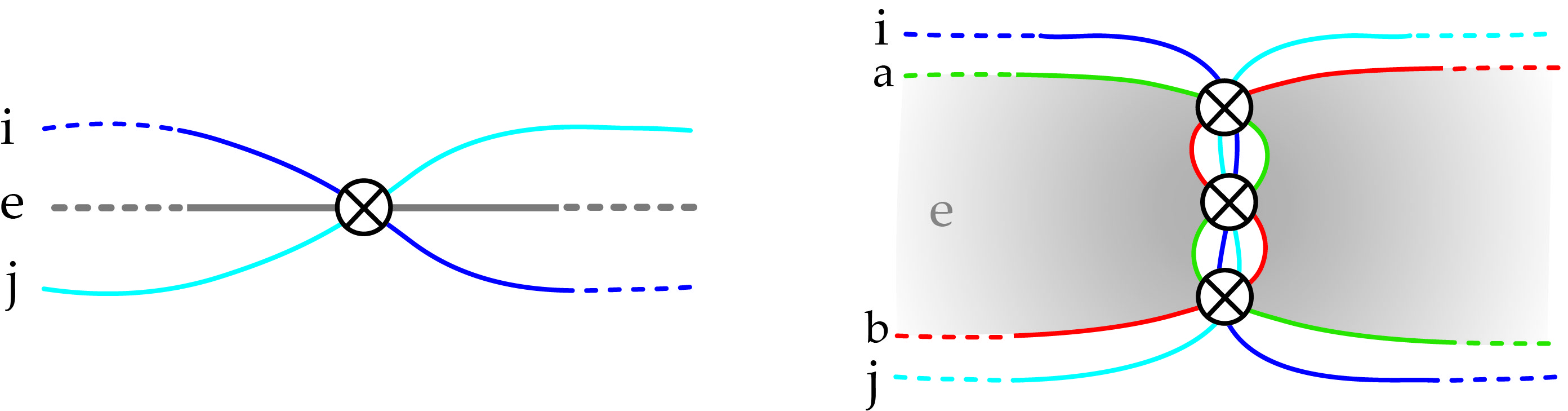}
    \caption{Blowing up a cross-cap.}
    \label{blowing up}
\end{figure}

The following lemma will be proved in the next subsection.

\begin{restatable}{lemma}{fantastique}\label{fantastique}
    Every non-orientable loopless 2-vertex embedding scheme that is reduced and different from $(1,\overline{2})$ or $(1,\overline{3},\overline{4},2)$ admits a fantastic cross-cap drawing.  Furthermore, both the embedding schemes $(1,\overline{2})$ and $(1,\overline{3},\overline{4},2)$ admit almost fantastic cross-cap drawings.
\end{restatable}

The following easy lemma shows that when trying to obtain perfect drawings, the orientable case can be reduced to the non-orientable one.

\begin{lemma}\label{adding edges}
    Let $G$ be a loopless $2$-vertex orientable embedding scheme with non-orientable genus $g(G)$. We can add an edge to $G$ to obtain a loopless $2$-vertex non-orientable embedding scheme $G'$ with the same non-orientable genus $g(G)$. If $G$ is reduced, so is $G'$.
\end{lemma}

\begin{proof}
    Let us assume that the signatures of the edges are positive. Choose an edge $f$.  Add a negative edge $e$ consecutive to $f$, such that in the cyclic permutation of one vertex we see $ef$ and in the other vertex we see $fe$. Let us denote the new embedding scheme by $G'$. The scheme $G'$ is non-orientable. Since $G$ is orientable, we know that $g(G)=eg(G)+1$. Since $e(G')=e(G)+1$, it is enough to show that $f(G')=f(G)$. By looking at the face cycles of both schemes, we can see that all face cycles are intact except a face $(i,f,j,\dots)$ in $G$ that is turned to a face cycle $(i,e,f,e,j,\dots)$ ($i$ and $j$ are two adjacent edges to $f$ in the cyclic permutation of the vertices). Since the edge $e$ has an opposite signature compared to the edges of $G$, $e$ cannot form a separating curve with another edge of $G$. Therefore, if $G$ is reduced, so is $G'$. The proof for the case where all the edges of $G$ are negative can be obtained by flipping. This finishes the proof. 
\end{proof}

We now have all the tools to prove Theorem~\ref{main theorem2}.

\begin{proof}[Proof of Theorem~\ref{main theorem2}]

If $G$ is orientable, without loss of generality we can assume that the signatures of all the edges are positive. Using Lemma~\ref{adding edges} we can add an edge with a negative signature to make it non-orientable. Since separating curves require two edges with the same signature, adding an edge of negative signature does not change the portions tree of $G$, and thus it changes one of the portions into a non-orientable portion. We claim that this non-orientable portion cannot be of one of the two types in the second case of the theorem. Indeed, it cannot be the scheme with permutation $\pi=(2,1,\bar{3},\bar{4})$ because it has only one edge of negative signature. It also cannot be the scheme with permutation $\pi=(1,\bar{2})$, because the negative signature edge is not involved in separating curves, and therefore we claim that the non-orientable portion must have at most one neighbor in the portions tree. Indeed, if there was a separating curve using (homotopic copies of) the edge $e$ of positive signature and it was adjacent to multiple connected components, in the portions tree there would be a single edge to the equivalence class of these edges. So we can assume that $G$ is non-orientable and the second case of the theorem does not hold.

If the second case of the theorem does not hold, then $G$ contains at least one non-orientable portion $C$ that is neither of type (i) nor (ii). If $C|$ is not isomorphic to $(1,\overline{2})$ or $(1,\overline{3},\overline{4},2)$, then it admits a fantastic drawing by Lemma~\ref{fantastique}. If it is isomorphic to one of them, then by the assumption on the degrees, one of its edges $e_0$ is such that no edge $e$ in the homotopy class of $e_0$ is a boundary edge in $C$.  In that case we use the second part of Lemma~\ref{fantastique} to obtain a perfect cross-cap drawing of $C|$ where all the edges except $e_0$ enter at least one cross-cap. For the sake of simplicity, we just say that we have drawn $C|$ almost fantastically in the description of our drawing algorithm below.

\begin{framed}
\emph{Main Algorithm}

\textbf{Input:} A two-vertex loopless embedding scheme $G$ with at least one non-orientable portion that is neither of type (i) nor (ii)

\textbf{Output:} {A perfect cross-cap drawing of $G$.}

\textbf{Initialization:} 
Compute the portions tree, and find the portion $C$ which is neither of type (i) nor (ii). Root the portions tree $T$ at $C$. Draw $C|$ almost fantastically.

\textbf{Algorithm:}
Do a breadth-first search traversal of $T$. Whenever a new portion $X$ is met, with an edge $e'$ on its boundary that has already been drawn as $e$, add $X$ to the cross-cap drawing by blowing up the current drawing with $X$ at $(e,e')$. 
\end{framed}

By construction, the edge $e_0$ will never be blown up. Thus Lemma~\ref{blowup} shows that each blow-up can be realized while preserving perfection. By construction, after each blow-up, the resulting drawing has the property that every edge enters at least one cross-cap, except possibly for the edge $e_0$. Therefore the algorithm provides a perfect cross-cap drawing of $G$, which concludes the proof.
\end{proof}

\subsection{Fantastic drawings}

We now prove Lemma~\ref{fantastique}, which we restate for convenience.

\fantastique*

The proof is based on exhaustive analysis of all the loopless $2$-vertex embedding schemes of non-orientable genus $2$ and $3$, as pictured in Figure~\ref{cases}. 

\begin{proof}[Proof of Lemma~\ref{fantastique}]
We first prove the first point. Let $\pi$ be the sc-permutation associated to the embedding scheme. 
Since $\pi$ is reduced and non-orientable, by Theorem~\ref{HP}, the HP algorithm provides a path of sc-permutations $\{\pi=\pi^1,\pi^2,\ldots \pi^{g+1}=id\}$, where $g$ is the non-orientable genus of $\pi$. We distinguish cases depending on the value of $g$.

If $g \geq 3$, we consider the sub-path $\{\pi^1,\pi^2,\ldots,\pi^{k}\}$, with $g-k=2$ and realize this sub-path as a cross-cap drawing $\phi$ as described in Section~\ref{preliminaries}. Notice that if there exists a fantastic cross-cap drawing $\phi'$ for $\pi^{k}$, we can concatenate $\phi$ with $\phi'$ to obtain a fantastic cross-cap drawing for $\pi$. Now, $\pi^{k}$ is an sc-permutation of non-orientable genus $3$. If $\pi^k$ contains homotopic edges, for each collection of homotopic edges we reduce them, which amounts to removing all of them but one. The proof then proceeds via an exhaustive case analysis.
 Without loss of generality, we can assume that 
\begin{itemize}
        \item The scheme $\pi^k$ is reduced since  this is preserved by the HP algorithm (this is a consequence of Lemma~\ref{noblock}).
        
        \item There is at least one edge of signature $-1$ and one edge of signature $+1$ in $\pi^k$. Otherwise, $\pi^k$ is orientable, which is impossible since by Lemma~\ref{noblock}, the HP algorithm preserves non-orientability.
    \item $\pi^k$ has a maximal number of edges while preserving these properties. Indeed, otherwise, we can add edges, draw the resulting scheme and remove these superfluous edges at the end.
\end{itemize}

With these simplifying assumptions at our disposal, we can exhaustively enumerate all the genus $3$ embedding schemes that match these. The numbers are small enough that this can be done by hand, and we also ran a computer search for safety. All these schemes have all their faces of degree $4$, and thus have six edges. Then, there are exactly eight loopless $2$-vertex embedding schemes matching our assumptions, their sc-permutations are:   
    $(1,\overline{6},5,\overline{4},3,\overline{2})$, $(1,\overline{6},\overline{3},5,4,\overline{2})$, $(1,\overline{3},\overline{6},5,\overline{4},2)$, $(1,5,\overline{3},\overline{6},4,2)$, $(1,\overline{4},6,3,\overline{5},2)$, $(1,\overline{6},3,\overline{4},5,\overline{2})$,
$(1,\overline{4},6,\overline{2},5,\overline{3})$, and $(1,\overline{4},\overline{6},\overline{2},5,3)$.  Fantastic drawings of each of them are provided in Figure~\ref{cases}.

If $g = 2$, there are exactly three reduced schemes: $(1,\overline{3},\overline{4},2)$, $(1,\overline{2},3,\overline{4})$ and $(1,3,\overline{2})$. Fantastic drawings of the second and third case (or rather its flipped version) are provided in Figure~\ref{cases}. 

If $g=1$, there is a single reduced scheme: $(1,\overline{2})$.

If $g=0$, there is a single reduced scheme: $(1)$.

Finally, homotopic edges that were removed from $\pi^k$ can be added back and drawn identically to their reduced edge. The resulting drawing is fantastic if and only if the reduced drawing is. This concludes the proof of the first point.

In order to prove the second point, it suffices to provide perfect cross-cap drawings of $(1,\overline{2})$ and $(1,\overline{3},\overline{4},2)$ where all edges except $e$ enter at least one cross-cap for every possible choice of edge $e$. This is done in Figure~\ref{almostfantastic}.
\end{proof}

\begin{figure}
    \centering
        \includegraphics[width=.9\textwidth]{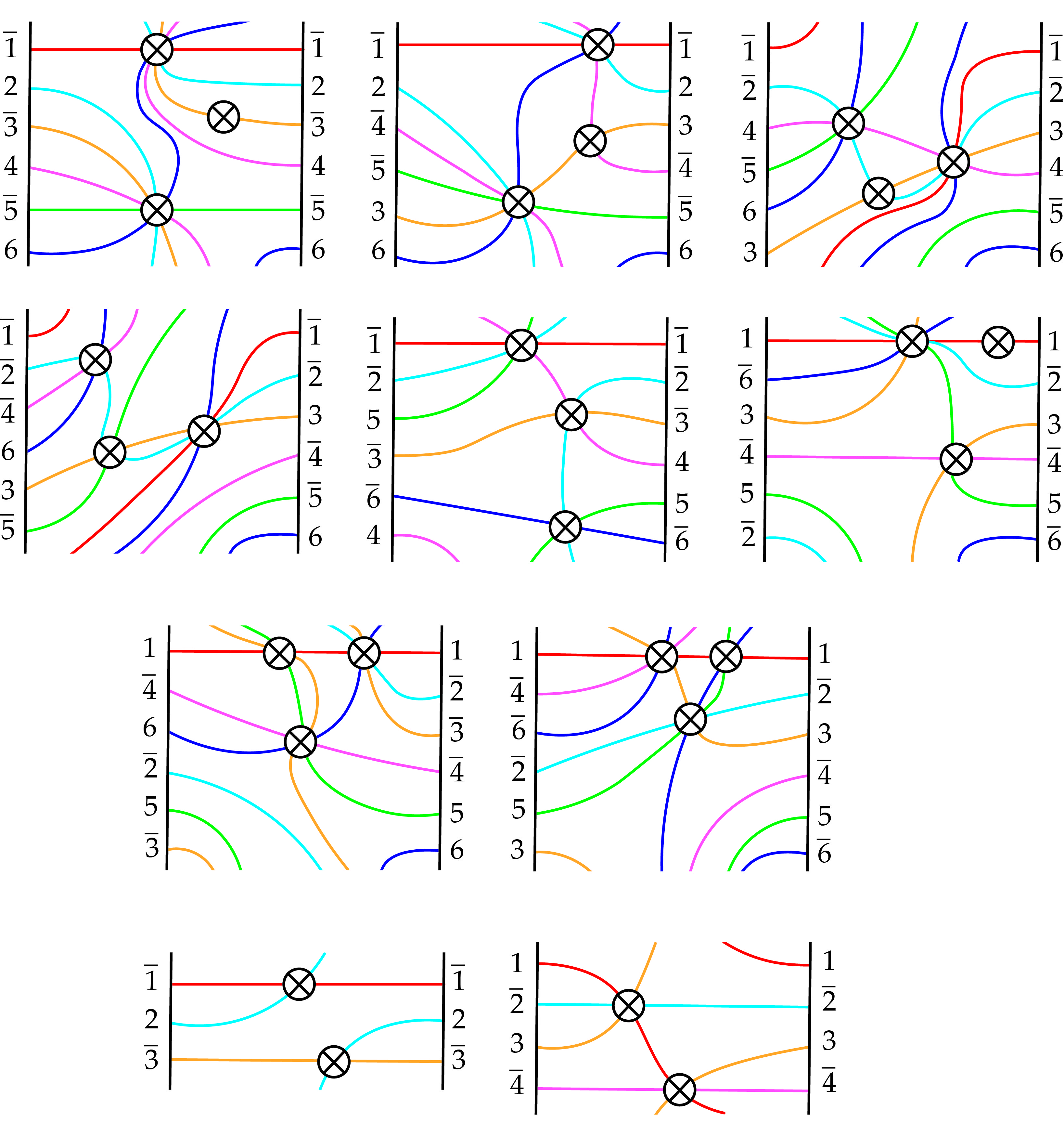}
    \caption{Fantastic drawings of non-orientable genus-$2$ and genus-$3$ embedding schemes (these are cylindrical drawings: the top is identified to the bottom).}
    \label{cases}
\end{figure}
\begin{figure}
    \centering
        \includegraphics[width=.6\textwidth]{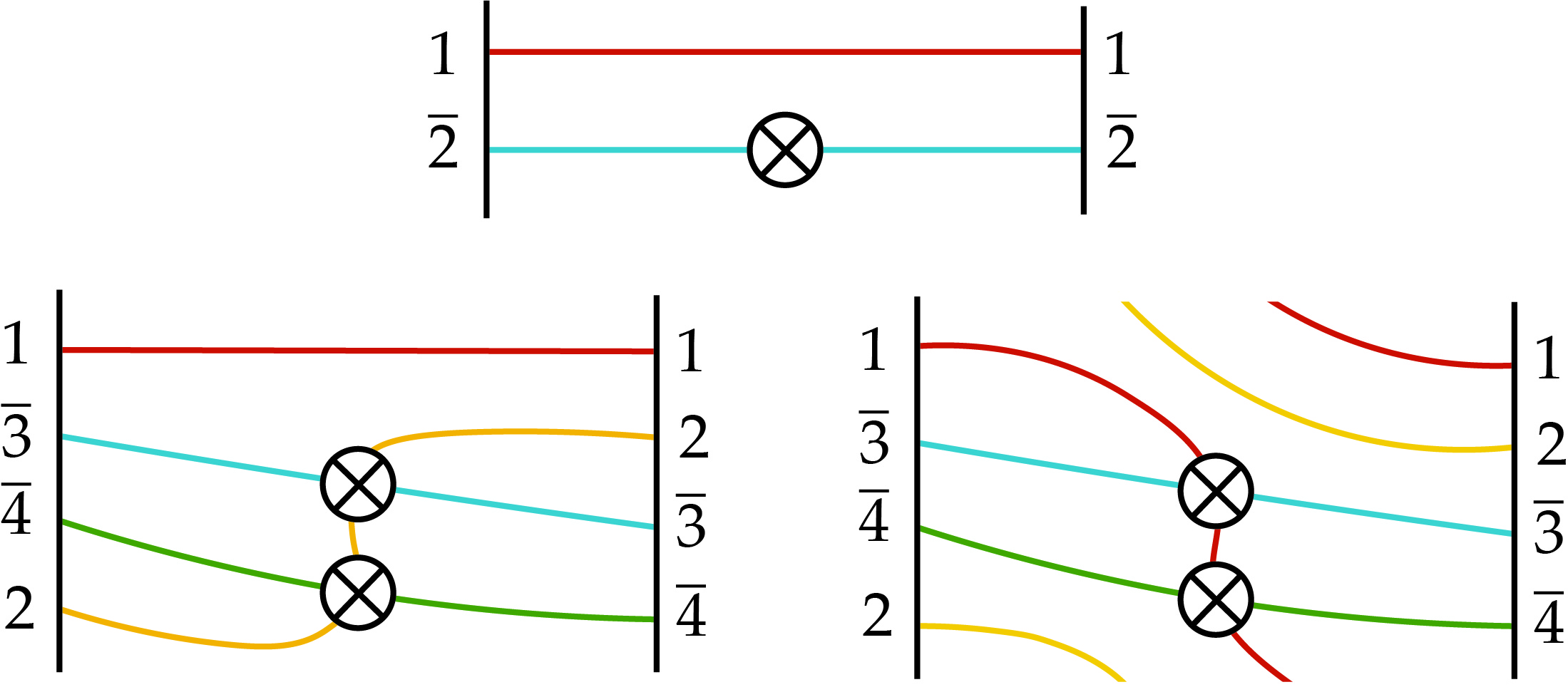}
    \caption{Almost fantastic drawings.
    Top: $(1,\overline{2})$, here $1$ does not enter any cross-cap. Bottom: $(1,\overline{3},\overline{4},2)$, at left (resp. right) the edge $1$ (resp. $2$) does not enter any cross-cap. The rest of the examples can be obtained flipping these ones.}
    \label{almostfantastic}
\end{figure}

\subsection{Drawing in ribbons and blowing up cross-caps.}\label{blowupsection}

We now prove Lemma~\ref{blowup} which we restate for convenience.

\blowup*

The key construction allowing us to blow up is given by the following lemma.

\begin{lemma}\label{blowupcorrect}
    Let $\pi$ be a reduced orientable sc-permutation on $n$ elements where all the elements have negative signature, and let $i$ be an element of $\pi$. Let $\pi'$ be the sc-permutation where the element corresponding to $i$ has been duplicated into two homotopic elements $i_1$ and $i_2$, and let $G'$ be the associated embedding scheme. Then $G'$ admits a perfect cross-cap drawing in which $i_1$ and $i_2$ each enter all the cross-caps in opposite order.
\end{lemma}

This lemma shows that $i_1$ and $i_2$ can be drawn exactly as specified by the blow-up operation.

\begin{proof}
Let us denote by $g(\pi')=eg(\pi')+1$ the non-orientable genus of $G'$. By renumbering, we can assume that $\pi'$ can be written in cycle notation as $(\pi'_1, \ldots , \pi'_n)$ with $\pi'_1=\overline{n}$ and $\pi'_n=\overline{1}$. Let us define $\pi''$ from $\pi'$ by replacing $\overline{1}$ by $n+1$, yielding a new sc-permutation on the elements $\{2, \ldots ,n+1\}$. The following lemma is proved using the HP algorithm and Theorem~\ref{HP}.

\begin{restatable}{lemma}{claimlemma}\label{claim}
The optimal number of sc-reversals to go from the sc-permutation $\pi''$ to the sc-permutation $(2,3,4,\dots, n+1)$ is $g(\pi'')=g(\pi')$. There exists a sequence of such sc-reversals such that no sc-reversal is applied to the element $n+1$.
\end{restatable}
\begin{proof}
Note that the associated embedding scheme to $\pi''$ is a non-orientable scheme and therefore $g(\pi'')=eg(\pi'')$. The number of edges in $\pi'$ and $\pi''$ are equal. Therefore, in order to show that $eg(\pi'')=g(\pi')=eg(\pi')+1$, it is enough to show that $f(\pi'')=f(\pi')-1$. Let us denote the elements of $\pi'$ by $\pi'=(\overline{n},\dots, \overline{2},\dots,\overline{j},\overline{1})$. Then the elements of $\pi''$ are $(\overline{n}, \dots,\overline{2},\dots,\overline{j},n+1)$. We see that $\pi'$ has at least two faces: $f_1=(1,n)$ and $f_2=(2,1,j, \dots)$. Replacing $-1$ with $n+1$, the faces $f_1$ and $f_2$ merge to a single face $f=(2,n+1,n,n+1,j,\dots)$ in $\pi''$ (see Figure~\ref{a2}). The other faces in $\pi''$ are the same as the faces in $\pi'$ other than $f_1$ and $f_2$. This implies that $\pi''$ has one face less than $\pi'$. 
\begin{figure}[H]
    \centering
    \includegraphics[width=.45\textwidth]{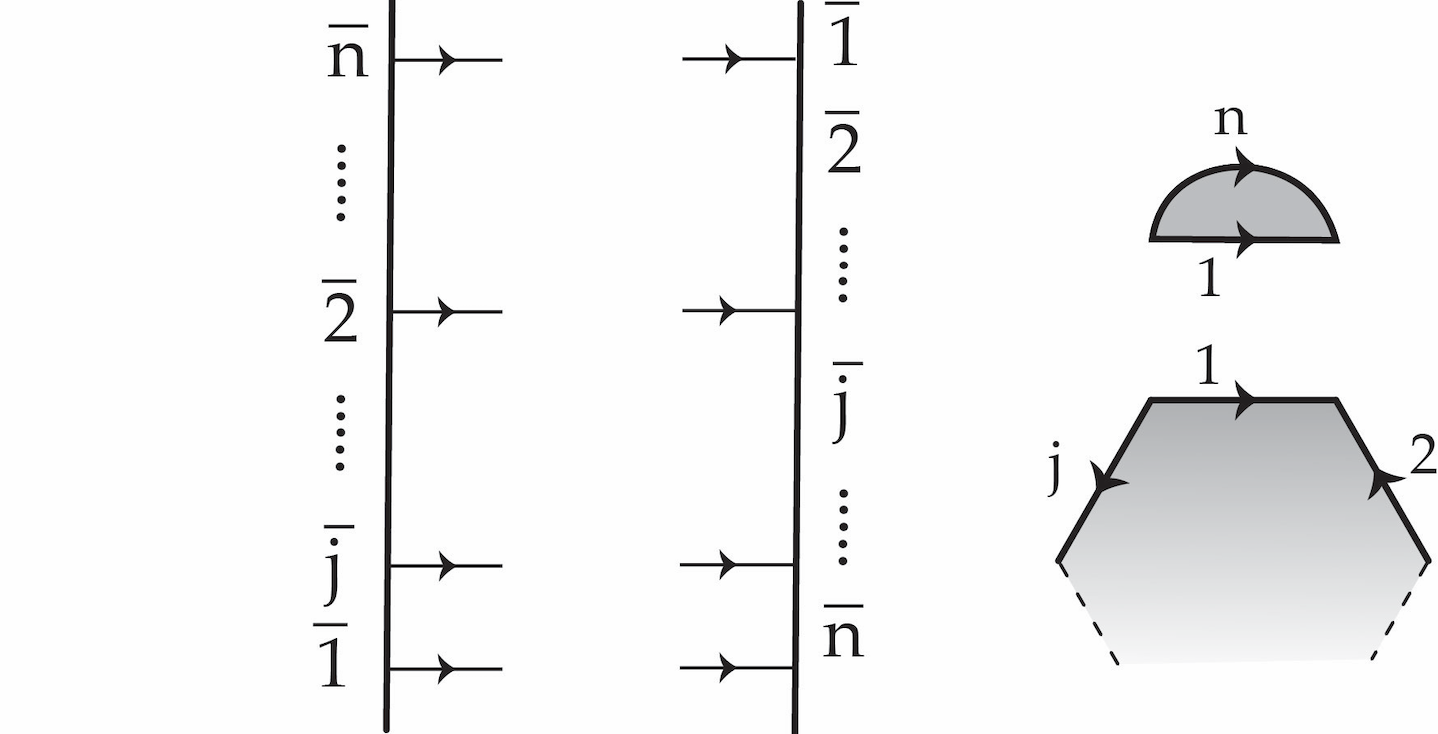}\hspace{1.2cm}\includegraphics[width=.45\textwidth]{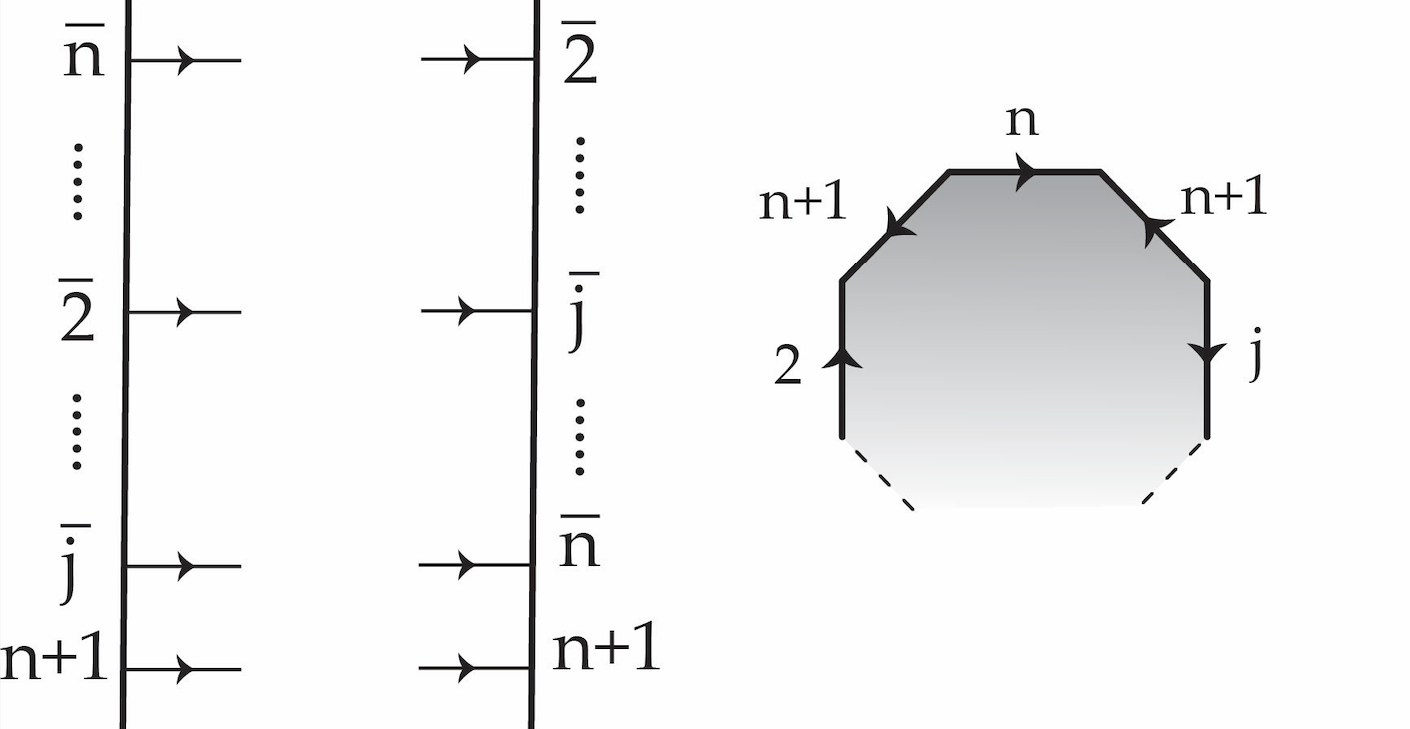}
   
    \caption{Faces of $\pi'$ (left) and $\pi''$ (right).}
    \label{a2}
\end{figure}

Furthermore, $\pi''$ is reduced because $\pi'$ is reduced. It is non-orientable by construction. Finally, recall that for a reversible pair $(i,i+1)$, there are two sc-reversals that make $i$ and $i+1$ homotopic. By running the HP-algorithm on $\pi''$ and choosing the sc-reversal at each step that does not reverse $n+1$, we obtain the desired sequence of sc-reversals. The success of the HP-algorithm is guaranteed by Theorem~\ref{HP}. This finishes the proof of the lemma.
\end{proof}

Now, the sequence of sc-reversals of Lemma~\ref{claim} gives us a cross-cap drawing for $\pi''$. By Lemma~\ref{curves}, the edges $n$ and $n+1$ form an orienting cycle, and thus together they have to enter all the $g(\pi'')$ cross-caps exactly once. We know that the edge $n+1$ does not enter any cross-cap in this drawing which implies that $n$ enters all the cross-caps exactly once. We can obtain a cross-cap drawing for $\pi'$ from this drawing for $\pi''$ by drawing $1$ entering the cross-caps that $n$ enters with the opposite order as depicted in Figure~\ref{a1}. 
\end{proof}
\begin{figure}[H]
    \centering
    \includegraphics[width=0.8\textwidth]{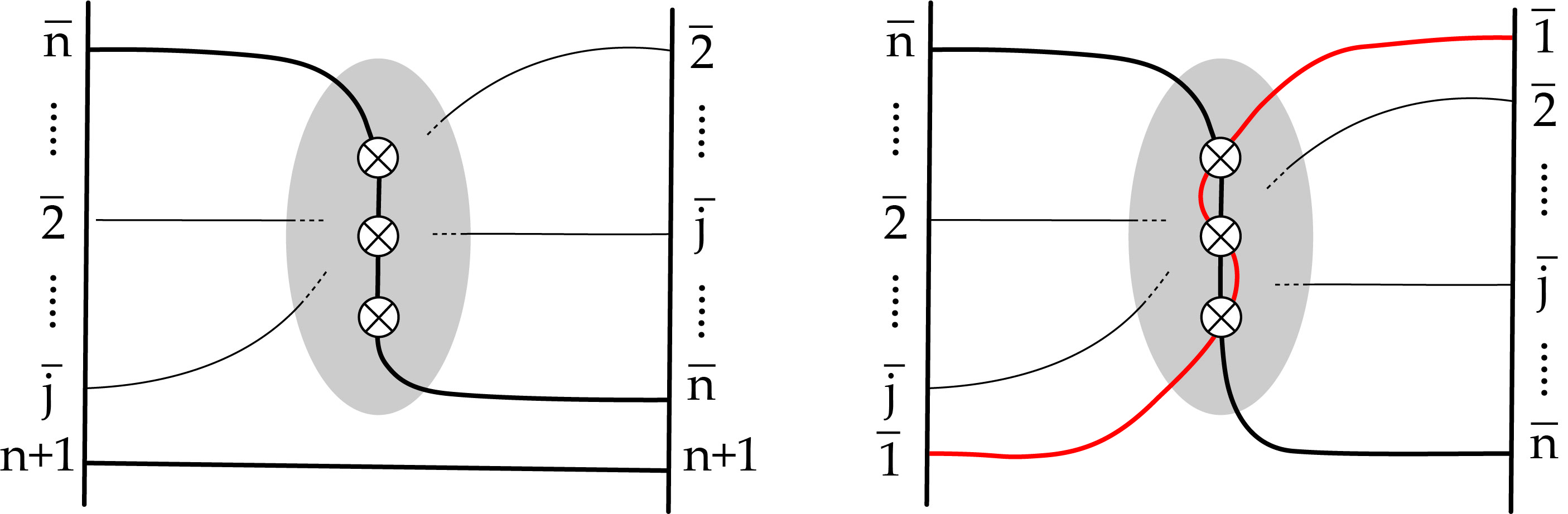}
    \caption{From a cross-cap drawing of $\pi'$ to a cross-cap drawing of $\pi$.}
    \label{a1}
\end{figure}

The proof of Lemma~\ref{blowup} can now be deduced from this lemma by applying (yet another time) the HP algorithm.

\begin{proof}[Proof of Lemma~\ref{blowup}]
If $X$ is orientable and its elements have negative signature, we can do the blow-up by replacing the first cross-cap that $e$ enters with all the $g(X)$ cross-caps required by $X$ in a vertical sequence. Then Lemma~\ref{blowupcorrect} shows that $X$ can be drawn perfectly inbetween $a$ and $b$.

If $X$ is orientable and its elements have positive signature, then $e$ also has positive signature and thus enters at least two cross-caps. Then we do the blow-up by replacing the second cross-cap that $e$ enters with $g(X)$ cross-caps and drawing $X$ having its edges first enter the first cross-cap used by $e$ and then applying Lemma~\ref{blowupcorrect} on the flip of $X$.

If $X$ is non-orientable, we can similarly assume that $e$ and $e'$ have negative signature. Then we apply the HP algorithm on $X$ to obtain an optimal sequence of sc-reversals, as promised by Theorem~\ref{HP}. Note that, as in the proof of Lemma~\ref{claim}, such a sequence of sc-reversals can be chosen so that $e'$ never enters a cross-cap. Since $e'$ has negative signature, $X$ has been transformed into the flip of $I$ by the HP algorithm. We can thus realize the blow-up by having all the edges of $X$ enter the first cross-cap entered by $e$ and replacing $e'$ by two copies $a$ and $b$ around these edges, see Figure~\ref{lastfigure}.  Note that since $X$ is non-orientable, the genus of $G \oplus_{(e,e')} X$ is $g(G)+g(X)$, which is the number of cross-caps used in this construction. Since the HP algorithm yields perfect drawings for non-orientable reduced sc-permutations (Theorem~\ref{HP}) and this construction preserves perfection, this concludes the proof.
\end{proof}
\begin{figure}[H]
    \centering
    \includegraphics[width=\textwidth]{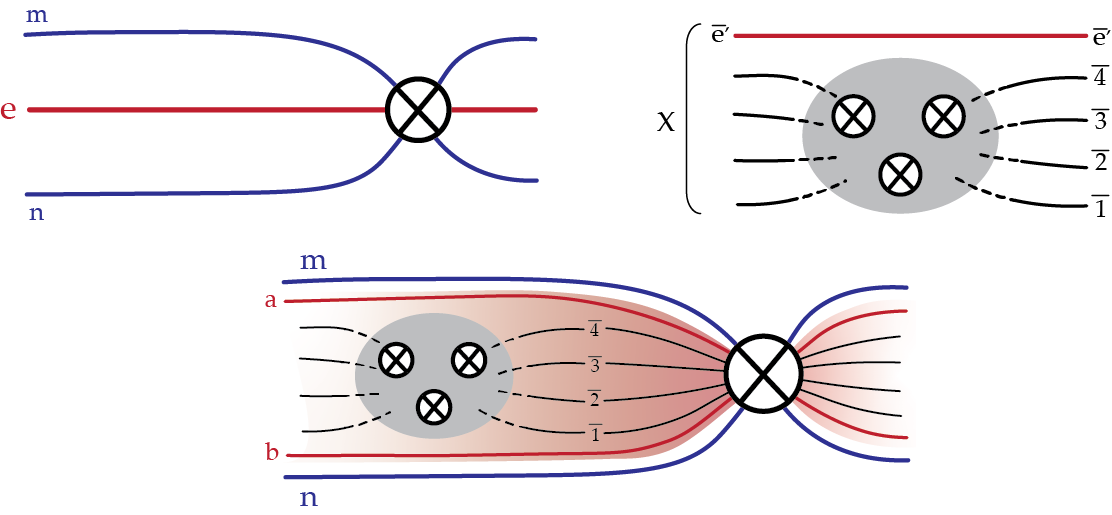}
    \caption{Blow up when $X$ is non-orientable.}
    \label{lastfigure}
\end{figure}

\section{From $2$-vertex graphs to bipartite graphs}\label{bipartite}

As we announced in the introduction, the case of bipartite embedding schemes reduces to the case of 2-vertex embedding schemes. This is summarized in the next two lemmas, and completes the proof of the general case of Theorem~\ref{main theorem}. 

Recall that an \emph{allowable contraction} of $G$ is the result of adding an edge in a face of $G$ between two non-adjacent vertices on the same side of the bipartition and contracting it. Note that the embedding scheme we obtain after an allowable contraction of $G$ has the same genus as $G$.

\begin{lemma}
    Let $G$ and $H$ be embedding schemes such that $H$ is obtained after a sequence of allowable contractions of $G$. If $H$ admits a perfect cross-cap drawing, then so does $G$. 
\end{lemma}
\begin{proof}
    Let $G=H_0,H_1,H_2,\dots,H_k=H$ be the sequence of embedding schemes that we obtain after contracting auxiliary edges $i_1,i_2,\dots,i_k$ one by one, i.e., we obtain $H_j$ by adding the edge $i_j$ to $H_{j-1}$ and contracting it. To prove the lemma, it is sufficient to show that if $H_1$ has a perfect cross-cap drawing then so does $G$. 

Let $f:(e_1,v_1,\dots,e_l,v_l)$ be the face of $G$ that is divided to two faces $(e_1,v_1,\dots,e_t,v_t,i_1,v_l)$ and $(i_1,v_t,e_{t+1},v_{t+1}\dots, e_l,v_l)$ by adding the auxiliary edge $i_1=v_tv_l$. After contracting $i_1$, we obtain $H_1$ with faces $f_1:( e_1,v_1,\dots,e_t,v*)$ and $f_2:(e_{t+1},v_{t+1},\dots,e_l,v*)$ where $v*$ is the vertex corresponding to the vertices $v_t$ and $v_l$ after the contraction. Let $\phi$ be a perfect cross-cap drawing for $H_1$. We uncontract the edge $i_1$ close to the vertex $v*$ and away from the cross-caps in $\phi$ and then we remove it. This turns $\phi$ into a perfect cross-cap drawing for $G$ and it finishes the proof. 
\end{proof}
\begin{lemma}
    For any connected bipartite embedding scheme $G$, there exists an embedding scheme $H$ that is obtained after a sequence of allowable contractions of $G$ such that $H$ has only two vertices and does not contain any loop.
\end{lemma}
\begin{proof}
    Let $V_A$ and $V_B$ be a partition of vertices of $G$ such that each edge of $G$ has one end in $V_A$ and the other in $V_B$. If each of these sets has only one element, then $G$ has only two vertices and we have nothing to prove. Let us assume that at least one of these sets has more than one element. Since $G$ is connected, there exists a face $f:(v_0,e_1,v_1,e_2,v_2,\dots)$ such that $v_0\neq v_2$ are two vertices in one of the sets, say $V_A$, and $v_1$ belongs to $V_B$. We add an auxiliary edge $i=v_0v_2$ in the face $f$ and contract it. This reduces the vertices in $V_A$ by one. The graph we obtain after this allowable contraction is still connected and bipartite. We continue this process until we obtain a 2-vertex embedding scheme. Since we only add auxiliary edges between the vertices in the same set $V_A$ or $V_B$, and all the edges of $G$ have one end in $V_A$ and the other in $V_B$, we do not create any loops during these allowable contractions. This finishes the proof.
\end{proof}

Therefore, starting from a bipartite embedding scheme $G$, if a sequence of allowable contractions yields a 2-vertex embedding scheme whose reduced graph is not one of the two graphs in Figure~\ref{reduced badly}, then by Theorem~\ref{main theorem} for $2$-vertex graphs, proved in the previous section, $G$ admits a perfect cross-cap drawing. This concludes the proof of Theorem~\ref{main theorem}

\subsubsection*{Acknowledgements} We are very grateful to the anonymous reviewers for helpful comments. %This work was partially supported by the ANR project SoS (ANR-17-CE40-0033).

This work is part of the first named author’s PhD thesis at Université Gustave Eiffel, under the joint supervision of the second and third named authors.

\bibliographystyle{elsarticle-num}
\bibliography{bib.bib}

@inproceedings{di1989incremental,
  title={Incremental planarity testing},
  author={Di Battista, Giuseppe and Tamassia, Roberto},
  booktitle={30th Annual Symposium on Foundations of Computer Science},
  pages={436--441},
  year={1989},
  organization={IEEE Computer Society}
}

@inproceedings{di1990line,
  title={On-line graph algorithms with SPQR-trees},
  author={Di Battista, Giuseppe and Tamassia, Roberto},
  booktitle={Automata, Languages and Programming: 17th International Colloquium Warwick University, England, July 16--20, 1990 Proceedings 17},
  pages={598--611},
  year={1990},
  organization={Springer}
}

@inproceedings{bergeron2001very,
  title={A very elementary presentation of the {H}annenhalli-{P}evzner theory},
  author={Bergeron, Anne},
  booktitle={Annual Symposium on Combinatorial Pattern Matching},
  pages={106--117},
  year={2001},
  organization={Springer},
doi={10.1016/j.dam.2004.04.010}
}

@article{huang2017topological,
  title={A topological framework for signed permutations},
  author={Huang, Fenix WD and Reidys, Christian M},
  journal={Discrete Mathematics},
  volume={340},
  number={9},
  pages={2161--2182},
  year={2017},
  publisher={Elsevier},
doi={10.1016/j.disc.2017.03.019}
}

@article{mohar2009genus,
  title={The genus crossing number},
  author={Mohar, Bojan},
  journal={ARS Mathematica Contemporanea},
  volume={2},
  number={2},
  pages={157--162},
  year={2009},
doi={10.26493/1855-3974.21.157}
}

@book{stillwell1993classical,
  title={Classical topology and combinatorial group theory},
  author={Stillwell, John},
  volume={72},
  year={1993},
  publisher={Springer Science \& Business Media}
}

@inproceedings{bergeron2004reversal,
  title={Reversal distance without hurdles and fortresses},
  author={Bergeron, Anne and Mixtacki, Julia and Stoye, Jens},
  booktitle={Combinatorial Pattern Matching: 15th Annual Symposium, CPM 2004, Istanbul, Turkey, July 5-7, 2004. Proceedings 15},
  pages={388--399},
  year={2004},
  organization={Springer},
doi={10.1007/978-3-540-27801-6_29}
}

@article{pach2009degenerate,
  title={Degenerate Crossing Numbers},
  author={Pach, J{\'a}nos and T{\'o}th, G{\'e}za},
  journal={Discrete \& Computational Geometry},
  volume={41},
  number={3},
  pages={376--384},
  year={2009},
  publisher={Springer-Verlag Berlin, Heidelberg},
doi={10.1007/s00454-009-9141-y}
}

@article{fuladi2022short,
  title={Short topological decompositions of non-orientable surfaces},
  author={Fuladi, Niloufar and Hubard, Alfredo and De Mesmay, Arnaud},
  journal={Discrete \& Computational Geometry},
  volume={72},
  number={2},
  pages={783--830},
  year={2024},
  publisher={Springer}
}

@Article{JGAA-580,
   author = {{Marcus} {Schaefer} and {Daniel} {Štefankovič}},
   title = {The Degenerate Crossing Number and Higher-Genus Embeddings},
   journal = {Journal of Graph Algorithms and Applications},
   year = {2022},
   volume = {26},
   number = {1},
   pages = {35--58},
   doi = {10.7155/jgaa.00580}
}

@article{bafna1996genome,
  title={Genome rearrangements and sorting by reversals},
  author={Bafna, Vineet and Pevzner, Pavel A},
  journal={SIAM Journal on computing},
  volume={25},
  number={2},
  pages={272--289},
  year={1996},
  publisher={SIAM},
  doi={10.1109/SFCS.1993.366872}
}

@book{compeau2015bioinformatics,
  title={Bioinformatics algorithms: an active learning approach},
  author={Compeau, Phillip and Pevzner, Pavel},
  year={2015},
  publisher={Active Learning Publishers La Jolla, California}
}

@article{duncan2011planar,
  title={Planar drawings of higher-genus graphs.},
  author={Duncan, Christian A and Goodrich, Michael T and Kobourov, Stephen G},
  journal={Journal of Graph Algorithms and Applications},
  volume={15},
  number={1},
  pages={7--32},
  year={2011},
  publisher={Brown University, Providence, RI; University of Texas, Dallas},
doi={10.1007/978-3-642-11805-0_7}
}

@book{fertin2009combinatorics,
  title={Combinatorics of genome rearrangements},
  author={Fertin, Guillaume and Labarre, Anthony and Rusu, Irena and Vialette, St{\'e}phane and Tannier, Eric},
  year={2009},
  publisher={MIT press}
}

@book{mohar2001graphs,
  title={Graphs on surfaces},
  author={Mohar, Bojan and Thomassen, Carsten},
  volume={10},
  year={2001},
  publisher={JHU press}
}

@inproceedings{lazarus2001computing,
  title={Computing a canonical polygonal schema of an orientable triangulated surface},
  author={Lazarus, Francis and Pocchiola, Michel and Vegter, Gert and Verroust, Anne},
  booktitle={Proceedings of the seventeenth annual symposium on Computational geometry},
  pages={80--89},
  year={2001},
doi={10.1145/378583.378630}
}

@article{hannenhalli1999transforming,
  title={Transforming cabbage into turnip: polynomial algorithm for sorting signed permutations by reversals},
  author={Hannenhalli, Sridhar and Pevzner, Pavel A},
  journal={Journal of the ACM (JACM)},
  volume={46},
  number={1},
  pages={1--27},
  year={1999},
  publisher={ACM New York, NY, USA},
doi={10.1145/300515.300516}
}

\end{document}